\newcolumntype{.}{D{.}{.}{-1}}
\pgfplotsset{compat=1.17}
\definecolor{darkblue}{rgb}{0,0,0.5}
\newtheorem{Theorem}{Theorem}
\newtheorem{Definition}{Definition}
\newtheorem{Lemma}{Lemma}
\newtheorem{Assumption}{Assumption}
\newtheorem{Example}{Example}
\newtheorem{Remark}{Remark}
\tikzstyle{rednode} = [shape=rectangle, fill=red!50, line width=3]
\tikzstyle{bluenode} = [shape=rectangle, fill=blue!50, line width=3]
\tikzstyle{yellownode} = [shape=rectangle, fill=yellow!50, line width=3]
\tikzstyle{dotnode} = [dashed, pattern={Lines[angle=90,distance=3pt]}, pattern color=gray!150]
\tikzstyle{backslashnode} = [dashed, pattern={Lines[angle=45,distance=3pt]}, pattern color=gray!150]
\tikzstyle{slashnode} = [dashed, pattern={Lines[angle=-45,distance=3pt]}, pattern color=gray!150]
\renewcommand\AB@affilsepx{, \protect\Affilfont}
\colorlet{shadecolor}{gray!15}
\newcites{SM}{References in the response letters:}
\begin{document}

\author{Takuya Ishihara\footnote{Tohoku University, Graduate School of Economics and Management. Email: takuya.ishihara.b7@tohoku.ac.jp} \and Toru Kitagawa\footnote{Brown University, Department of Economics. Email: toru\_kitagawa@brown.edu}}

\title{\vspace{-2cm}Evidence Aggregation for Treatment Choice{\Large \thanks{%
We would like to thank Tim Armstrong and participants of the 2018 Y-RISE meeting in the Cayman islands for inspiring discussion, which motivated us to initiate this project. We are grateful to Chuck Manski, Jonathan Roth, and J\"{o}rg Stoye for beneficial comments. We have also benefited from comments made by participants of the Cemmap econometrics lunch meeting, Kansai econometrics study group, and the annual meetings of the 2020 SEA, 2021 IAAE, and 2021 AMES. 
Financial support from
the ESRC through the ESRC Centre for Microdata Methods and Practice (CeMMAP)
(grant number RES-589-28-0001) and the ERC (grant number 715940) is gratefully acknowledged. Takuya Ishihara gratefully acknowledges financial support from JSPS through the Overseas Challenge Program for Young Researchers (number 201980057).}}}

\date{\today}
\maketitle

\vspace{-1cm} 

%REStud version
\begin{abstract}
Consider a planner who has limited knowledge of the policy's causal impact on a certain local population of interest due to a lack of data, but does have access to the publicized intervention studies performed for similar policies on different populations. How should the planner make use of and aggregate this existing evidence to make her policy decision? Following Manski (2020; Towards Credible Patient-Centered Meta-Analysis, \textit{Epidemiology}), we formulate the planner's problem as a statistical decision problem with a social welfare objective, and solve for an optimal aggregation rule under the minimax-regret criterion. We investigate the analytical properties, computational feasibility, and welfare regret performance of this rule. We apply the minimax regret decision rule to decide whether to enact an active labor market policy based on 14 randomized control trial studies.
\end{abstract}

% Longer version
%\begin{abstract}
%Consider a planner who has to decide whether or not to introduce a new policy to a certain local population. The planner has only limited knowledge of the policy's causal impact on this population due to a lack of data but does have access to the publicized results of intervention studies performed for similar policies on different populations. How should the planner make use of and aggregate this existing evidence to make her policy decision? Building upon the paradigm of `patient-centered meta-analysis' proposed by Manski (2020; Towards Credible Patient-Centered Meta-Analysis, \textit{Epidemiology}), we formulate the planner's problem as a statistical decision problem with a social welfare objective pertaining to the local population, and solve for an optimal aggregation rule under the minimax-regret criterion. We investigate the analytical properties, computational feasibility, and welfare regret performance of this rule. We also compare the minimax regret decision rule with plug-in decision rules based upon a hierarchical Bayes meta-regression or stylized mean-squared-error optimal prediction. We apply the minimax regret decision rule to two settings: whether to enact an active labor market policy given evidence from 14 randomized control trial studies; and whether to approve a drug (Remdesivir) for COVID-19 treatment using a meta-database of clinical trials.
%\end{abstract}

\textbf{Keywords}: Meta-analysis, program evaluation, statistical decision theory, minimax regret. 

\pagebreak

%%%%%%%%%%%%%%%%%%%%%%%%%%%%%%%%%%%%%%%%%%%

%\input{meta_analysis_Intro}
\section{Introduction}\label{sec:intro}
An increasing number of policy-making authorities are  interested in making their policy decisions evidence-based. In evidence-based decision-making, it is crucial for a planner to acquire credible evidence of a policy's causal impact on the affected population. Obtaining credible evidence for better policy decision-making is, however, challenging in many contexts. For instance, although randomized control trials (RCTs) are considered to be ideal for obtaining evidence of the causal impact of a policy, conducting an RCT can be costly in terms of budget, time, or administrative resources. Moreover, ethical or legal constraints can prevent the use of an RCT in certain institutional environments. In contrast, observational data can be both more accessible and easier to collect, but the credibility of any resulting causal estimates is limited if the validity of these estimates relies on restrictive identifying assumptions. In  scenarios where the planner faces difficulties in collecting direct evidence, a practical alternative is to analyze the publicized results of intervention studies performed for similar policies on different populations. With this approach in mind, how should the planner make use of and aggregate existing evidence to reach her policy decision?   

Statistical methodologies to aggregate evidence from multiple studies have been considered in the literature of meta-analysis and research synthesis. See, for instance, \citet{HO85} and a recent handbook volume, \citet{CooperHandbook2019}.  Since the seminal works of \citet{Rubin81} and \citet{DL86}, a common approach to aggregation of evidence is the hierarchical Bayesian approach, in which the typical objective of analysis is to infer hyper-parameters indexing the \textit{population of studies}. This framework of meta-analysis is useful for ``summarizing what has been learned and quantifying how results differ across the studies beyond the sampling error'' (\citet{DL15}). Its use is, however, limited when it comes to the planner's policy choice because the output of meta-analysis mainly concerns the population of studies rather than the particular population that is of interest to the planner. This point is made in \citet{manski2020towards}:

\begin{quote}
\textit{``Clinicians need to assess risks and choose treatments for populations
of patients, not population of studies. To express this
distinction succinctly, I will say that clinicians should want meta-analysis
to be patient-centered rather than study-centered.''}
\end{quote}
  
We pursue this paradigm of `patient-centered meta-analysis' to develop a method to aggregate existing studies for the purpose of making an optimal treatment decision on the local population that is of interest to the planner (hereafter, the target population). Building on the framework of statistical treatment choice proposed by \citet{Manski2000, manski2004statistical}, we formulate the planner's problem as a statistical decision problem \`{a} la \citet{Wald50}. The basic formulation of the decision problem analyzed in this paper is as follows. Let $\tau_0$ be the average welfare effect of introducing a new policy to the target population. There is no data from which the planner can directly infer $\tau_0$, but she does have access to the results of existing intervention or observational studies that are indexed by $k=1,  2, \dots, K$, $K \geq 1$. Each study $k$ reports a point estimate $\hat{\tau}_k$  for the average welfare effect $\tau_k$ in the study population, and an associated estimate $\hat{\sigma}_k$ of the standard error. We allow the study population to be different from the target population, so that the average welfare effects can differ, i.e., $\tau_k \neq \tau_{k'}$ for $k \neq k'$, $0 \leq k, k' \leq K$. 
The risk-neutral planner's decision problem, which we solve in this paper, is  whether or not to adopt the new policy for the target population upon observing a meta-sample, $(\hat{\tau}_k, \hat{\sigma}_k)$, $k=1, \dots, K$. That is, the statistical treatment choice rule we consider in this paper is a function $\hat{\delta}$ that maps the meta-sample to the binary choice of whether to adopt the policy or not. \Copy{not_MSE}{Such planner's decision problem contrasts with the estimation problem of causal effect parameters with statistical risk functions such as the mean squared errors.}

Following \citet{manski2004statistical, Manski2007}, \citet{stoye2009minimax, stoye2012minimax}, and \citet{tetenov2012statistical}, we apply the minimax regret criterion of \citet{Savage51} to obtain a minimax-regret treatment choice rule for the planner. \Copy{why_minimax_regret}{There are a couple of reasons that motivate the use of minimax regret criterion in the current problem. First, the minimax regret criterion does not require an explicit specification for a prior for the causal parameters so that it can be implemented in the contexts where the researchers cannot form a credible prior distribution for them. Second, in constrast to the maximin criterion, the minimax regret possesses balanced consideration of the states and avoid overly pessimistic worst-case scenarios \citep{Savage51, manski2004statistical}.} We assume that the planner's objective function (social welfare function) is linear in $\tau_0$ and consider the class of \textit{non-randomized} statistical treatment choice rules that select the treatment based on the sign of linear aggregation of $(\hat{\tau}_k : k = 1, \dots, K)$:
\begin{equation*}
\left\{ \hat{\delta}_{\bm{w}} = 1\left\{ \sum_{k=1}^K w_k \hat{\tau}_k \geq 0 \right\}:\sum_{k=1}^K w_k=1\right\},
\end{equation*}
where $\bm{w}=(w_1, \dots, w_K)'$ is a vector of weights assigned to each estimate in the pool of studies, which does not depend on the data. 
\Copy{advantages_non-randomized_rules}{Restricting the feasible rules to non-randomized (non-fractional) linear ones can be attractive in the following aspects. First, in the real-world practice of treatment choice or drug approval decisions, non-randomized allocation of treatments are easier than randomized ones for policy authorities to administer. Second, non-fractional allocations are guaranteed to attain ex post parity within the target population since either everyone or no one in the target population is treated, while fractional rules do not attain the ex-post parity. Third, under a standard specification of the parameter space of $\tau_k$'s, our linear aggregation minimax regret rule coincides with the unconstrained globally optimal minimax regret rule shown in \cite{stoye2012minimax, yata2021optimal} and \cite{montiel2023decision} if the width of the identified set for $\tau_0$ is small relative to the standard errors of nearest $\hat{\tau}_k$. Otherwise, the unconstrained optimal rule becomes randomized and differs from our linear aggregation rule, while we show that the worst-case regret performance of ours is not substantially worse than the global optimal one.}
%Otherwise, restricting to the class of linear aggregation rules sacrifices the value of minimax regret since unconstrained minimax regret rules are known to be randomized depending on the size of identified set for $\tau_0$ and the variances of $\hat{\tau}_k$'s as shown by \cite{stoye2012minimax, yata2021optimal}, and \cite{montiel2023decision}.  

Assuming a Gaussian sampling distribution for $(\hat{\tau}_1, \dots, \hat{\tau}_K)$ with known variances and imposing certain symmetry and invariance conditions on the parameter space for $(\tau_0, \tau_1, \dots, \tau_K)$, we derive the aggregation weights   $\bm{w}_{\text{minimax}}$ leading to a minimax-regret treatment choice rule among the non-randomized rules. 
Analytical characterization and computation of the exact minimax regret rule often become challenging in the context of statistical treatment choice. 
Our approach to the planner's minimax regret aggregation rule, in contrast, overcomes these challenges by showing that some mild restrictions on the parameter space and the class of decision rules deliver analytically and computationally tractable minimax regret rules.

We assert that the perspective and tools of statistical decision theory are particularly appealing in the meta-analysis setting for the following reasons. First, if each study in the pool reports a consistent estimate using a sample of moderate to large size (e.g., the difference-in-means estimator for the average treatment effect) then, by its asymptotic normality, it is plausible to assume that $\hat{\tau}_k$ follows a Gaussian distribution centered at $\tau_k$. Hence, the standard and well-studied framework of Gaussian experiments fits well to the current meta-analysis setting. Second, it is common for the meta-sample to consist of only a small number of studies. In such instances, asymptotic analysis with $K \to \infty$ can be misleading, and deriving finite-$K$ optimal procedures, which statistical decision theory is particularly suitable for, is desirable.

As an alternative to the minimax regret treatment choice rule, one could consider using a \textit{plug-in} rule that chooses the treatment according to the sign of an estimate of $\tau_0$. The plug-in rule that uses a minimax mean squared error (MSE) optimal estimate of $\tau_0$ is an example. Minimax-MSE estimation for finite-dimensional Gaussian mean models is well-studied and the minimax-MSE weights $\bm{w}_{\text{MSE}}$ are simple to compute, although the resulting plug-in decision rule  does not generally possess decision theoretic optimality in terms of the planner's objective function. To quantify the welfare cost of $\hat{\delta}_{\bm{w}_{\text{MSE}}}$, we compare the worst-case regrets of $\hat{\delta}_{\bm{w}_{\text{minimax}}}$ and $\hat{\delta}_{\bm{w}_{\text{MSE}}}$, and show that the worst-case regret of $\hat{\delta}_{\bm{w}_{\text{MSE}}}$ is worse than the minimax regret only up to a constant factor of 5.88, independent of the number of studies $K$ and the parameter space.

Our framework can accommodate a vector of observable characteristics $x_k$, $k=0,1,\dots,K$, where $x_k$ includes the characteristics of the treatment and demographics of the population featured in study $k$. Under a linear functional form specification, $\tau_k = \beta_0 + x_k'\beta$, $\beta \in \mathcal{B}$, common to standard meta-regression analysis  (see, e.g., \citet{SJ89}), we discuss those restrictions on $\mathcal{B}$ under which we can apply our minimax regret decision rule. For minimax regret to be bounded, an important constraint is boundedness of $\mathcal{B}$, and the bounds of $\mathcal{B}$ have to be explicitly specified to obtain the minimax regret rule. In reality, the planner may not be able to come up with reasonable bounds for $\mathcal{B}$. To offer a practical solution to this difficulty, we consider a data-driven way to specify the parameter space based on confidence sets for $(\tau_1, \dots, \tau_K)$.       

We illustrate the use of our minimax regret treatment rule by analyzing whether an active labor market program should be adopted using the meta-database appearing in \citet{card2017works}. We consider a pool of 14 RCT studies of job training programs, covering 8 different countries (Argentina, Brazil, Colombia, Dominica, Jordan, Nicaragua, Sri Lanka, Turkey, and the United States). Based on the average treatment effect and standard error estimates in each of these studies, and the demographic characteristics of the studied populations, we calculate the minimax regret adoption decisions for several countries (Japan, the United Kingdom, and Peru) for which the corresponding experimental estimates are not available in the meta-database.\footnote{In \ref{sec:covid-19}, we apply our method to inform the drug approval decision for a COVID-19 medication called Remdesivir using the meta-database of randomized clinical trials for COVID-19 treatments provided by \citet{juul2020interventions}. We compute the minimax regret treatment choice for Remdesivir for different demographic groups.}

The remainder of the paper is organized as follows. The next subsection reviews the related literature. Section 2 formulates the minimax regret decision problem and shows the main analytical result of the paper. In Section 3, we compare the minimax regret with the maximum regret of the decision rule based on the minimax-MSE aggregation rule. Section 3 also discusses a data-driven construction of the parameter space. Section 4 performs numerical analysis to compare the minimax-regret aggregation rules with the minimax-MSE and meta-OLS rules. Section \ref{sec:empirical} illustrates the use of our methods through an application to an active labor market policy.

\subsection{Related Literature}\label{subsec:literature}
This paper contributes to the growing literature on statistical treatment choice and individualized treatment assignment rules initiated by \citet{Manski2000, manski2004statistical} and \citet{Dehejia2005}. Contributions to the current literature include, \citet{HiranoPorter2009}, \citet{stoye2009minimax, stoye2012minimax}, \citet{Chamberlain2011}, \citet{BhattacharyaDupas2012}, \citet{tetenov2012statistical}, \citet{Kasy2014b,Kasy2018}, \citet{kitagawa2018should, KT19}, \citet{KW20}, \citet{Russell20}, \citet{KST21}, \citet{MT17}, \citet{AW20}, \citet{Sakaguchi21}, and \citet{Viviano21}, among others.
The problem of individualized treatment assignment rules has also been an area of active research in the fields of medical statistics and machine learning; see, for instance, \citet{Zadrozny03}, \citet{BeygelzimerLangford09} \citet{QianMurphy2011}, \citet{Zhao2012JASA}, \citet{SJ15}, \citet{Kallus_2020}, to list but a few papers. The standard setting in the existing literature considers an optimal treatment assignment policy for the population from which a sample was drawn, rather than combining the pool of estimates from multiple studies performed on different populations.  

There is a growing literature on how to inform policy using multiple pieces of evidence or extrapolation from one or multiple reference populations. \citet{Dehejiaetal21} considers the use of (quasi-)experimental evidence to study the decision of whether to experiment or to extrapolate, and, if applicable, where to conduct a new experiment. \citet{Manski18} analyzes decision-making for personalized risk assessment under the ecological inference setting where (partial) identification of a long regression is obtained by combining information on a short regression and the joint distribution among the regressors. The meta-analysis setting considered in this paper differs from the ecological inference setting in terms of the object to identify and the type of information provided by the available studies. Focusing on conditional cash transfer programs, \citet{Gechteretal19} runs multiple program evaluation methods on data obtained from Mexico to inform treatment assignment policies for Morocco, and empirically compare the welfare performances of these policies. \citet{Hotzetal05} and \citet{Dehejiaetal21} analyze how to predict the effects of future programs from past experimental evaluations by adjusting for differences in the distributions of observable characteristics. \citet{AO19} propose a method to conduct sensitivity analysis and to approximate external validity bias when the trial and target populations differ in the distribution of unobservables. \citet{Gechter16} considers bounding causal effects in a target population by restricting the dependence between the treated and control outcomes.

Meta-analysis for research synthesis has been actively studied in statistics and the resulting literature is vast; see, e.g., \citet{Borenstein09textbook} for a textbook and \citet{CooperHandbook2019} for a handbook volume. 
In economics, existing applications of meta-analysis and meta-regression include \citet{CK95}, \citet{Dehejia03}, \citet{Bandieraetal17}, \citet{card2017works}, \citet{Meager19, Meager20}, \citet{Imai20}, and \citet{Vivalt20}. See \citet{Stanley01} for a review. 
The common framework of meta-analysis introduces the population of studies and draws inference for the parameters thereof. As we argue in the Introduction via the quote from \citet{manski2020towards}, the usefulness of the conventional framework of meta-analsyis is not obvious for informing the planner's policy decision. 
This paper follows and pushes forward the perspective of patient-centered meta-analysis. 
The methodological proposals in \citet{manski2020towards} concern predicting treatment effects for the target population by intersecting the population identified sets for $\tau_0$ formed by extrapolation from each study, rather than explicitly taking into account sampling uncertainty due to finite sample size when considering the treatment choice decision.   
 
In terms of the framework and analytical and computational challenges for obtaining minimax regret rules, this paper is most closely related to \citet{stoye2012minimax}. 
In one of his baseline settings, \citet{stoye2012minimax} considers Gaussian experiments for conditional average treatment effects with a scalar covariate $x \in \mathcal{X}$, and analyzes the properties of the minimax regret treatment rule under a restriction that the conditional average treatment effects depend on $x$ with bounded variation. 
Similar to \citet{stoye2012minimax}, our framework allows (study-specific) covariates to constrain the parameter space for $(\tau_0, \tau_1, \dots, \tau_K)$, but there are two aspects in which our framework differs.
First, the treatment assignment rules considered in \citet{stoye2012minimax} are functions $\delta: \mathcal{X} \to \{0 ,1\}$, while we concern ourselves with the treatment choice at a particular covariate value $x_0$ in $\mathcal{X}$ that corresponds to the covariate value of the target population. 
This reduction of the treatment choice rule from a function of $x$ to a point significantly simplifies the analysis and computation of the minimax-regret rule. 
Second, the conditions we impose on the parameter space for feasible computation of the minimax regret rule is general and includes the bounded variation restriction considered in \citet{stoye2012minimax} as a special case.

In another baseline setting of \citet{stoye2012minimax}, he derives a minimax regret treatment choice rule under partially identified welfare, where the identified set has the known width but unknown location, and a Gaussian signal is available for it. Our setting is more complex than his setting due to multiple Gaussian signals with both the location and width of the identified set unknown. Contemporaneously or after the initial version of the current paper was circulated, there have been some important advances in the literature. In a similar setting to this paper, \citet{yata2021optimal} obtained an analytical representation for an exact unconstrained minimax regret rule, which implements a fractional assignment when the identified set for ATE is wide relative to the variances of the bound estimates. \citet{montiel2023decision} shows that minimax regret rules are not unique and considers refining the set of minimax regret rules. An alternative approach to handle uncertainty in the bound estimates and ambiguity within the identified set is to introduce multiple priors as in \citet{giacomini2021} and performs a gamma minimax decision rule. See \citet{giacomini_etal2021} for gamma minimax decision rules for set-identified models. \citet{Christensen2022} studies this line of approach for treatment choice and shows its optimality properties in limit experiments. 

Viewing $\tau_0$ as the value of the regression equation at $x_0$ in a Gaussian regression model and considering standard estimation risk such as the mean squared errors for $\tau_0$, the problem is reduced to an interpolation or extrapolation exercise based on the Gaussian signals. As such, the minimax estimation and inference problem for $\tau_0$ is similar to the extrapolation issue in the regression discontinuity setting analyzed in \citet{KR18}. Recent contributions regarding estimation and inference for Gaussian sequence models are made by \citet{Johnstone17} and \citet{AK18, AK20}. These papers consider statistical losses for estimation and inference, but do not consider the welfare criterion for statistical treatment choice. 

%%%%%%%%%%%%%%%%%%%%%%%%%%%%%%%%%%%%%%%%%%%
\section{Minimax regret treatment rule}\label{sec:main}

%%%%%%%%%%%%%%%%%%%%%%%%%%%%%%%%%%%%%%%%%%%
\subsection{Setting}\label{subsec:setting}

Suppose we have access to the publicized results of $K$ studies indexed by $k=1, \dots, K$. Each of the studies estimates the causal effect of a particular binary policy or treatment. We allow the details of the policy (implementation protocol, dosage, program contents, etc.) to differ across the studies. For $k = 1, \cdots, K$, let $\hat{\tau}_k$ denote the estimate of the policy effect reported in study $k$ and $\sigma_k > 0$ denote the standard error of $\hat{\tau}_k$. For simplicity, we assume $\sigma_k$ is known, although, in practice, we can only construct a consistent estimator for $\sigma_k$. We solve for the finite sample minimax regret rule with known $(\sigma_k : k=1,\dots,K)$, recommending that in practice the rule is implemented with the true standard errors replaced by their consistent estimates.\footnote{Solving the decision problem with Gaussian signals with known variances and obtaining a feasible decision rule by plugging in consistent estimators for the variances are similar to the construction of an asymptotically optimal decision rule within the framework of Gaussian limit experiments. See \citet{HiranoPorter2020} for a recent review. \ref{subsec:unknown} considers the case of unknown $\sigma_k$ and studies how plugging in an estimator $\hat{\sigma}_k$ affects the welfare performance of the treatment choice rules.}

We assume
\begin{equation}
\hat{\tau}_k \ \sim \ \mathcal{N}(\tau_k,\sigma_k^2), \ \ \ k = 1, \dots, K, \label{tau_hat}
\end{equation}
where $\tau_k$ is the true policy effect of the population featured in study $k$. We allow $\tau_k$ to vary across the studies. The assumption that $\hat{\tau}_k$ follows a Gaussian sampling distribution is reasonable if the reported estimator $\hat{\tau}_k$ is consistent and asymptotically normal and each study has a moderate to large sample.

Throughout this paper, we consider a planner who, upon observing data $\mathbf{D} \equiv \{\hat{\tau}_k\}_{k=1}^K$, must determine whether or not to adopt the policy in the target population given that its policy effect $\tau_0$ is unknown. 

In our empirical application, we consider a treatment choice of whether an active labor market policy (job training, subsidized employment, and job search assistance programs) should be adopted or not for a target population based on $K=14$ RCT estimates of the average treatment effects $(\hat{\tau}_k: k=1, \dots, 14)$ obtained from 8 different countries (Argentina, Brazil, Colombia, Dominica, Jordan, Nicalagua, Turkey, and the United States). Each RCT estimate comes with a standard error estimate $\hat{\sigma}_k$. For the target population ($k=0$), we consider Japan, the United Kingdom, and Peru, for which an experimental estimate for the program's average effects are not available.

Following \cite{manski2004statistical, Manski2007}, \citet{stoye2009minimax, stoye2012minimax}, and \citet{tetenov2012statistical}, we focus on minimax regret criterion to solve this decision problem. To this end, we assume that the true parameters $\bm{\tau} \equiv (\tau_0, \tau_1, \dots , \tau_K)'$ are \textit{ex ante} known to belong to the parameter space $\mathcal{T}$. 
%As we show below, the parameter space can be dependent on observed covariates, $\{x_k\}_{k=0}^K$.

We impose the following restrictions on the parameter space $\mathcal{T}$:

%% Assumption 1 %%
\begin{Assumption} \label{parameter_space}
The parameter space $\mathcal{T}$ satisfies
\begin{enumerate}
\item Symmetry: $\bm{\tau} \in \mathcal{T} \ \Rightarrow \ -\bm{\tau} \in \mathcal{T},$ and
\item Invariance to common constant addition: $\bm{\tau} \in \mathcal{T} \ \Rightarrow \ \bm{\tau} + c \cdot (1, \dots , 1)' \in \mathcal{T}$ for any $c \in \mathbb{R}$.
\end{enumerate}
\end{Assumption}

The symmetry assumption rules out the imposition of a sign restriction on the causal effect parameters, i.e., $\tau_k \geq 0$ for some $k$. 
The condition of invariance to common constant addition (hereafter, shortened to invariance) implies that $\{(\tau_k - \tau_0) : \bm{\tau} \in \mathcal{T}, \tau_0 = t\}$ does not depend on $t$. 
We use this result to simplify derivation of a minimax regret treatment rule. 
It is worth noting that the invariance condition rules out the case in which the parameter space for some $\tau_k$, $k \in \{0,1,2, \dots, K\}$, is bounded. 
For instance, if the outcome is binary, the treatment effect on the outcome is bounded by $[-1,1]$ for all $\tau_k$, $k=0,1, \dots, K$.\footnote{In a similar setting to the one in this paper, \cite{ishihara2023bandwidth} studies the treatment choice problem with $\tau_k$, $k \in \{0,1,2, \dots, K\}$ being bounded due to binary outcomes.}
However, if the standard errors $(\sigma_k^2:k=1,2, \dots, K)$ and the variations among $(\tau_0, \tau_1, \dots, \tau_K)$  imposed on $\mathcal{T}$  (e.g., Lipschitz constants $C_{kl}$, $ 0 \leq k,l \leq K$, in Example \ref{ex:bounded variations} below)  are sufficiently small relative to the size of the supports, bounded support of $(\tau_0, \tau_1, \dots, \tau_K)$ is less of an issue because extreme values of $\bm{\tau}$ beyond its logical support are unlikely to correspond to a worst-case in terms of the regret. 

The following two examples satisfy the parameter space constraints of Assumption \ref{parameter_space}.

%% Example 1 %%
\begin{Example}[The space of $\bm{\tau}$ spanned by the meta-regressions] \label{Ex:meta_regression}

Suppose that for each study in the pool, $k=1, \dots, K$, we can construct a vector of study-specific observable characteristics $x_k \in \mathbb{R}^{d_x}$. For example, $x_k$ can contain the average characteristics of the individuals in the sample studied in the $k$-th study. It can also include the socioeconomic or demographic characteristics of the country that the sample was drawn from and the characteristics of the treatment studied. The target population has known covariate value $x_0 \in \mathbb{R}^{d_x}$, which shares the same interpretation and dimension as $x_k$, $k=1, \dots, K$. 

In meta-regression analysis, $\tau_k$ is often specified as
$$
\tau_k = \beta_0 + x_k' \beta.
$$
Accordingly, we assume that the parameter space can be written as
\begin{equation}
\mathcal{T}_{\mathrm{meta}} \equiv \left\{ \bm{\tau} = (\tau_0, \tau_1, \dots, \tau_K) : \ \text{$\tau_k = \beta_0 + x_k' \beta$, $\beta_0 \in \mathbb{R}$, and $\beta \in \mathcal{B}$ for $ 0 \leq k \leq K$.} \right\}, \label{T_meta} 
\end{equation}
where $\mathcal{B}$ is a compact subset of $\mathbb{R}^{d_x}$. As seen in Theorem \ref{thm:finite}, compactness of $\mathcal{B}$ implies that minimax regret is finite. If $\mathcal{B}$ satisfies $\beta \in \mathcal{B} \Rightarrow -\beta \in \mathcal{B}$, then $\mathcal{T}_{\mathrm{meta}}$ satisfies Assumption \ref{parameter_space}. We can allow study-specific intercepts without violating Assumption \ref{parameter_space} by viewing them as study-specific fixed dummy variables added to the covariate vector.
\end{Example}

%% Example 2 %%
\begin{Example}[The class of bounded variations] \label{ex:bounded variations}
Consider the following parameter space:
\begin{equation}
\mathcal{T}_C \equiv \left\{ \bm{\tau} : |\tau_k - \tau_l| \leq C_{kl}  \ \ \text{for $k, l = 0, 1, \cdots , K$.} \right\}, \label{T_C} 
\end{equation}
where $\{C_{kl} : k, l = 0, 1, \cdots , K \}$ is a set of known positive constants. Clearly, for any $\{C_{kl} : k, l = 0, 1, \cdots , K \}$, the parameter space $\mathcal{T}_C$ satisfies Assumption \ref{parameter_space}. Assumption 1 in \citet{stoye2012minimax} corresponds to the case where  $C_{kl}$
is a common constant for any $0 \leq k,l \leq K$.

\Copy{scale_invariance}{With study-specific covariates as introduced in Example \ref{Ex:meta_regression}, setting $C_{kl} = C \|x_k - x_l\|$, $C>0$, yields the class of Lipschitz vectors of $\bm{\tau}$. We can make $\mathcal{T}_C$ invariant in the scale of study-specific (cardinal) covariates by defining the metric with the standardized covariates.}

\Copy{study_heterogeneity}{In some cases, a single study may provide estimates of the average treatment effect for multiple subgroups. In such situations, one may wish to account for study-specific effects. By defining $C_{kl}$ as follows, we can consider a parameter space that incorporates study-specific effects:
\[
C_{kl} \ = \  C \|x_k - x_l\| + \gamma \cdot 1\{ \text{$\hat{\tau}_k$ and $\hat{\tau}_l$ are from different studies} \},
\]
where $\gamma > 0$ can be interpreted as an upper bound on the magnitude of the difference between study-specific effects.}
\end{Example}

In our empirical application on adoption of active job market policies, we use indicators for the gender (male, female, or both), categories for age group targeted by each experiment, as well as socioeconomic characteristics of the country (OECD or not, GDP growth rate, unemployment rate) in which the experiment was conducted. We set the parameter space as $\mathcal{T}_C$ defined in (\ref{T_C}) with $C_{kl} = C\|x_k - x_l \|$, and tune the Lipschitz constant $C>0$ by cross-validation or empirical Bayes (marginal likelihood maximization).

When Assumption \ref{parameter_space} does not hold in a given application, we can still formulate the optimization problem to derive the minimax regret treatment rule, although solving for this is accompanied by a substantial increase in computational complexity. In Remark \ref{rem:without_A1} below, we discuss a derivation of the minimax regret treatment rule that does not rely upon Assumption \ref{parameter_space}.

%%%%%%%%%%%%%%%%%%%%%%%%%%%%%%%%%%%%%%%%%%%
\subsection{Welfare and regret}\label{subsec:regret}

Given a non-randomized treatment choice action $\delta \in \{0,1\}$, define the welfare attained by $\delta$ as
\begin{equation}
W(\delta) \ \equiv \  (\tau_0-c_0) \cdot \delta + \mu_0, \label{welfare}
\end{equation}
where $c_0$ is the per-person cost of the policy and $\mu_0$ is the average outcome that would be realized in the absence of the policy. An optimal treatment choice action given knowledge of $\tau_0$ and $c_0$ is 
\begin{equation}
\delta^* \ \equiv \ \mathbf{1}\{ \tau_0 \geq c_0 \}. \label{optimal treatment rule}
\end{equation}

Let $\hat{\delta}(\mathbf{D}) \in \{0,1\}$ be a non-randomized statistical treatment rule that maps the meta-sample $\mathbf{D}$ to the binary decision of treatment choice in the target population. The welfare regret of $\hat{\delta}(\mathbf{D})$ is defined as
\begin{equation}
R(\bm{\tau} , \hat{\delta})  \equiv  E_{\bm{\tau}}\left[ W(\delta^*)-W(\hat{\delta}(\mathbf{D})) \right] =  (\tau_0-c_0) \left\{ \delta^* - E_{\bm{\tau}}[\hat{\delta}(\mathbf{D})] \right\}, \label{regret}
\end{equation}
where $E_{\bm{\tau}}(\cdot)$ is the expectation with respect to the sampling distribution of $\mathbf{D}$ given the parameters $\bm{\tau}$. 
\Copy{cost_replace}{Hereafter, we normalize the cost of treatment to $c_0=0$, i.e., we interpret $\tau_k$, $k=0, \dots, K$, as the average treatment effect net of the per-person treatment cost in the target population by replacing $\tau_k$ and $\hat{\tau}_k$ with $\tau_k-c_0$ and $\hat{\tau}_k - c_0$, respectively.}

The minimax regret criterion selects a statistical treatment rule that minimizes maximum regret:
\begin{equation}
\hat{\delta}_{\text{minimax}} \ \in \ \text{arg} \min_{\hat{\delta} \in \mathcal{D}} \max_{\bm{\tau} \in \mathcal{T}} R(\bm{\tau}, \hat{\delta}), \nonumber
\end{equation}
where $\mathcal{D}$ is a class of statistical treatment rules. We refer to $\hat{\delta}_{\text{minimax}}$ as a minimax regret rule. In the next subsection, we derive a minimax regret rule under the class of statistical treatment rules spanned by linear aggregation rules.

%%%%%%%%%%%%%%%%%%%%%%%%%%%%%%%%%%%%%%%%%%%
\subsection{The minimax regret rule among linear aggregation rules}\label{subsec:minimax}

To gain analytical and computational tractability, we focus on the class of linear aggregation rules.

%% Definition 1 %%
\begin{Definition}[Linear aggregation rules]\label{def:linear_rule}
The class of linear aggregation rules consists of non-randomized treatment choice rules, each of which chooses a treatment according to the sign of a linear aggregation of $(\hat{\tau}_1, \dots, \hat{\tau}_K)$:
\begin{equation}
\mathcal{D}_{\mathrm{lin}} \equiv \left\{ \hat{\delta}_{\bm{w}}(\mathbf{D}) = \mathbf{1}\left\{ \sum_{k=1}^K w_k \hat{\tau}_k \geq 0 \right\} : \sum_{k=1}^K w_k = 1 \right\}, \label{linear_aggregation_rule}
\end{equation}
where $\bm{w} = (w_1, \cdots, w_K)'$ does not depend on the data $\mathbf{D}$.
\end{Definition}

%This class rules out nonlinear treatment rules that plug in an aggregation of $(\hat{\tau}_1, \dots, \hat{\tau}_K)$ in the manner of James-Stein shrinkage or empirical Bayes. 
The class of linear aggregation rules contains many reasonable treatment rules. For example, plug-in rules $\mathbf{1} \left\{ \hat{\tau}_0(\mathbf{D}) \geq 0 \right\}$ based on linear estimators $\hat{\tau}_0(\mathbf{D})$ for $\tau_0$ belong to $\mathcal{D}_{\text{lin}}$. 
When study characteristics are included in the available covariates as in Example \ref{Ex:meta_regression}, $\mathcal{D}_{\text{lin}}$ includes those rules that plug in fitted values based on parametric linear regression or nonparametric kernel regression.
%Furthermore, as shown in Remark \ref{rem:Bayes} above, hierarchical Bayes decision rules under the linear meta-regression specification with Gaussian priors yields the linear aggregation rule.

%\begin{Remark} \label{rem:Bayes}
A popular approach in the meta-analysis literature is to perform hierarchical Bayesian inference that yields a posterior distribution for each parameter in $\bm{\tau}$ and its hyperparameters. Once uncertainty for $\bm{\tau}$ is summarized by the posterior distribution, we can show that the Bayes optimal decision rule is determined by the posterior mean of $\tau_0$. For any prior $\pi$, the Bayes optimal decision rule $\hat{\delta}_{\pi}$ is defined as
$$
\hat{\delta}_{\pi} \ \in \ \mathrm{arg} \min_{\hat{\delta}} \left\{ \int R(\bm{\tau},\hat{\delta})\,\mathrm{d}\pi(\bm{\tau}) \right\}.
$$
We observe that
\begin{eqnarray}
\int R(\bm{\tau},\hat{\delta}) \,\mathrm{d}\pi(\bm{\tau}) %&=& \int \tau_0 \left\{ \mathbf{1}\{\tau_0 \geq 0\} - E_{\tau}[\hat{\delta}(\mathbf{D})] \right\} \,\mathrm{d}\pi(\bm{\tau}) \nonumber \\
%&=& E_{\mathbf{D}}\left[ E_{\pi}\left( \tau_0 \left\{ \mathbf{1}\{\tau_0 \geq 0\} - \hat{\delta}(\mathbf{D}) \right\} \Big| \mathbf{D} \right) \right] \nonumber \\
&=& E_{\mathbf{D}}\left[ E_{\pi}\left( \tau_0 \mathbf{1}\{\tau_0 \geq 0\}| \mathbf{D} \right) - E_{\pi}\left( \tau_0 | \mathbf{D} \right) \hat{\delta}(\mathbf{D})  \right], \nonumber
\end{eqnarray}
where $E_{\mathbf{D}}(\cdot)$ denotes the expectation with respect to the marginal distribution of $\mathbf{D}$ and $E_{\pi}(\cdot | \mathbf{D})$ denotes the posterior mean. This implies that
\begin{equation}
    \hat{\delta}_{\pi}(\mathbf{D}) \ = \ \mathbf{1}\left\{ E_{\pi}\left( \tau_0 | \mathbf{D} \right) \geq 0 \right\}. \label{HB_rule}
\end{equation}

In Example 1, typical hierarchical Bayes approaches assume $\tilde{\beta} \equiv (\beta_0,\beta')' \sim \mathcal{N} \left( \bm{0},\bm{\Sigma}_{\tilde{\beta}} \right)$. Then, the posterior mean of $\tau_0 \equiv \beta_0 + x_0' \beta$ can be written as
\[
E[\tau_0 | \mathbf{D}] \ = \ \tilde{x}_0' \left(\bm{\Sigma}_{\tilde{\beta}}^{-1} + \bm{X}' \bm{\Sigma}^{-1} \bm{X} \right)^{-1} \bm{X}' \bm{\Sigma}^{-1} \hat{\bm{\tau}},
\]
where $\tilde{x}_k \equiv (1,x_k')'$, $\bm{X} \equiv (\tilde{x}_1, \dots, \tilde{x}_{K})'$, $\bm{\Sigma} \equiv diag \left( \sigma_1^2, \dots, \sigma_K^2 \right)$, and $\hat{\bm{\tau}} \equiv (\hat{\tau}_1, \dots, \hat{\tau}_K)'$. Hence, in this case, the Bayes optimal decision rule is a linear aggregation rule in the sense of Definition \ref{def:linear_rule}. The weights of the Bayes optimal rule depend on the hyperparameters of the Gaussian prior for $\tilde{\beta}$ and the matrix of study characteristics$, \bm{X}$. If the mean of the prior distribution is not zero, then the posterior mean of $\tau_0$ involves a constant. In such a case, the Bayes optimal decision rule belongs to the class of extended linear aggregation rules considered in Remark \ref{rem:construle} below. More generally, we consider the following hierarchical Bayesian model with the prior distribution,
\begin{equation}
\bm{\tau} \ = \ (\tau_0, \bm{\tau}_{-0})' \ \sim \ \mathcal{N} \left( \bm{0}, \bm{\Sigma}_{\bm{\tau}} \right), \label{prior_dist}
\end{equation}
where $\bm{\tau}_{-0} \equiv (\tau_1, \dots, \tau_K)'$ and
\[
\bm{\Sigma}_{\bm{\tau}} \ \equiv \ \left( \begin{array}{cc}
    \bm{\Sigma}_{\bm{\tau},11} & \bm{\Sigma}_{\bm{\tau},12}  \\
    \bm{\Sigma}_{\bm{\tau},21} & \bm{\Sigma}_{\bm{\tau},22}
\end{array} \right).
\]
Then, the posterior mean of $\tau_0$ is written as
\[
E[\tau_0 | \hat{\bm{\tau}}] \ = \ \bm{\Sigma}_{\bm{\tau}, 12} \bm{\Sigma}_{\bm{\tau}, 22}^{-1} \left(\bm{\Sigma}_{\bm{\tau},22}^{-1} + \bm{\Sigma}^{-1} \right)^{-1} \bm{\Sigma}^{-1} \hat{\bm{\tau}}.
\]
Therefore, when the prior distribution of $\bm{\tau}$ is normal with mean zero, the Bayes optimal decision rule is a linear aggregation rule.

In a typical empirical Bayes approach, one estimates the hyperparameters of the prior distribution and subsequently computes the posterior mean based on the plug-in prior (see, e.g., \citet{Robbins1951, Robbins1956}, \cite{efron1973stein}, and \cite{carlin1997bayes}). Accordingly, if the prior covariance matrix $\bm{\Sigma}_{\bm{\tau}}$ in (\ref{prior_dist}) is parameterized by hyperparameters and estimated by $\hat{\bm{\Sigma}}_{\bm{\tau}}$, the empirical Bayes decision rule $\hat{\delta}_{\mathrm{EB}}$ can be written as
\[
\hat{\delta}_{\mathrm{EB}}(\mathbf{D}) \ \equiv \ \bm{1} \left\{ \hat{\bm{\Sigma}}_{\bm{\tau}, 12} \hat{\bm{\Sigma}}_{\bm{\tau}, 22}^{-1} \left(\hat{\bm{\Sigma}}_{\bm{\tau},22}^{-1} + \bm{\Sigma}^{-1} \right)^{-1} \bm{\Sigma}^{-1} \hat{\bm{\tau}} \geq 0\right\}.
\]
%where
%\[
%\hat{\bm{\Sigma}}_{\bm{\tau}} \ \equiv \ \left( \begin{array}{cc}
%    \hat{\bm{\Sigma}}_{\bm{\tau},11} & \hat{\bm{\Sigma}}_{\bm{\tau},12}  \\
%    \hat{\bm{\Sigma}}_{\bm{\tau},21} & \hat{\bm{\Sigma}}_{\bm{\tau},22}
%\end{array} \right).
%\]
In contrast to the Bayes optimal decision rule under the hierarchical Bayesian model, the empirical Bayes decision rule is not a linear aggregation rule, since it involves the estimation of hyperparameters.
%\end{Remark}

We consider the minimax regret rule among $\mathcal{D}_{\mathrm{lin}}$ whose corresponding weight vector solves
\[
\bm{w}_{\text{minimax}}\ \in \ \text{arg} \min_{\bm{w}} \max_{\bm{\tau} \in \mathcal{T}} R(\bm{\tau}, \hat{\delta}_{\bm{w}}).
\]
To develop a computation method for $\bm{w}_{\text{minimax}}$, note from (\ref{tau_hat}) that
$$
\sum_{k=1}^K w_k \hat{\tau}_k \ \sim \ \mathcal{N} \left( \sum_{k=1}^K w_k \tau_k, \sum_{k=1}^K w_k^2 \sigma_k^2 \right).
$$
Hence, from (\ref{regret}), the regret of $\hat{\delta}_{\bm{w}}$ can be written as
\begin{eqnarray}
R(\bm{\tau}, \hat{\delta}_{\bm{w}}) &=& \tau_0 \cdot \left[ \mathbf{1}\{\tau_0 \geq 0\} - \Phi\left( \frac{\sum_{k=1}^K w_k \tau_k}{\sqrt{\sum_{k=1}^K w_k^2 \sigma_k^2}}\right) \right] \nonumber\\ 
&=& | \tau_0 | \cdot \Phi \left( - sgn(\tau_0) \cdot \frac{\sum_{k=1}^K w_k \tau_k}{\sqrt{\sum_{k=1}^K w_k^2 \sigma_k^2}}\right), \label{linear regret} 
\end{eqnarray}
where the first equality follows from the normality of $\sum_{k=1}^K w_k \hat{\tau}_k$ and the second equality follows from $1-\Phi(a) = \Phi(-a)$. %We then obtain $\bm{w}_{\text{minimax}}$ by minimizing the maximum regret $\max_{\bm{\tau} \in \mathcal{T}} R(\bm{\tau}, \hat{\delta}_{\bm{w}})$. 

Noting the symmetry of $\mathcal{T}$ from Assumption 1, we obtain
\begin{eqnarray}
\max_{\bm{\tau} \in \mathcal{T}} R(\bm{\tau},\hat{\delta}_{\bm{w}}) %&=& \max_{\bm{\tau} \in \mathcal{T}} \left\{ | \tau_0 | \cdot \Phi \left( - sgn(\tau_0) \cdot \frac{\sum_{k=1}^K w_k \tau_k}{\sqrt{\sum_{k=1}^K w_k^2 \sigma_k^2}}\right) \right\} \nonumber \\
&=& \max_{t \geq 0} \max_{\bm{\tau} \in \mathcal{T}, \tau_0 = t} \left\{ t \cdot \Phi \left( -  \frac{\sum_{k=1}^K w_k \tau_k}{\sqrt{\sum_{k=1}^K w_k^2 \sigma_k^2}}\right) \right\} \nonumber \\
&=& \max_{t \geq 0} \max_{\bm{\tau} \in \mathcal{T}, \tau_0 = t} \left\{ t \cdot \Phi \left( -  \frac{t+\sum_{k=1}^K w_k (\tau_k-t)}{\sqrt{\sum_{k=1}^K w_k^2 \sigma_k^2}}\right) \right\}, \nonumber 
\end{eqnarray}
where the first equality follows by symmetry of $\mathcal{T}$ and the second equality follows from $\sum_{k=1}^K w_k = 1$. Letting $s(\bm{w}) \equiv \sqrt{\sum_{k=1}^K w_k^2 \sigma_k^2}$ denote the standard deviation of $\sum_{k=1}^K w_k \hat{\tau}_k$, we have
\begin{eqnarray}
\max_{\bm{\tau} \in \mathcal{T}} R(\bm{\tau},\hat{\delta}_{\bm{w}}) %&=& \max_{t \geq 0} \max_{\bm{\tau} \in \mathcal{T}, \tau_0 = t} \left\{ t \cdot \Phi \left( - s^{-1}(\bm{w})\left( t+\sum_{k=1}^K w_k (\tau_k-t) \right) \right) \right\} \nonumber \\
&=& \max_{t \geq 0} \left\{ t \cdot \Phi \left( - s^{-1}(\bm{w})\left( t+\min_{\bm{\tau} \in \mathcal{T}, \tau_0 = t} \left\{ \sum_{k=1}^K w_k (\tau_k-\tau_0) \right\} \right) \right) \right\}. \nonumber
\end{eqnarray}

By Assumption 1, the term $\min_{\bm{\tau} \in \mathcal{T}, \tau_0 = t} \left\{ \sum_{k=1}^K w_k (\tau_k-\tau_0) \right\}$ does not depend on $t$. Hence, we obtain
\begin{equation}
b(\bm{w}) \ \equiv \ \max_{\bm{\tau} \in \mathcal{T}, \tau_0 = 0} \left\{ \sum_{k=1}^K w_k \tau_k \right\} \ = \ - \min_{\bm{\tau} \in \mathcal{T}, \tau_0 = t} \left\{ \sum_{k=1}^K w_k (\tau_k-\tau_0) \right\}. \nonumber
\end{equation}
Viewing $\sum_{k=1}^K w_k \hat{\tau}_k$ as an estimator for $\tau_0$, $b(\bm{w})$ and $s(\bm{w})$ can be interpreted as the maximum bias and the standard deviation of a linear estimator $\sum_{k=1}^K w_k \hat{\tau}_k$, respectively. Using these terms, we can express the maximum regret as
\begin{eqnarray}
\max_{\bm{\tau} \in \mathcal{T}} R(\bm{\tau},\hat{\delta}_{\bm{w}}) %&=& \max_{t \geq 0} \left\{ t \cdot \Phi \left( - \frac{t}{s(\bm{w})} + \frac{b(\bm{w})}{s(\bm{w})} \right) \right\} \nonumber \\
%&=& s(\bm{w}) \cdot \max_{t \geq 0} \left\{ t \cdot \Phi \left( - t + \frac{b(\bm{w})}{s(\bm{w})} \right) \right\} \nonumber \\
&=& s(\bm{w}) \cdot \eta \left( \frac{b(\bm{w})}{s(\bm{w})} \right), \nonumber
\end{eqnarray}
where $\eta(a) \equiv \max_{t \geq 0}\left\{ t \cdot \Phi(-t+a) \right\}$. Hence, we obtain the following theorem:

\bigskip

%% Theorem 1 %%
\begin{Theorem}\label{thm:minimax}
Suppose that the parameter space $\mathcal{T}$ satisfies Assumption 1. Then, the minimax regret rule among $\mathcal{D}_{\mathrm{lin}}$ is obtained via the following optimization:
\begin{equation}
\bm{w}_{\mathrm{minimax}} \ \in \ \mathrm{arg} \min_{\bm{w}} \left\{ s(\bm{w}) \cdot \eta \left( \frac{b(\bm{w})}{s(\bm{w})} \right) \right\}. \label{Theorem 1}
\end{equation}
\end{Theorem}

In view of Theorem \ref{thm:minimax}, we can compute $\bm{w}_{\text{minimax}}$ using the following algorithm: \vspace{0.05in}
\begin{itemize}
\item[\textbf{1.}] Fix $\bm{w}$ such that $\sum_{k=1}^K w_k = 1$. Calculate $s(\bm{w})$ and $b(\bm{w})$, and obtain the maximum regret $s(\bm{w}) \cdot \eta(b(\bm{w})/s(\bm{w}))$.%, where we approximate $\eta(\cdot)$ by a piecewise linear function.
\item[\textbf{2.}] Minimize the maximum regret $s(\bm{w}) \cdot \eta(b(\bm{w})/s(\bm{w}))$ subject to $\sum_{k=1}^K w_k = 1$. \vspace{0.05in}
\end{itemize}

\bigskip

In step 1, we compute $b(\bm{w})$ by solving the optimization in $\bm{\tau}$. As shown below, there are  many examples in which we can calculate $b(\bm{w})$ using linear programming. 
If so, $b(\bm{w})$ can be solved quickly and reliably even when $K$ is large. Furthermore, because $t \mapsto t \cdot \Phi(-t+a)$ is a smooth unimodal function, $\eta(a)$ is easy to compute.\footnote{To reduce computational cost, the function $\eta(\cdot)$ is approximated by a piecewise linear function in practice.} 

Figure \ref{fig:eta} displays the shape of $\eta(a)$ as a function of $a$. From the proof of Lemma \ref{lemma:2} below, we find that $\eta(a)$ is strictly increasing and convex. In numerical simulations given in Section 4 below, we compute the optimization of step 2 using the R package ``Rsolnp''. We find that this optimization step is quick and stable even when $K$ exceeds 100.
%%% Figure 1 %%%
\begin{figure}[h] 
\centering
\includegraphics[width=9cm]{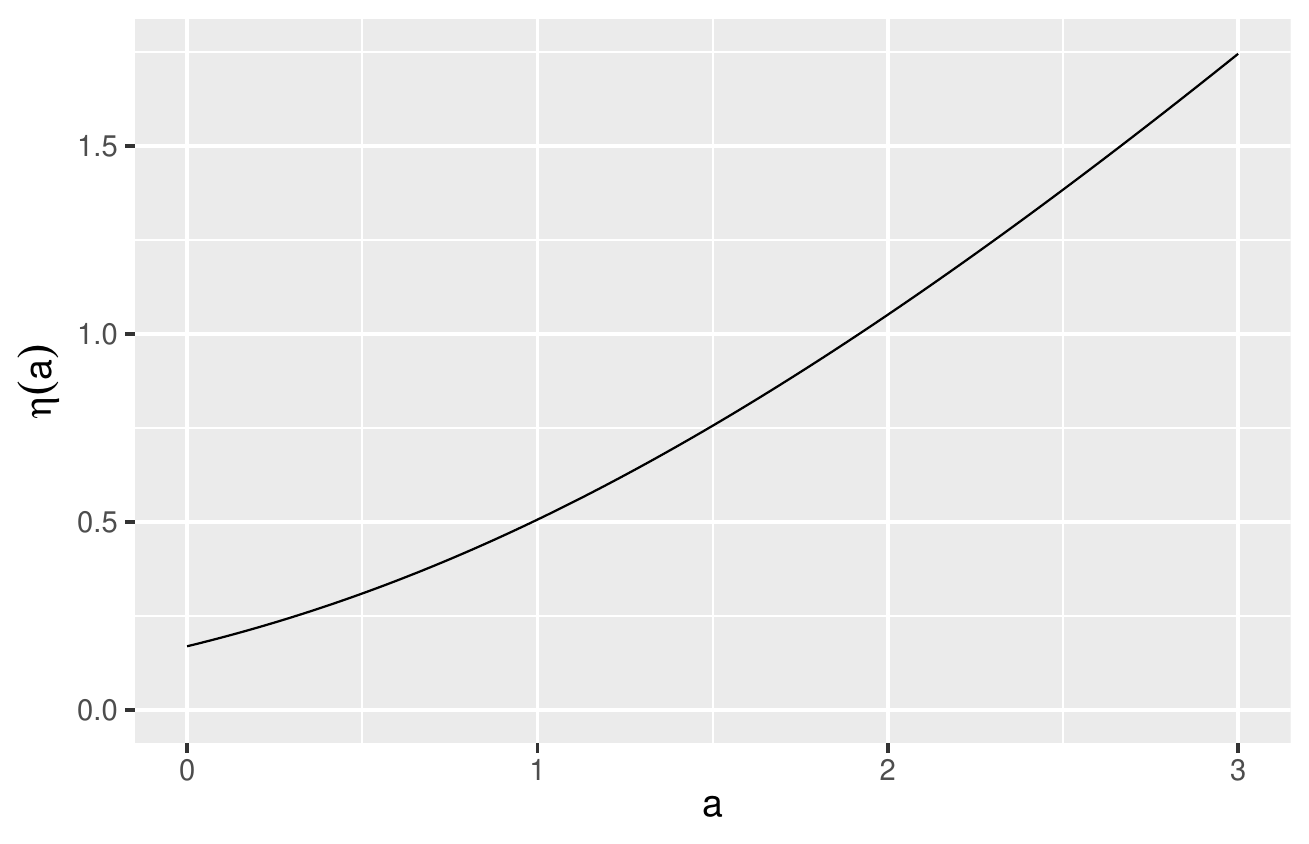}
\caption{Functional form of $\eta(a)$. The function $\eta(a)$ is strictly increasing and convex and $\eta(0)$ is approximately equal to 0.17.} \label{fig:eta}
\end{figure}

%\begin{Remark}\label{rem:LP}
There are some important cases where we can calculate $b(\bm{w})$ using linear programming. For example, consider the parameter space $\mathcal{T}_{\text{meta}}$ of Example 1, where we have
$$
b(\bm{w}) \ = \ \max_{\beta \in \mathcal{B}} \left\{ \sum_{k=1}^K w_k(x_k-x_0)'\beta \right\}. \nonumber 
$$
Hence, if $\mathcal{B}$ is a polyhedron, we can calculate $b(\bm{w})$ using linear programming. 

If the parameter space is $\mathcal{T}_C$ as in Example 2, we have
$$
b(\bm{w}) \ = \ \max_{|\tau_k - \tau_l| \leq C_{kl}} \left\{ \sum_{k=1}^K w_k (\tau_k - \tau_0) \right\},
$$
where this maximization can again be solved using linear programming. As an alternative approach to compute the minimax regret rule in the current setting, Proposition 1 in \cite{montiel2023decision} presents a procedure based on a solution to a fixed point problem.
%\end{Remark}

Even if the parameter space $\mathcal{T}$ satisfies Assumption 1, the minimax regret can be unbounded. For example, $\mathcal{T} = \mathbb{R}^{K+1}$ satisfies Assumption 1 but the maximum regret is unbounded with $b(\bm{w}) = +\infty$ for any $\bm{w}$. This is because $\lim_{a \rightarrow \infty} \eta(a) = + \infty$. 

To have the maximum regret bounded, we need to impose a restriction that  the difference between $\tau_k$ and $\tau_0$ is bounded for some $k$.

%% Assumption 2 %%
\begin{Assumption}
There exists $M < \infty$ such that $\bm{\tau} \in \mathcal{T}$ implies $|\tau_k - \tau_0| < M$ for some $k$.
\end{Assumption}

%% Theorem 2 %%
\begin{Theorem}\label{thm:finite}
Suppose that the parameter space $\mathcal{T}$ satisfies Assumptions 1 and 2. Then, the minimax regret is finite.
\end{Theorem}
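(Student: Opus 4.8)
The plan is to exploit the closed-form expression for maximum regret derived just before Theorem 1, namely $\max_{\bm{\tau} \in \mathcal{T}} R(\bm{\tau}, \hat{\delta}_{\bm{w}}) = s(\bm{w}) \cdot \eta(b(\bm{w})/s(\bm{w}))$. Because the minimax regret is the minimization of this objective over feasible weight vectors, it is bounded above by the value the objective takes at any single feasible $\bm{w}$. Hence it suffices to exhibit one $\bm{w}$ with $\sum_{k=1}^K w_k = 1$ for which $s(\bm{w})$ and $b(\bm{w})$ are both finite and $s(\bm{w}) > 0$; finiteness of the minimax regret follows immediately. No minimizer needs to be constructed.

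The natural candidate is to place all weight on the study guaranteed by Assumption 2. Reading that assumption as asserting the existence of a \emph{fixed} index $k^\ast \in \{1,\dots,K\}$ with $|\tau_{k^\ast} - \tau_0| < M$ for every $\bm{\tau} \in \mathcal{T}$, I would take $\bm{w} = \bm{e}_{k^\ast}$, the unit vector with a one in coordinate $k^\ast$. This is feasible, and $s(\bm{e}_{k^\ast}) = \sigma_{k^\ast}$, which is finite and strictly positive. For the bias term, the invariance part of Assumption 1 ensures that $\{\bm{\tau} \in \mathcal{T} : \tau_0 = 0\}$ is nonempty, so $b(\bm{e}_{k^\ast}) = \max_{\bm{\tau} \in \mathcal{T},\, \tau_0 = 0} \tau_{k^\ast}$ is well defined; and since any such $\bm{\tau}$ has $\tau_0 = 0$, Assumption 2 gives $\tau_{k^\ast} = |\tau_{k^\ast} - \tau_0| < M$, whence $b(\bm{e}_{k^\ast}) \leq M < \infty$.

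It then remains to confirm that $\eta$ takes finite values at finite arguments. From the definition $\eta(a) = \max_{t \geq 0}\{ t \cdot \Phi(-t+a)\}$, for fixed finite $a$ the map $t \mapsto t \cdot \Phi(-t+a)$ is continuous, vanishes at $t=0$, and tends to zero as $t \to \infty$ because the Gaussian tail $\Phi(-t+a)$ decays faster than any polynomial grows; thus the maximum is attained at a finite $t$ and $\eta(a) < \infty$. Using the monotonicity of $\eta$ noted in the text, $\eta(b(\bm{e}_{k^\ast})/\sigma_{k^\ast}) \leq \eta(M/\sigma_{k^\ast})$, so the minimax regret is at most $\sigma_{k^\ast} \cdot \eta(M/\sigma_{k^\ast}) < \infty$.

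The main thing to get right is conceptual rather than technical: the quantifier structure in Assumption 2. The argument crucially needs a single index $k^\ast$ that works uniformly over $\mathcal{T}$. If instead the bounded study were allowed to depend on $\bm{\tau}$, one could (with $\tau_0 = 0$) hold $\tau_1 = 0$ within the bound while sending $\tau_2 \to \infty$, forcing $b(\bm{w}) = +\infty$ for every nonzero $\bm{w}$ and defeating the claim. So the key realization is that concentrating all mass on one uniformly-bounded study decouples $s(\bm{w})$ and $b(\bm{w})$ from the unbounded directions of the parameter space; the remaining verifications are routine.
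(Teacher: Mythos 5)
Your proof is correct and takes essentially the same route as the paper's: the paper likewise puts all weight on the study whose deviation from $\tau_0$ is uniformly bounded (setting $w_k=1$, so that $b(\bm{w}) < M$ and $s(\bm{w}) = \sigma_k$) and then invokes the formula $s(\bm{w})\,\eta\!\left(b(\bm{w})/s(\bm{w})\right)$ from Theorem 1. Your observation about the quantifier structure of Assumption 2 is apt: the paper's proof also implicitly adopts the strong reading in which the index $k$ is fixed uniformly over $\mathcal{T}$ (``without loss of generality, we assume that $|\tau_k-\tau_0|$ is bounded''), which, as your two-study example shows, is exactly the reading required for the claim to hold.
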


Assumption 2 means that there exists some study in the pool that provides some (partially) identifying information about $\tau_0$. This condition holds for the parameter space $\mathcal{T}_{\mathrm{meta}}$ of Example 1 if $\mathcal{B}$ is compact. Similarly, $\mathcal{T}_C$ satisfies Assumption 2. Theorem \ref{thm:finite} then applies to these cases and guarantees that the minimax regret is bounded.

%% Remark 1 %%
\begin{Remark}\label{rem:construle}
We can add a constant term to equation (\ref{linear_aggregation_rule}), i.e., consider the following treatment rule:
\begin{eqnarray}
\hat{\delta}_{v,\bm{w}}(\mathbf{D}) \ = \ \mathbf{1} \left\{ v + \sum_{k=1}^K w_k \hat{\tau}_k \geq 0 \right\}, \label{linear constant rule}
\end{eqnarray}
where $v \in \mathbb{R}$. Then, the maximum regret of $\hat{\delta}_{v,\bm{w}}$ can be written as
$$
s(\bm{w}) \eta \left( s^{-1}(\bm{w})\{b(\bm{w})+|v|\} \right).
$$
Since $\eta(a)$ is strictly increasing in $a$, for $v \neq 0$, the maximum regret of $\hat{\delta}_{v,\bm{w}}$ is greater than that of $\hat{\delta}_{\bm{w}}$. Hence, the minimax regret rule sets $v = 0$, so that it is not necessary to consider treatment rules with an intercept as in (\ref{linear constant rule}).
\end{Remark}

%% Remark 2 %%
\begin{Remark} \label{rem:without_A1}
Let  $\mathcal{S} \equiv \{\tau_0: \bm{\tau} \in \mathcal{T} \}$. If Assumption 1 is relaxed, 
%the optimization problem for maximum regret can be expressed as
%\begin{eqnarray}
%\max_{\bm{\tau} \in \mathcal{T}} R(\bm{\tau},\hat{\delta}_{\bm{w}}) %&=& \max_{\bm{\tau} \in \mathcal{T}} \left\{ | \tau_0 | \cdot \Phi \left( - sgn(\tau_0) \cdot \frac{\sum_{k=1}^K w_k \tau_k}{s(\bm{w})}\right) \right\} \nonumber \\
%&=& \max_{\bm{\tau} \in \mathcal{T}} \left\{ | \tau_0 | \cdot \Phi \left( - sgn(\tau_0) \cdot \frac{\tau_0 + \sum_{k=1}^K w_k (\tau_k-\tau_0)}{s(\bm{w})}\right) \right\} \nonumber \\
%&=& \max_{\bm{\tau} \in \mathcal{T}} \left\{ | \tau_0 | \cdot \Phi \left( \cdot \frac{- |\tau_0| - sgn(\tau_0) \cdot \sum_{k=1}^K w_k (\tau_k-\tau_0)}{s(\bm{w})}\right) \right\} \nonumber \\
%&=& \max_{t \in \mathcal{S}} \left\{ | t | \cdot \Phi \left( \cdot \frac{- |t| - \min_{\tau \in \mathcal{T}, \tau_0 = t} \left\{ sgn(t) \cdot \sum_{k=1}^K w_k (\tau_k-t) \right\} }{s(\bm{w})}\right) \right\}, \nonumber
%\end{eqnarray}
%where $\mathcal{S} \equiv \{\tau_0: \bm{\tau} \in \mathcal{T} \} $. Hence, 
the maximum regret can be written as
\begin{equation}
\max_{\bm{\tau} \in \mathcal{T}} R(\bm{\tau},\hat{\delta}_{\bm{w}}) \ = \ \max_{t \in \mathcal{S}} \left\{ | t | \cdot \Phi \left( \cdot \frac{- |t| + \tilde{b}(t,\bm{w}) }{s(\bm{w})}\right) \right\}, \label{maximum_regret_without_Assumption_1}
\end{equation}
where
$$
\tilde{b}(t,\bm{w}) \equiv - \min_{\tau \in \mathcal{T}, \tau_0 = t} \left\{ sgn(t) \cdot \sum_{k=1}^K w_k (\tau_k-t) \right\}.
$$
The weights of the minimax regret rule therefore solve
$$
\bm{w}_{\mathrm{minimax}} \ \in \ \mathrm{arg}\min_{\bm{w}} \max_{t \in \mathcal{S}} \left\{ |t| \cdot \Phi \left( - \frac{|t| - \tilde{b}(t,\bm{w})}{s(\bm{w})} \right) \right\}.
$$
Since the parameter space $\mathcal{T}$ does not satisfy the second condition in Assumption \ref{parameter_space}, the maximum bias $\tilde{b}(t,\bm{w})$ may depend on $t$. \Copy{computational_difficulty_without_A1}{When Assumption \ref{parameter_space} holds, the maximum regret can be computed without solving the optimization problem once the functions $\eta(a)$ and $b(\bm{w})$ are obtained. On the other hand, when Assumption \ref{parameter_space} does not hold, computing the maximum regret requires maximization on $t$ in every time we evaluate candidate $\bm{w}$. This complicates computation for $\bm{w}_{\mathrm{minimax}}$.}

In some cases, it is natural to consider a bounded parameter space. For example, when the outcome variable is bounded, the policy effect $\tau_k$ must be also bounded. If the policy effects are known to be nonnegative, the cost-adjusted policy effects are bounded below by some constant. In \ref{subsec:bounded}, we show that computing the minimax regret rule becomes simpler under certain restrictions even when the parameter space is bounded.
\end{Remark}

%% Remark 3 %%
\begin{Remark} \label{rem:randomized}
Our results can be extended to the following randomized statistical treatment rules:
\begin{equation}
\hat{\delta}_{\sigma_Z, \bm{w}} (\mathbf{D}, Z) \ = \ \mathbf{1} \left\{ \sum_{k=1}^K w_k \hat{\tau}_k + \sigma_Z Z \geq 0 \right\}, \label{randomized_rule}
\end{equation}
where $\sigma_Z \geq 0$ and $Z \sim N(0,1)$ is independent of the data $\mathbf{D}$. Then, conditional on the data, this randomized statistical treatment rule (\ref{randomized_rule}) becomes 1 with probability $\Phi_{\sigma_Z}\left( \sum_{k=1}^K w_k \hat{\tau}_k \right)$, where $\Phi_{\sigma_Z}$ is the distribution function of $N(0,\sigma_Z^2)$. This rule becomes the linear aggregation rule when $\sigma_Z=0$. Because $\sum_{k=1}^K w_k \hat{\tau}_k + \sigma_Z Z$ is normally distributed with mean $\sum_{k=1}^K w_k \tau_k$ and variance $\sum_{k=1}^K w_k^2 \sigma_k^2 + \sigma_Z^2$, we obtain
\begin{equation*}
R(\bm{\tau}, \hat{\delta}_{\sigma_Z, \bm{w}}) \ = \ | \tau_0 | \cdot \Phi \left( - sgn(\tau_0) \cdot \frac{\sum_{k=1}^K w_k \tau_k}{\sqrt{\sum_{k=1}^K w_k^2 \sigma_k^2 + \sigma_Z^2}}\right).
\end{equation*}
From the same discussion as in Theorem \ref{thm:minimax}, we obtain
\begin{equation*}
\max_{\bm{\tau} \in \mathcal{T}} R(\bm{\tau}, \hat{\delta}_{\sigma_Z, \bm{w}}) \ = \ s(\sigma_Z,\bm{w}) \cdot \eta \left( \frac{b(\bm{w})}{s(\sigma_Z,\bm{w})} \right), 
\end{equation*}
where $s(\sigma_Z,\bm{w}) \equiv \sqrt{\sum_{k=1}^K w_k^2 \sigma_k^2 + \sigma_Z^2}$. Therefore, optimization for finding a minimax regret rule among this class of randomized rules is similar to and a straightforward extension of the optimization for non-randomized rules shown in Theorem \ref{thm:minimax}. Note that the class of randomized rules we are optimizing over contains the minimax regret rule shown in  \citet{yata2021optimal}. \Copy{Rem5_discussion_Yata}{\citet{yata2021optimal} shows that the minimax regret rule takes the form $\hat{\delta}(\mathbf{D}) = \Phi_{\sigma_Z}\left( \sum_{k=1}^K w_k \hat{\tau}_k \right)$ with $\sigma_z$ dependent on the variances $(\sigma_k^2, k=1, \dots, K)$ and a specification of the parameter space.}
\end{Remark}

%% Remark 4 %%
\begin{Remark}\label{rem:consistency}
If we know $(\tau_1, \dots, \tau_K)$, i.e., $\sigma_k = 0$ for all $1 \leq k \leq K$, we can obtain the (true) identified set of $\tau_0$ based on the constraints on the parameter space $\mathcal{T}$. For instance, we construct the identified set of $\tau_0$ by intersecting multiple bounds for $\tau_0$, each of which is constructed by extrapolating from $\tau_k$ as in \citet{manski2020towards}. We can then obtain a minimax regret treatment rule given the true identified set of $\tau_0$ without any sampling uncertainty. We denote by $\delta^*_{IS}$ such a (non-randomized) minimax regret rule. In \ref{subsec:consistency}, we show that the limiting version of $\hat{\delta}_{\bm{w}_{\text{minimax}}}$, with the sample size of each study diverges $\sigma_k \rightarrow 0$ for all $k=1, \dots, K$, does not agree with $\delta^*_{IS}$ in some settings.
\end{Remark}

We conclude this section by summarizing implementation of our minimax regret rule.
\begin{enumerate}
    \item Collect $k=1, \dots, K$ studies on the policy of interest and extract the estimated policy effects $\hat{\tau}_k$, their standard errors $\sigma_k$, and the corresponding study characteristics $x_k$.
    \item With $\tau_0$ be the policy effect of interest for the target population, specify the parameter space $\mathcal{T}$ for $\bm{\tau} = (\tau_0, \tau_1, \ldots, \tau_K)'$, e.g., select Lipschitz constants in Example \ref{ex:bounded variations}.  
    \item Solve the optimization problem in Theorem \ref{thm:minimax} to obtain $\bm{w}_{\mathrm{minimax}}$.
    \item Decide whether to adopt the policy or not based on whether $\sum_{k=1}^K w_{\mathrm{minimax},k} \hat{\tau}_k$ exceeds zero or not.
\end{enumerate}

%%%%%%%%%%%%%%%%%%%%%%%%%%%%%%%%%%%%%%%%%%%
\section{Extensions and discussion}\label{sec:extensions}

%%%%%%%%%%%%%%%%%%%%%%%%%%%%%%%%%%%%%%%%%%%
\subsection{Comparison with a minimax MSE rule}\label{subsec:MSE}

In this section, we compare $\bm{w}_{\text{minimax}}$ with other ways of forming the weights. First, we consider a minimax linear estimator of $\tau_0$ in terms of the mean squared errors (MSE). It is well known that the maximum MSE of $\sum_{k=1}^K w_k \hat{\tau}_k$ can be decomposed into the variance and the squared maximum bias:
\begin{equation}
b^2(\bm{w}) + s^2(\bm{w}). \nonumber
\end{equation}
Hence, the weights of the minimax MSE estimator are
\begin{equation}
\bm{w}_{\text{MSE}} \ \in \ \text{arg} \min_{\bm{w}} \left\{ b^2(\bm{w}) + s^2(\bm{w}) \right\}. \label{MSE weight}
\end{equation}
We refer to $\hat{\delta}_{\bm{w}_{\text{MSE}}}$ as the minimax MSE rule.\footnote{ \label{ftn:Donoho}The weights of minimax MSE in the current setting correspond to the minimax optimal weights of \cite{Donoho94} and \cite{AK20} for estimation of $\tau_0$. See also Appendix D of \cite{Ishihara_etal25}.}

To compare the minimax regret and MSE rules, we  focus on the analytical properties of $\eta(a)$. \cite{tetenov2012statistical} shows that $\eta(a)$ is a continuous, strictly increasing function and $\eta(0) \simeq 0.17$. Furthermore, in the proof of lemmas in \ref{sec:proof}, we show that $\eta(a)$ is convex. This allows us to derive upper and lower bounds for $\eta(a)$, which are then used in the following theorem to bound the maximum regret.

%Furthermore, in the proof of the following lemmas, we show that $\eta(a)$ is concave. We accordingly obtain the following upper and lower bounds on $\eta(a)$:

\if0
%% Lemma 1 %%
\begin{Lemma}
For any $v, a \geq 0$, we have
\begin{equation}
v\cdot\Phi(-v) + \Phi(-v)\cdot a \ \ \leq \ \ \eta(a) \ \ \leq \ \ \eta(0) + a. \label{Lemma 1}
\end{equation}
\end{Lemma}

%% Lemma 2 %%
\begin{Lemma} \label{lemma:2}
For any $a \geq 0$, we have
\begin{equation}
\eta(0) \sqrt{1+a^2} \ \ \leq \ \ \eta(a) \ \ \leq \ \ \sqrt{1+a^2}. \label{eq:Lemma2}
\end{equation}
\end{Lemma}
\fi

%Relying on Theorem \ref{thm:minimax} and Lemmas 1--2, the next theorem bounds the maximum regret.

\bigskip

%% Theorem 3 %%
\begin{Theorem}\label{thm:regret_bound}
Suppose that the parameter space $\mathcal{T}$ satisfies Assumptions 1 and 2. Then, for any $\bm{w}$ and $v \geq 0$, we obtain
$$
\Phi(-v)\cdot s(\bm{w}) + v \cdot \Phi(-v)\cdot b(\bm{w}) \ \leq \ \max_{\bm{\tau} \in \mathcal{T}} R(\bm{\tau},\hat{\delta}_{\bm{w}}) \ \leq \ \eta(0)\cdot s(\bm{w}) + b(\bm{w}).
$$
In addition, we obtain
\begin{equation}
\eta(0) \cdot \sqrt{b^2(\bm{w}) + s^2(\bm{w})} \ \leq \ \max_{\bm{\tau} \in \mathcal{T}} R(\bm{\tau},\hat{\delta}_{\bm{w}}) \ \leq \ \sqrt{b^2(\bm{w}) + s^2(\bm{w})}. \nonumber \label{Theorem 3}
\end{equation}
\end{Theorem}

\bigskip

Theorem \ref{thm:regret_bound} provides lower and upper bounds on the maximum regret. These bounds show that the maximum regret is bounded from above and from below by $b(\bm{w}) + s(\bm{w})$ and $\sqrt{b^2(\bm{w}) + s^2(\bm{w})}$ up to some proportional factors, independently of the number of studies, $K$, and the dimension of $x_k$, $d_x$. The second set of inequalities imply that the minimax regret is equivalent to the minimax RMSE (root-MSE) up to a constant factor. In other words, minimax RMSE enables us to bound the minimax regret. 

%Furthermore, Theorem \ref{thm:regret_bound} and Lemma \ref{lemma:2} lead to the following comparison of the maximum regret between the minimax regret rule  $\hat{\delta}_{\bm{w}_{\text{minimax}}}$ and the minimax MSE rule $\hat{\delta}_{\bm{w}_{\text{MSE}}}$.

Furthermore, Theorem \ref{thm:regret_bound} and Lemma \ref{lemma:2} in \ref{sec:proof} lead to the following comparison of the maximum regret between the minimax regret rule  $\hat{\delta}_{\bm{w}_{\text{minimax}}}$ and the minimax MSE rule $\hat{\delta}_{\bm{w}_{\text{MSE}}}$.

\bigskip

%% Theorem 4 %%
\begin{Theorem}\label{thm:MSE}
Suppose that the parameter space $\mathcal{T}$ satisfies Assumptions 1 and 2. Then, we have
\begin{equation}
1 \ \leq \ \cfrac{\max_{\bm{\tau} \in \mathcal{T}} R(\bm{\tau},\hat{\delta}_{\bm{w}_{\mathrm{MSE}}})}{\max_{\bm{\tau} \in \mathcal{T}} R(\bm{\tau},\hat{\delta}_{\bm{w}_{\mathrm{minimax}}})} \ \leq \frac{1}{\eta(0)} \ \simeq \ 5.88. \label{Theorem 4}
\end{equation}
\end{Theorem}

\bigskip

Theorem \ref{thm:MSE} shows that the maximum regret of the minimax MSE rule can be bounded by the minimax regret up to a constant factor, independently of $K$ and $d_x$. Numerical simulations in Section 4 suggest that the maximum regret of $\hat{\delta}_{\bm{w}_{\text{MSE}}}$ can be about 40 percent greater than the minimax regret.

One useful observation is that the minimax regret criterion places greater emphasis on the bias than on the variance compared with the minimax MSE criterion. To see this point, consider the directional derivatives of the maximum regret and MSE. We fix $\bm{\theta} = (\theta_1, \cdots , \theta_K)'$ with $\sum_{k=1}^K \theta_k = 1$ and assume that $b(\bm{w})$ is directionally differentiable in the direction of $\bm{\theta}-\bm{w}$.\footnote{\label{footnote:directional_differentiability}\Copy{directionally_differentiability_footnote}{For example, if $\mathcal{T} = \{ \bm{\tau} : |\tau_k - \tau_l| \leq C \|x_k-x_l\| \ \text{for any $k$, $l$} \}$ and all elements of $\bm{w}$ and $\bm{\theta}$ are nonnegative, then we obtain $$ b\left( (1-h) \cdot \bm{w} + h \cdot \bm{\theta} \right) \ = \ C \sum_{k=1}^K \| x_k - x_0 \| \cdot \{(1-h) w_k + h \theta_k\} \ = \ (1-h) \cdot b(\bm{w}) + h \cdot b(\bm{\theta}).$$ This implies that $b'_{\bm{\theta}}(\bm{w}) = b(\bm{\theta}) - b(\bm{w})$ in this case. Furthermore, Appendix D of \cite{Ishihara_etal25} shows that all elements of $\bm{w}_{\text{MSE}}$ must be nonnegative in this setting. Therefore, if all elements of $\bm{\theta}$ are nonnegative, $b(\bm{w})$ is directionally differentiable in the direction of $\bm{\theta}-\bm{w}$ when $\bm{w}=\bm{w}_{\text{MSE}}$.}} We define
\begin{eqnarray}
b_{\bm{\theta}}'(\bm{w}) &\equiv & \lim_{h \downarrow 0} \left( b\left((1-h)\cdot \bm{w}+h\cdot \bm{\theta} \right) - b(\bm{w}) \right)/h , \nonumber \\
s_{\bm{\theta}}'(\bm{w}) &\equiv & \lim_{h \downarrow 0} \left( s\left((1-h)\cdot \bm{w}+h\cdot \bm{\theta} \right) - s(\bm{w}) \right)/h , \nonumber \\
Q_{\bm{\theta}}(\bm{w}) & \equiv & \left( \max_{\bm{\tau} \in \mathcal{T}} R(\bm{\tau},\hat{\delta}_{(1-h) \cdot \bm{w}+h \cdot \bm{\theta}}) - \max_{\bm{\tau} \in \mathcal{T}} R(\bm{\tau},\hat{\delta}_{\bm{w}}) \right)/h, \nonumber
\end{eqnarray}
where $Q_{\bm{\theta}}(\bm{w})$ is the directional derivative of the maximum regret. Then, as shown in \ref{subsec:derivation_MSE}, we obtain 
\[ 
b_{\bm{\theta}}'(\bm{w}_{\mathrm{MSE}}) < 0 \ \  \text{and} \ \  s_{\bm{\theta}}'(\bm{w}_{\mathrm{MSE}}) > 0 \ \ \Rightarrow \ \ Q_{\bm{\theta}}\left( \bm{w}_{\mathrm{MSE}} \right) < 0. 
\]
That is, at the minimax MSE weights $\bm{w} = \bm{w}_{\mathrm{MSE}}$, locally perturbing the weight vector in the direction that reduces the bias and increases the variance improves the welfare regret. This implies that the minimax regret criterion places greater emphasis on the bias than on the variance compared with the minimax MSE criterion. In the numerical analysis of Section 4, we plot $\bm{w}_{\mathrm{minimax}}$ and $\bm{w}_{\mathrm{MSE}}$ to illustrate the difference in their bias-variance balancing properties.

%%%%%%%%%%%%%%%%%%%%%%%%%%%%%%%%%%%%%%%%%%%
\subsection{Comparison with a hierarchical Bayes rule (simple case: $K=2$)}\label{subsec:K=2}
To illustrate the minimax regret rule among $\mathcal{D}_{\mathrm{lin}}$ and compare it with Bayes rules, consider the simple case where $K=2$ and $\mathcal{T} = \{ \bm{\tau}: |\tau_k - \tau_l| \leq C \ \text{for $k,l=0,1,2$ and $k \neq l$} \}$. 
Because $w_1 + w_2 = 1$, we can express $(w_1,w_2) = (w,1-w)$ and the maximum regret of $\hat{\delta}_{\bm{w}}$ as $s(w) \cdot \eta \left( b(w)/s(w) \right)$, where $b(w) = \max_{\bm{\tau} \in \mathcal{T}, \tau_0 = 0} \left\{ w \tau_1 + (1-w) \tau_2 \right\}$ and $s(w) = \sqrt{w^2 \sigma_1^2 + (1-w)^2 \sigma_2^2}$. Hence, the minimax regret weight $w^{\ast}$ is obtained by minimizing $s(w) \cdot \eta \left( b(w)/s(w) \right)$.

To compare the minimax regret rule with Bayes rules, consider the following hierarchical Bayes model:
\begin{eqnarray*}
& & \tau_k = \tau_0 + \epsilon_k, \ \ k = 1,2, \\
& & \tau_0 \sim N(0, \sigma_{\tau}^2), \ \ \epsilon_k \sim N(0, \sigma_{\epsilon}^2),
\end{eqnarray*}
where $\tau_0$, $\epsilon_1$, and $\epsilon_2$ are mutually independent. As discussed in Section \ref{subsec:minimax}, the Bayes optimal rule takes the form $1\{w_{\text{HB}} \cdot \hat{\tau}_1 + (1-w_{\text{HB}}) \cdot \hat{\tau}_2 \geq 0\}$. We compare the minimax regret weight $w^{\ast}$ with hierarchical Bayes weight $w_{\text{HB}}$.

As shown in \ref{subsec:derivation_K=2}, the minimax weight $w^{\ast}$ satisfies the following conditions:
\begin{eqnarray*}
\text{if} \ C / a^{\ast} < \sqrt{\frac{\sigma_1^2 \sigma_2^2}{\sigma_1^2 + \sigma_2^2}} & \Rightarrow & w^{\ast} = \frac{\sigma_2^2}{\sigma_1^2 + \sigma_2^2}, \\
\text{if} \ \sqrt{\frac{\sigma_1^2 \sigma_2^2}{\sigma_1^2 + \sigma_2^2}} \leq C / a^{\ast} \leq \max\{\sigma_1, \sigma_2\} & \Rightarrow & s(w^{\ast}) = C / a^{\ast}, \\
\text{if} \ C / a^{\ast} >  \max\{\sigma_1, \sigma_2\} & \Rightarrow & w^{\ast} \leq 0 \ \text{or} \ w^{\ast} \geq 1,
\end{eqnarray*}
where $a^{\ast} = \sqrt{\pi /2} \simeq 1.253$. Furthermore, the hierarchical Bayes weight becomes 
$$
w_{\text{HB}} \ = \ \frac{ \sigma_2^2 + \sigma_{\epsilon}^2}{\sigma_1^2 + \sigma_2^2 + 2 \sigma_{\epsilon}^2}.
$$

If the dispersion of parameters $C$ is small compared to the standard deviations $\sigma_1$ and $\sigma_2$, the minimax weight attains the smallest variance, that is, $w^{\ast} = \frac{\sigma_2^2}{\sigma_1^2 + \sigma_2^2}$. On the other hand, if $C$ is large compared to $\sigma_1$ and $\sigma_2$, then the minimax regret criterion may favor rules with variance $s(w)$ larger than the minimum.\footnote{If we consider the randomized statistical treatment rules in (\ref{randomized_rule}), for $C \geq a^{\ast} \sqrt{\frac{\sigma_1^2 \sigma_2^2}{\sigma_1^2 + \sigma_2^2}}$ the maximum regret is minimized when $w \in [0,1]$ and
\begin{equation}
s(\sigma_Z,w) \ \equiv \ \sqrt{w^2\sigma_1^2+(1-w)^2\sigma_2^2 + \sigma_Z^2} \ = \ C/a^{\ast}. \label{minimax_K=2_randomized}
\end{equation}
Hence, if $C / a^{\ast} >  \max\{\sigma_1, \sigma_2\}$, then $\sigma_Z$ must be positive to satisfy (\ref{minimax_K=2_randomized}). This implies that when the dispersion of parameters is large, the minimax regret criterion may favor randomized treatment rules over non-randomized treatment rules, and this observation is consistent with \cite{stoye2012minimax} and \cite{yata2021optimal}.} The minimax regret weight $w^{\ast}$ and the hierarchical Bayes weight $w_{\text{HB}}$ satisfy $|w^{\ast}-1/2| \geq |w_{\text{HB}}-1/2|$ when $C$ is small. This implies that $w_{\text{HB}}$ shrinks $w^{\ast}$ toward $1/2$ when $C$ is small, and the degree of shrinkage is increasing in $\sigma_{\epsilon}^2$. If $\sigma_1 \neq \sigma_2$, the weights $w^{\ast}$ and $w_{\text{HB}}$ agree only in an extreme scenario of $\sigma_{\epsilon}^2 = 0$.

%%%%%%%%%%%%%%%%%%%%%%%%%%%%%%%%%%%%%%%%%%%
\subsection{Comparison with a global minimax regret rule}\label{subsec:global}
In this section, we compare the proposed minimax regret linear aggregation rule with a global minimax regret rule that solves
\[
\min_{\hat{\delta}} \max_{\bm{\tau} \in \mathcal{T}} R(\bm{\tau}, \hat{\delta})
\]
where $\min_{\hat{\delta}}$ denotes minimization over any treatment rules including stochastic assignments. \cite{montiel2023decision} and \cite{yata2021optimal} show that the class of the randomized rule $\hat{\delta}_{\sigma_Z,\bm{w}}$ in (\ref{randomized_rule}) contains global minimax regret rules when the parameter space is a Lipschitz parameter space, $\mathcal{T} = \{ \bm{\tau}: |\tau_k - \tau_l | \leq C \|x_k - x_l\| \ \text{for any $k,l$} \}$. Let $\rho$ be the ratio of the minimax regret of the linear aggregation rules to that of the global minimum: 
\[
\rho \ \equiv \ \frac{\min_{\bm{w}} \max_{\bm{\tau} \in \mathcal{T}} R(\bm{\tau}, \hat{\delta}_{\bm{w}})}{\min_{\hat{\delta}} \max_{\bm{\tau} \in \mathcal{T}} R(\bm{\tau}, \hat{\delta})} %\ = \ \frac{\min_{\bm{w}} \max_{\bm{\tau} \in \mathcal{T}} R(\bm{\tau}, \hat{\delta}_{\bm{w}})}{\min_{\sigma_Z, \bm{w}} \max_{\bm{\tau} \in \mathcal{T}} R(\bm{\tau}, \hat{\delta}_{\sigma_Z, \bm{w}})}.
\]
The following theorem provides an upper bound of $\rho$ for the Lipschitz parameter space.

\bigskip

%% Theorem 5 %%
\begin{Theorem}\label{thm:global}
Suppose that $\|x_1-x_0\| < \min_{k=2, \ldots,K} \|x_k-x_0\| $ and $\mathcal{T} = \{ \bm{\tau}: |\tau_k - \tau_l | \leq C \|x_k - x_l\| \ \text{for any $k,l$} \}$. Let $b_1 \equiv C\|x_1-x_0\|$. Then,
\[
1 \ \leq \ \rho \ \leq \ \bar{\rho}(b_1/\sigma_1) \ \equiv \ \begin{cases}
1, & \text{if $b_1/\sigma_1 \leq \sqrt{\pi / 2}$} \\
\frac{2 \eta(b_1/\sigma_1)}{b_1 / \sigma_1}, & \text{if $b_1/\sigma_1 > \sqrt{\pi / 2}$}
\end{cases}.
\]
Furthermore, $\overline{\rho}(b_1/\sigma_1)$ is an increasing function in $b_1 / \sigma_1$ and is bounded above by $2$. 
\end{Theorem}

\bigskip

This theorem shows that under the Lipschitz parameter space, $\hat{\delta}_{\bm{w}_{\mathrm{minimax}}}$ is a global minimax regret rule when the maximal bias extrapolated from the closest study $b_1$ is small relative to the standard deviation of the estimate there by $b_1/\sigma_1 \leq \sqrt{\pi/2} \simeq 1.25$. This result is consistent with the global minimax regret rules shown by \cite{montiel2023decision} and \cite{yata2021optimal}, and clarifies that restricting the rules to the linear non-randomized aggregation rules does not miss a globally optimal rule if $b_1/\sigma_1 \leq \sqrt{\pi/2}$. 
In contrast, if the bias - noise ratio $b_1/\sigma_1$ is larger than $\sqrt{\pi/2}$, $\hat{\delta}_{\bm{w}_{\mathrm{minimax}}}$ is no longer a global minimax regret, while Theorem \ref{thm:global} shows that the ratio of the maximum regrets can be bounded above by $\frac{2 \eta(b_1/\sigma_1)}{b_1/\sigma_1}$. For example, if $b_1/\sigma_1 =2$, then $\rho$ is bounded above by about $1.05$, i.e., restricting to linear aggregation rules comes with the increase of the maximum regret by only $5\%$. When $b_1/\sigma_1 =10$, $\rho$'s upper bound increases to $1.59$, while Theorem \ref{thm:global} shows that $\rho$ never doubles for any $C$ and $\sigma_1, \ldots, \sigma_K$. %From the proof of Theorem \ref{thm:global}, if $\bm{w}_{\mathrm{minimax}} = (1,0, \ldots, 0)'$, then $\rho$ coincides with the upper bound $\overline{\rho}(b_1/\sigma_1)$ when $b_1/\sigma_1 > \sqrt{\pi/2}$.

To illustrate our result, we investigate $\rho$ numerically when the Lipschitz parameter space has $K=2$ and $(x_0,x_1,x_2) = (0,1,1.5)$. From the proof of Theorem \ref{thm:global}, we have $\min_{\hat{\delta}} \max_{\bm{\tau} \in \mathcal{T}} R(\bm{\tau}, \hat{\delta}) = b_1 / 2$ when $b_1/\sigma_1 > \sqrt{\pi/2}$. We compute $\rho$ numerically in the following two cases: (i) $(\sigma_1, \sigma_2) = (1,1)$, (ii) $(\sigma_1, \sigma_2) = (1,10)$. In both cases, we have $b_1/\sigma_1 = C$.
Figure \ref{fig:rho} shows the relationship between $\rho$ and $b_1/\sigma_1$ in cases (i) and (ii). We find that $\rho$ coincides with the upper bound $\overline{\rho}(b_1/\sigma_1)$ in case (i), whereas in case (ii) it is strictly below the upper bound when $b_1/\sigma_1$ is large. The maximum regret of $\hat{\delta}_{\bm{w}_{\mathrm{minimax}}}$ is no more than $1.163$ times that of the global minimax regret rule even when $b_1/\sigma_1 = 3$. When $\sigma_1$ is very small, the upper bound becomes large; however, if there exists a study with large standard error whose covariates are close to $x_0$, then $\rho$ can be smaller than the upper bound.

%%% Figure 2 %%%
\begin{figure}[H]
\centering
\includegraphics[width=9cm]{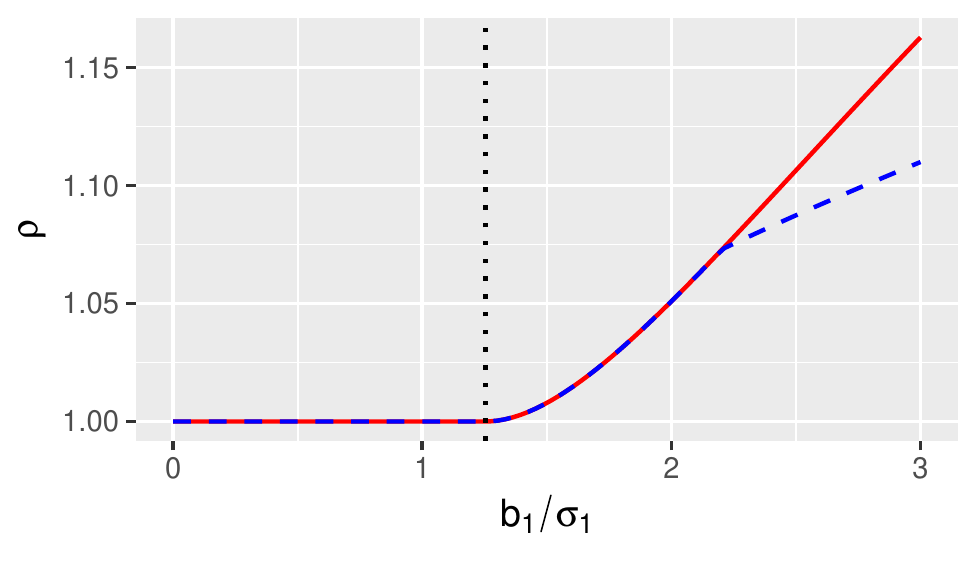}
\caption{{\footnotesize The red solid and blue dashed lines denote $\rho$ for $K=2$ in cases (i) $(\sigma_1, \sigma_2) = (1,1)$ and (ii) $(\sigma_1, \sigma_2) = (1,10)$, respectively. The red line coincides with $\overline{\rho}(b_1 / \sigma_1)$, where $b_1/\sigma_1 = C$ in both cases. The dotted line denotes $\sqrt{\pi/2}$.}}\label{fig:rho}
\end{figure}

%%%%%%%%%%%%%%%%%%%%%%%%%%%%%%%%%%%%%%%%%%%
\section{Numerical analysis}\label{sec:numerical}

To illustrate the results of the previous sections, we present some numerical analysis. Throughout this section, we set the study-specific covariates as equidistant grid points on $[0,1]$:
$$
x_k = (k-1)/(K-1) \ \ \text{and} \ \ \sigma_k = 1 \ \ \text{for $k=1, \cdots , K$.}
$$

We consider the following two parameter spaces:
\begin{eqnarray}
\mathcal{T}_1 &=& \{ \bm{\tau}: \tau_k = \beta_0 + \beta x_k,  \ \text{$\beta_0 \in \mathbb{R}$ and $\beta \in [-C,C]$}\}, \nonumber \\
\mathcal{T}_2 &=& \{ \bm{\tau}: |\tau_k - \tau_l | \leq C |x_k - x_l| \ \text{for any $k,l$} \}, \nonumber
\end{eqnarray}
where $C$ is a positive constant. For these two parameter spaces, we derive the minimax regret and minimax MSE rules and compare the maximum regrets of these two treatment rules.

In the case of the linear parameter space $\mathcal{T}_1$, one natural treatment rule is based on plugging in the OLS estimator,
$$
\hat{\delta}_{\bm{w}_{\text{OLS}}} \ \equiv \ \mathbf{1} \left\{ \tilde{x}_0'\hat{b} \geq 0 \right\},
$$
where $\tilde{x}_0 \equiv (1,x_0)'$ and $\hat{b}$ is the OLS estimator of $(\beta_0, \beta)'$. Because $\hat{b}$ is linear with respect to $(\hat{\tau}_1, \dots , \hat{\tau}_K)'$, this rule can be expressed as a linear aggregation rule, $\sum_{k=1}^K w_{\text{OLS},k} \hat{\tau}_k$.

Another natural treatment rule is based on plugging in the hierarchical Bayes (HB) estimator defined in (\ref{HB_rule}),
$$
\hat{\delta}_{\bm{w}_{\text{HB}}} \ \equiv \ \mathbf{1} \left\{ \sum_{k=1}^K w_{\text{HB},k} \hat{\tau}_k \geq 0 \right\},
$$
where $\bm{w}_{\text{HB}} \equiv (w_{\text{HB},1}, \dots, w_{\text{HB},K})' \equiv \tilde{\bm{w}}_{\text{HB}} / \sum_{k=1}^K \tilde{w}_{\text{HB},k}$ and $\tilde{\bm{w}}_{\text{HB}} \equiv (\tilde{w}_{\text{HB},1}, \dots, \tilde{w}_{\text{HB},K})' \equiv \bm{\Sigma}^{-1} \bm{X} \left(\bm{\Sigma}_{\tilde{\beta}}^{-1} + \bm{X}' \bm{\Sigma}^{-1} \bm{X} \right)^{-1} \tilde{x}_0$. We set the prior variance matrix as
\[
\bm{\Sigma}_{\tilde{\beta}} \ = \ \left( \begin{array}{cc}
10 & 0 \\
0 & 1.96 \cdot C 
\end{array} \right).
\]
Because the state space $\mathcal{T}_1$ does not restrict $\beta_0$, we specify a diffuse prior for $\beta_0$. In contrast, because $\mathcal{T}_1$ assumes that $\beta \in [-C,C]$, we assume that $\beta$ is contained in $[-C,C]$ with prior probability $0.95$.

We calculate $\bm{w}_{\text{OLS}}$, $\bm{w}_{\text{HB}}$, $\bm{w}_{\text{MSE}}$, and $\bm{w}_{\text{minimax}}$ for $K = 30$ and $x_0 = 0.1$. Table \ref{tab:numerical_T1} contains the results of this experiment for $C = 0.1$,  $1.0$, and $2.0$. Table \ref{tab:numerical_T1} shows the ratios of $b(\bm{w})$ and $s(\bm{w})$, and the ratios of the maximum regrets, that is,
$$
\cfrac{\max_{\bm{\tau} \in \mathcal{T}_1} R(\bm{\tau},\hat{\delta}_{\bm{w}_{\text{OLS}}})}{\max_{\bm{\tau} \in \mathcal{T}_1} R(\bm{\tau},\hat{\delta}_{\bm{w}_{\text{minimax}}})}, \ \ \cfrac{\max_{\bm{\tau} \in \mathcal{T}_1} R(\bm{\tau},\hat{\delta}_{\bm{w}_{\text{HB}}})}{\max_{\bm{\tau} \in \mathcal{T}_1} R(\bm{\tau},\hat{\delta}_{\bm{w}_{\text{minimax}}})}, \ \ \text{and} \ \ \cfrac{\max_{\bm{\tau} \in \mathcal{T}_1} R(\bm{\tau},\hat{\delta}_{\bm{w}_{\text{MSE}}})}{\max_{\bm{\tau} \in \mathcal{T}_1} R(\bm{\tau},\hat{\delta}_{\bm{w}_{\text{minimax}}})}.
$$
Because the OLS estimator is unbiased, the maximum bias of the OLS estimator is exactly zero. Hence, the ratio $b(\bm{w}_{\text{OLS}})/s(\bm{w}_{\text{OLS}})$ is exactly zero in all settings. For the HB rule, $b(\bm{w}_{\text{HB}})/s(\bm{w}_{\text{HB}})$ increases as $C$ increases. The ratio of $b(\bm{w}_{\text{minimax}})$ and $s(\bm{w}_{\text{minimax}})$ is smaller than that of $\bm{w}_{\text{MSE}}$ in all settings. This implies that the minimax regret criterion places more emphasis on the bias than the variance compared with the minimax MSE criterion. Table \ref{tab:numerical_T1} shows that the maximum regret of the minimax MSE rule is about 40 percent greater than the minimax regret when $C=1.0$, and the maximum regret of the hierarchical Bayes rule more than doubles the minimax regret. When $C=0.1$, the maximum regrets of $\bm{w}_{\text{MSE}}$ and $\bm{w}_{\text{HB}}$ are close to the minimax regret. If $C$ is sufficiently large, $\bm{w}_{\text{minimax}}$ is almost the same as $\bm{w}_{\text{OLS}}$. Hence, when $C = 1.0$ or $2.0$, the maximum regret of $\hat{\delta}_{\bm{w}_{\text{OLS}}}$ is almost identical to the minimax regret. In contrast, when $C$ is small, $\bm{w}_{\text{minimax}}$ is quite different from $\bm{w}_{\text{OLS}}$ and the maximum regret of $\hat{\delta}_{\bm{w}_{\text{OLS}}}$ is about 30 percent greater than the minimax regret.

% Table 1 %
\begin{table}[H]
\caption{Results for the linear parameter space $\mathcal{T}_1$.}
\begin{center}
  \begin{tabular}{l c c c c c c c} \hline 
      & $\frac{b(\bm{w}_{\text{OLS}})}{s(\bm{w}_{\text{OLS}})}$ & $\frac{b(\bm{w}_{\text{HB}})}{s(\bm{w}_{\text{HB}})}$ & $\frac{b(\bm{w}_{\text{MSE}})}{s(\bm{w}_{\text{MSE}})}$ & $\frac{b(\bm{w}_{\text{minimax}})}{s(\bm{w}_{\text{minimax}})}$ & ratio (OLS) & ratio (HB) & ratio (MSE) \rule[0mm]{0mm}{5mm} \vspace{1mm} \\ \hline \hline
$C=0.1$ & 0 & 0.22 & 0.21 & 0.17 & 1.30 & 1.03 & 1.02 \\ 
$C=1.0$ & 0 & 1.13 & 0.43 & 0.00 & 1.00 & 2.28 & 1.41 \\  
$C=2.0$ & 0 & 0.83 & 0.24 & 0.00 & 1.00 & 2.14 & 1.28 \\  \hline
  \end{tabular}
  \end{center}
  \footnotesize{Note: The ratios that are shown in the final three columns are the ratios of the maximum regrets, $\max_{\bm{\tau} \in \mathcal{T}} R(\bm{\tau},\hat{\delta}_{\bm{w}}) / \max_{\bm{\tau} \in \mathcal{T}} R(\bm{\tau},\hat{\delta}_{\bm{w}_{\text{minimax}}})$, for $\bm{w} = \bm{w}_{\text{OLS}}$, $\bm{w}_{\text{HB}}$, and $\bm{w}_{\text{MSE}}$.}
\begin{center}
\end{center}\label{tab:numerical_T1}
\end{table}

%\textcolor{red}{The left panel in Figure ?? shows a value of $\bm{\tau}$ maximizing the regret of $\hat{\delta}_{\bm{w}_{\text{minimax}}}$ for each $C=0.1,1,2$. A least favorable prior for $\bm{\tau}$ replicating $\hat{\delta}_{\bm{w}_{\text{minimax}}}$ as a Bayes rule can be constructed as a symmetric two-point prior allocating probability $1/2$ to the worst-case value of $\bm{\tau}$ and the other probability $1/2$ to its negative counterpart. At the worst case $\bm{\tau}$, the slope parameter $\beta_1$ agrees with the value of corresponding $C$ so that we can directly interpret the choice of $C$ as the value of $|\beta_1|$ at the worst-case $\bm{\tau}$. 
%The right panel in Figure ?? shows the regret functions profiled for $\tau_0$ for the minimax regret, minimax MSE, and hierarchical Bayes rules at $C=1$. For each rule, the maximum regret is attained at a similar value of $\tau_0$ around  $\tau_0 = 0.25$. We can observe that the profiled regret of the minimax regret rule dominates those of the minimax MSE and hierarchical Bayes rules uniformly over $\tau_0$. }

Figure \ref{fig:linear_least_regret} shows the regret functions profiled for $\tau_0$ for the minimax regret, minimax MSE, and hierarchical Bayes rules at $C=1$. For each rule, the maximum regret is attained at a similar value of $\tau_0$ around  $\tau_0 = 0.25$. We can observe that the profiled regret of the minimax regret rule dominates those of the minimax MSE and hierarchical Bayes rules almost uniformly over $\tau_0$.

\if0
% Figure 3 %
\begin{figure}[htbp]
    \begin{tabular}{ccc}
      \begin{minipage}[t]{0.45\hsize}
        \centering
        \includegraphics[width=7.8cm]{linear_least.pdf}
        \subcaption{{\footnotesize The least-favorable values of $\bm{\tau}$.}}
        \label{fig:linear_least}
      \end{minipage} &
      \begin{minipage}[t]{0.45\hsize}
        \centering
        \includegraphics[width=7.8cm]{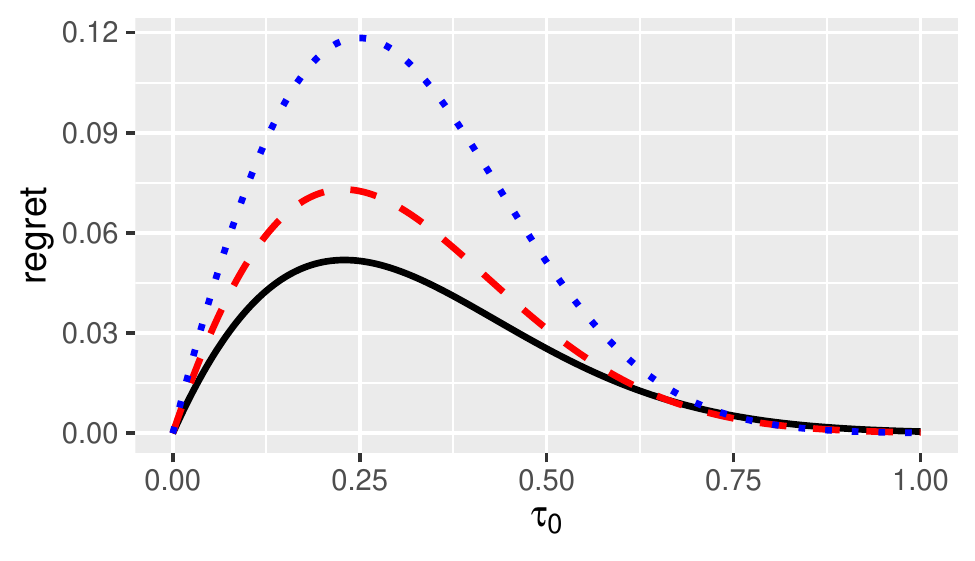}
        \subcaption{{\footnotesize The regret functions profiled for $\tau_0$.}}
        \label{fig:linear_regret_func}
      \end{minipage}
      
    \end{tabular}
    \caption{{\footnotesize The left panel denotes the least-favorable values of $\bm{\tau}$ for the minimax regret rule when $\mathcal{T}=\mathcal{T}_1$. The circles, triangles, and squares denote the least-favorable values of $\bm{\tau}$ for the minimax regret rule when $C=0.1$, $1$, and $2$, respectively. The blue dotted line denotes $X=x_0$. The right panel denotes the regret functions $R(t) \equiv \max_{\bm{\tau} \in \mathcal{T}_1: \tau_0 = t} R(\bm{\tau},\hat{\delta})$. The black solid, red dashed, and blue dotted lines denote the regret functions $R(t)$ for the minimax regret, minimax MSE, and hierarchical Bayes rules when $C=1$.}}\label{fig:linear_least_regret}
\end{figure}
\fi

% Figure 3 %
\begin{figure}[H]
\centering
\includegraphics[width=9cm]{linear_regret_func.pdf}
\caption{{\footnotesize This figure plots the profiled regret functions $R(t) \equiv \max_{\bm{\tau} \in \mathcal{T}_1: \tau_0 = t} R(\bm{\tau},\hat{\delta})$. The black solid, red dashed, and blue dotted lines denote the profiled regret functions $R(t)$ for the minimax regret, minimax MSE, and hierarchical Bayes rules when $C=1$.}}\label{fig:linear_least_regret}
\end{figure}

Next, we consider the Lipschitz parameter space $\mathcal{T}_2$. We calculate $\bm{w}_{\text{minimax}}$ and $\bm{w}_{\text{MSE}}$ for $K=30$ and $x_0 = 0.5$. Similar to $\mathcal{T}_1$, we consider the hierarchical Bayesian model of (\ref{prior_dist}). We set the prior variance of $\tau_k$ as $10$ and the prior covariance of $\tau_k$ and $\tau_l$ as $10 \cdot \exp(-|x_k-x_l| / a)$ for some positive constant $a>0$. We choose a positive constant $a$ that satisfies $\frac{1}{K(K+1)/2} \sum_{k <l} P(|\tau_k - \tau_l| > C |x_k-x_l|) = 0.05$. Then, the posterior mean of $\tau_0$ is written as
\[
E[\tau_0 | \hat{\bm{\tau}}] \ = \ \bm{\Sigma}_{\bm{\tau}, 12} \bm{\Sigma}_{\bm{\tau}, 22}^{-1} \left(\bm{\Sigma}_{\bm{\tau},22}^{-1} + \bm{\Sigma}^{-1} \right)^{-1} \bm{\Sigma}^{-1} \hat{\bm{\tau}},
\]
which pins down the weights of the Bayes optimal decision rule $\bm{w}_{\text{HB}}$.

Table \ref{tab:numerical_T2} shows the ratios of $b(\bm{w})$ and $s(\bm{w})$ and the ratios of the maximum regrets for $C = 0.1$,  $1.0$, and $2.0$. The ratio of $b(\bm{w}_{\text{minimax}})$ and $s(\bm{w}_{\text{minimax}})$ is smaller than the ratios of $\bm{w}_{\text{HB}}$ and $\bm{w}_{\text{MSE}}$ in all settings. When $C$ is small, the maximum regret of the minimax MSE rule nearly attains the minimax regret. In contrast, when $C$ is large, the maximum regret of the minimax MSE rule is about 17 percent greater than the minimax regret. Similar to the minimax MSE rule, the maximum regret of the hierarchical Bayes rule nearly attains the minimax regret when $C$ is small. In addition, it is about 30 percent greater than the minimax regret when $C$ is large. Figure \ref{fig:weights_Lip} shows $\bm{w}_{\text{minimax}}$, $\bm{w}_{\text{MSE}}$, and $\bm{w}_{\text{HB}}$ for $C=1.0$. It shows that the minimax treatment rule is quite different from other treatment rules. The minimax MSE and Bayes criteria gives positive weights to most of the studies. In contrast, the minimax regret criterion yields weights that sharply concentrate around zero and rule out half of the studies.

% Table 2 %
\begin{table}[H]
\caption{Results of the Lipschitz parameter space $\mathcal{T}_2$.}
\begin{center}
  \begin{tabular}{c c c c c c} \hline 
      & $\frac{b(\bm{w}_{\text{HB}})}{s(\bm{w}_{\text{HB}})}$ & $\frac{b(\bm{w}_{\text{MSE}})}{s(\bm{w}_{\text{MSE}})}$ & $\frac{b(\bm{w}_{\text{minimax}})}{s(\bm{w}_{\text{minimax}})}$ & ratio (HB) & ratio (MSE) \rule[0mm]{0mm}{5mm} \vspace{1mm} \\ \hline \hline
$C=0.1$ & .107 & .141 & .131 & 1.02 & 1.01 \\ 
$C=1.0$ & .890 & .708 & .282 & 1.30 & 1.17 \\  
$C=2.0$ & .792 & .708 & .285 & 1.26 & 1.17 \\  \hline
  \end{tabular}
  \end{center}
  \footnotesize{Note: The ratios that are shown in the final two columns are the ratios of the maximum regrets, $\max_{\bm{\tau} \in \mathcal{T}} R(\bm{\tau},\hat{\delta}_{\bm{w}}) / \max_{\bm{\tau} \in \mathcal{T}} R(\bm{\tau},\hat{\delta}_{\bm{w}_{\text{minimax}}})$, for $\bm{w} = \bm{w}_{\text{HB}}$ and $\bm{w}_{\text{MSE}}$.}\label{tab:numerical_T2}
\end{table}

% Figure 4 %
\begin{figure}[H]
\centering
\includegraphics[width=9cm]{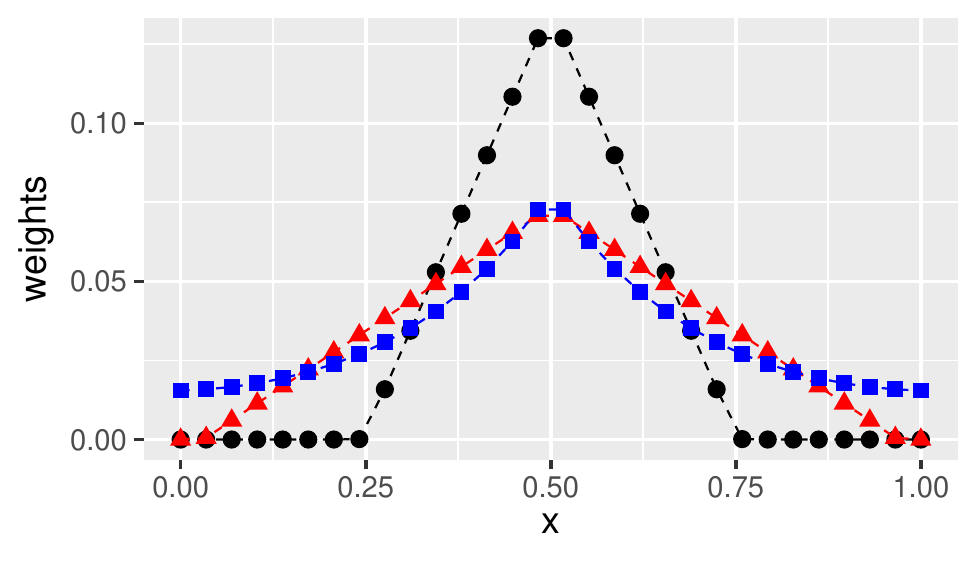}
\caption{{\footnotesize The black circles, red triangles, and blue squares denote $\bm{w}_{\text{minimax}}$, $\bm{w}_{\text{MSE}}$, and $\bm{w}_{\text{HB}}$ for $C=1.0$.}}\label{fig:weights_Lip}
\end{figure}

\if0
%%% Figure 2 %%%
\begin{figure}[H]
\centering
\includegraphics[width=9cm]{w_minimax_Lipschitz_C=1.pdf}
\caption{The minimax regret weights, $\bm{w}_{\text{minimax}}$, for $C=1.0$.}
\end{figure}
%%% Figure 3 %%%
\begin{figure}[H]
\centering
\includegraphics[width=9cm]{w_MSE_Lipschitz_C=1.pdf}
\caption{The minimax MSE weights, $\bm{w}_{\text{MSE}}$, for $C=1.0$.}
\end{figure}
%%% Figure 4 %%%
\begin{figure}[H]
\centering
\includegraphics[width=9cm]{w_HB_Lipschitz_C=1.pdf}
\caption{The HB weights, $\bm{w}_{\text{HB}}$, for $C=1.0$.}
\end{figure}
\fi

Figure \ref{fig:lipschitz_least} shows least favorable $\bm{\tau}$ maximizing the regret of $\hat{\delta}_{\bm{w}_{\text{minimax}}}$ for each $C=0.1,1,2$. For the studies that $\bm{w}_{\text{minimax}}$ assigns zero weights, the corresponding values of $\tau_k$ does not affect the regret, so we set them to zero in the plot. For the studies receiving the positive weights, the worst-case $\tau_k$'s vary linearly in $x_k$ and increasing toward the worst-case $\tau_0$ with the value of slope equal to $C$. 
The right panel, Figure \ref{fig:lipschitz_regret_func}, shows the profiled regret functions for $\tau_0$ for the minimax regret, minimax MSE, and hierarchical Bayes rules at $C=1$. Similarly to Figure \ref{fig:linear_least_regret}, these rules have maximum regret at a similar value of $\tau_0$, while a notable difference is that the minimax regret rule no longer dominates the other two rules in terms of the profiled regret.

% Figure 5 %
\begin{figure}[htbp]
    \begin{tabular}{ccc}
      \begin{minipage}[t]{0.45\hsize}
        \centering
        \includegraphics[width=7.8cm]{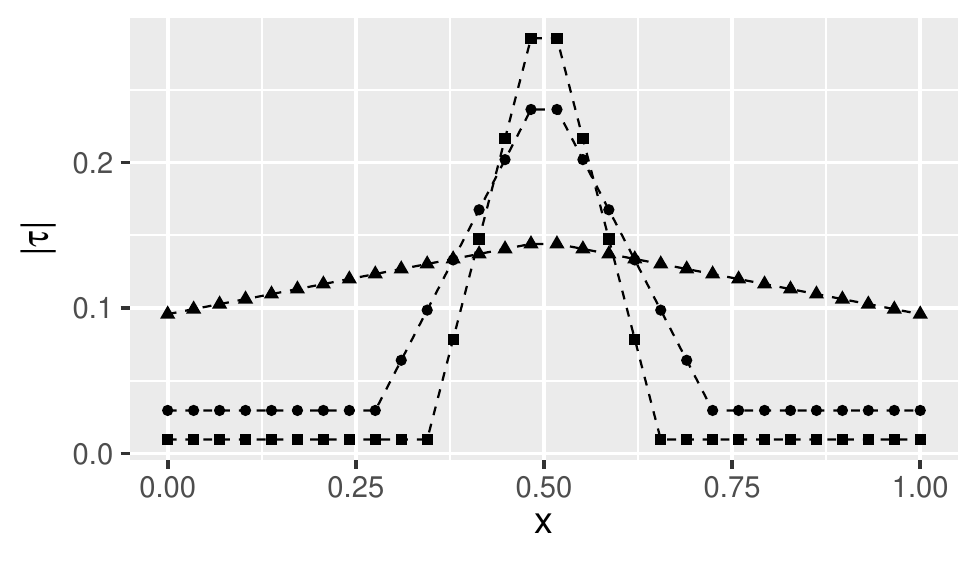}
        \subcaption{{\footnotesize The least-favorable values of $\bm{\tau}$.}}
        \label{fig:lipschitz_least}
      \end{minipage} &
      \begin{minipage}[t]{0.45\hsize}
        \centering
        \includegraphics[width=7.8cm]{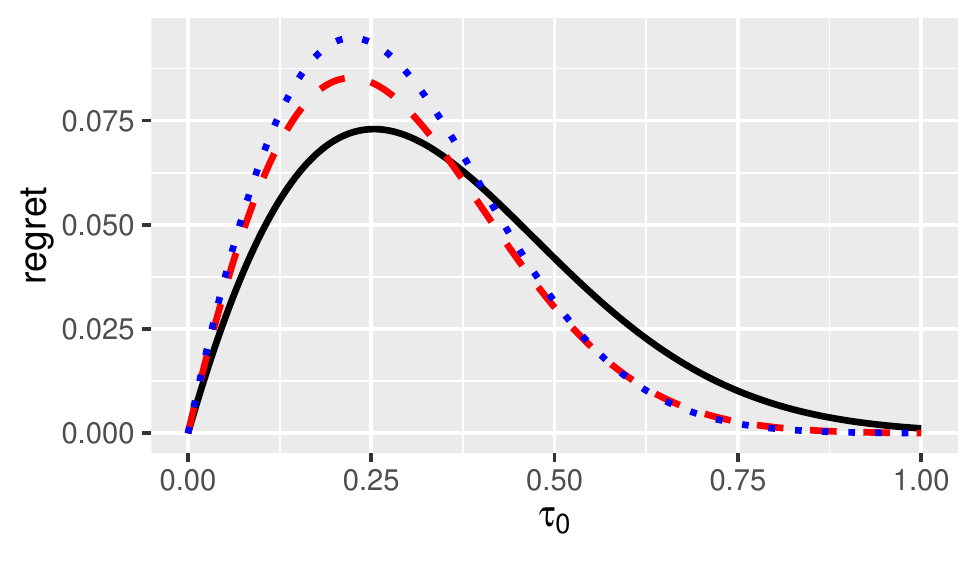}
        \subcaption{{\footnotesize The regret functions profiled for $\tau_0$.}}
        \label{fig:lipschitz_regret_func}
      \end{minipage}
      
    \end{tabular}
    \caption{{\footnotesize The left panel denotes the least-favorable values of $\bm{\tau}$ for the minimax regret rule when $\mathcal{T}=\mathcal{T}_2$. The circles, triangles, and squares denote the least-favorable values of $\bm{\tau}$ for the minimax regret rule when $C=0.1$, $1$, and $2$, respectively. The right panel denotes the profiled regret functions $R(t) \equiv \max_{\bm{\tau} \in \mathcal{T}_2: \tau_0 = t} R(\bm{\tau},\hat{\delta})$. The black solid, red dashed, and blue dotted lines denote the profiled regret functions $R(t)$ for the minimax regret, minimax MSE, and hierarchical Bayes rules when $C=1$.}}\label{fig:lipschitz_numerical}
\end{figure}

%%%%%%%%%%%%%%%%%%%%%%%%%%%%%%%%%%%%%%%%%%%
\section{Empirical application: Active labor market policies}\label{sec:empirical}

%We illustrate the use of our methods by means of two applications. The first application considers whether an active labor market policy should be adopted, and the second application considers whether a COVID-19 treatment should be approved.

%\subsection{Active Labor Market Policies}\label{subsec:active_labor}

We use the meta-database of \cite{card2017works}, which contains the estimates from over 200 recent studies of active labor market programs including training, subsidized employment, and job search assistance. We focus on papers that analyze RCT data to assess the impact of job training on the employment rate. This criterion reduces the meta-sample to 14 RCT estimates ($K=14$) collected from 8 different countries: Argentina, Brazil, Colombia, Dominica, Jordan, Nicaragua, Turkey, and the United States. Table \ref{tab:active_labor_list} lists the papers included in the meta-sample of this application. In this analysis, we denote the 14 RCT estimates by $\hat{\tau}_k$ and their standard errors by $\sigma_k$.

To form a vector of study characteristics $x_k$, $k=1,2,\dots,K$, we use five covariates that characterize the country and the sub-population on which the RCT study was performed. These are a gender dummy (male only $=$ 0, female only $=$ 1, mixed $=$ 0.5), an age dummy (age $<$ 25 only $=$ 1, age $\geq$ 25 only $=$ 0, both $=$ 0.5), an OECD dummy, the (standardized) GDP growth rate, and the (standardized) unemployment rate in 2010. Table \ref{tab:active_labor_estimates} shows the estimates, standard errors, and study characteristics in this meta-sample.

We consider whether the training program should be adopted in the following three target populations:
\begin{itemize}
\item Japan (in 2010): female, age $\geq 25$, OECD, $\Delta$GDP $=4.2$, unemp. $=5.1$
\item The UK (in 2010): female, age $\geq 25$, OECD, $\Delta$GDP $= 1.7$, unemp. $= 7.8$
\item Peru (in 2010): female, age $\geq 25$, not OECD, $\Delta$GDP $= 8.3$, unemp. $= 7.9$
\end{itemize}
Therefore, if the target population is Japan, the study characteristics of the target population are $x_0 = (1,0,1,4.2,5.1)'$.

%% Table 3 %%
\begin{table}[H]
\caption{The list of studies used in the job training program application.}
\begin{center}
\begin{tabular}{c l l}
  \hline
  No. & Paper & Country \\ \hline \hline
  1 & \cite{alzua2016long} & Argentina \\
  2 & \cite{attanasio2011subsidizing} & Colombia \\
  3 & \cite{calero2017can} & Brazil \\
  4 & \cite{fairlie2015behind} & United States \\
  5 & \cite{groh2012soft} & Jordan \\
  6 & \cite{hirshleifer2016impact} & Turkey \\
  7 & \cite{ibarraran2014life} & Dominica \\
  8 & \cite{macours2013demand} & Nicaragua \\
  \hline
\end{tabular}
\end{center}\label{tab:active_labor_list}
\end{table}

%% Table 4 %%
\begin{table}[H]
\caption{Estimates, standard errors, and study characteristics.}
\begin{center}
\begin{tabular}{c c c c c c c c c}
  \hline
  No. & Country & Estimates & S.E. & Gender & Age & OECD & GDP & Unemployment \\ \hline \hline
  1a & Argentina & 0.245 & 0.073 & male & both & no & 9.02 & 7.5 \\
  1b & Argentina & 0.005 & 0.039 & female & both & no & 9.02 & 7.5 \\
  2a & Colombia & -0.025 & 0.022 & male & $<$ 25 & no & 4.71 & 11.3 \\
  2b & Colombia & 0.061 & 0.024 & female & $<$ 25 & no & 4.71 & 11.3 \\
  3 & Brazil  & 0.129 & 0.069 & both & $<$ 25 & no & 0.87 & 6.9 \\
  4 & US & 0.071 & 0.046 & both & both & yes & 3.31 & 5.6 \\
  5 & Jordan & 0.015 & 0.032 & female & $<$ 25 & no & 2.31 & 12.5 \\
  6a & Turkey & 0.069 & 0.029 & male & $\geq$ 25 & yes & 6.72 & 10.3 \\
  6b & Turkey & 0.023 & 0.020 & female & $\geq$ 25 & yes & 6.72 & 10.3 \\
  6c & Turkey & 0.011 & 0.025 & female & $<$ 25 & yes & 6.72 & 10.3 \\
  6d & Turkey & -0.021 & 0.030 & male & $<$ 25 & yes & 6.72 & 10.3 \\
  7a & Dominica & 0.007 & 0.024 & female & $<$ 25 & no & 5.49 & 13.8 \\
  7b & Dominica & -0.024 & 0.023 & male & $<$ 25 & no & 5.49 & 13.8 \\
  8 & Nicaragua & 0.038 & 0.020 & both & both & no & 4.36 & 5.5 \\
  \hline
\end{tabular}
\end{center}\label{tab:active_labor_estimates}
\end{table}

We derive the minimax regret and minimax MSE rules with the following parameter space:
\begin{equation}
\mathcal{T}_C \ \equiv \ \left\{ \bm{\tau} : |\tau_k - \tau_l| \leq C \|x_k-x_l\| \ \text{for $k,l = 0, 1, \cdots, K$}  \right\}, \label{Tau_C}\nonumber
\end{equation}
with a prespecified Lipschitz constant $C \geq 0$. To compute our linear aggregation rule, we need to specify the value of $C$. If prior information about $\bm{\tau}$ is available and enables an appropriate choice of $C$, then $C$ can be selected using that information. However, if such information is not available, $C$ can be chosen in a data-driven manner, although this approach lacks theoretical justification.

In this application, we select $C$ using the following two data-driven methods; leave-one-out cross-validation and marginal likelihood maximization. In leave-one-out cross-validation, one first removes a single study, assigns the treatment based on the remaining studies, and computes the welfare of this decision using the estimated ATE from the omitted study. This procedure is repeated for all studies, and $C$ is chosen to maximize the resulting welfare. In marginal likelihood maximization, a prior distribution on $\bm{\tau}$ with support $\mathcal{T}_C$ is assumed, and $C$ is chosen to maximize the marginal likelihood of the data. For details on these two methods, see \ref{subsec:choice_C}. Leave-one-out cross-validation selects $C=0.025$, while marginal likelihood maximization selects $C=0.035$,  $0.045$, $0.040$ for the target populations of Japan, the UK, and Peru, respectively. For $C=0.025$, leave-one-out cross-validation correctly predict the sign of $\hat{\tau}_k$ in 11 out of the 14 studies. In what follows, we primarily base our analysis on the value selected by cross-validation.

% Figure 6 %
\begin{figure}[htbp]
    \begin{tabular}{ccc}
      \begin{minipage}[t]{0.45\hsize}
        \centering
        \includegraphics[width=8.0cm]{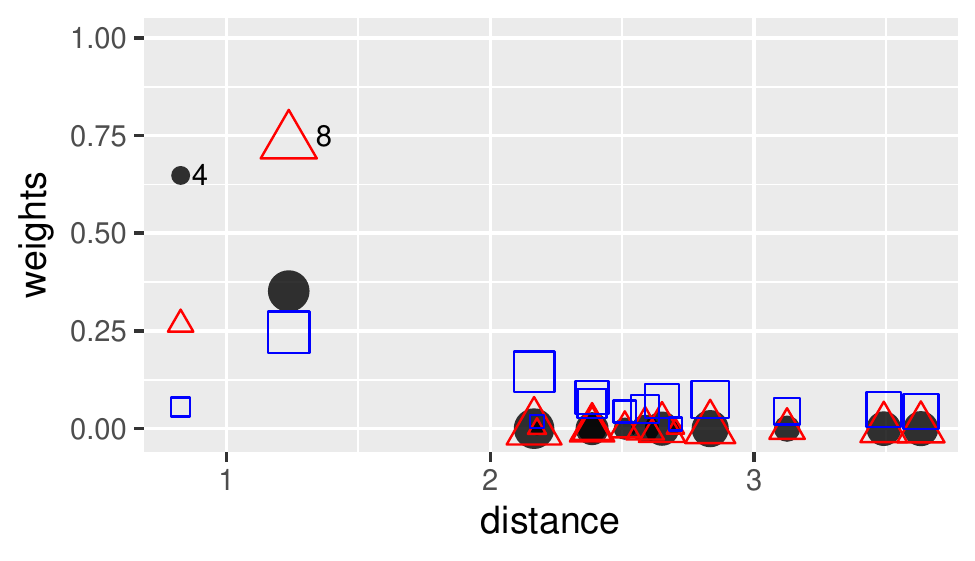}
        \subcaption{{\footnotesize Japan.}}
        \label{fig:JPN}
      \end{minipage} &
      \begin{minipage}[t]{0.45\hsize}
        \centering
        \includegraphics[width=8.0cm]{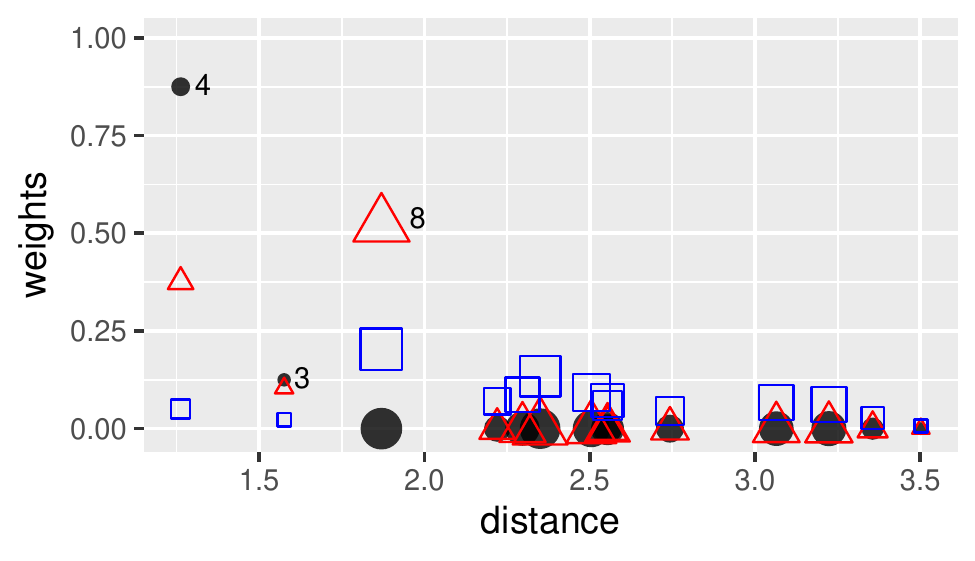}
        \subcaption{{\footnotesize The UK.}}
        \label{fig:UK}
      \end{minipage} \\
   
      \begin{minipage}[t]{0.45\hsize}
        \centering
        \includegraphics[width=8.0cm]{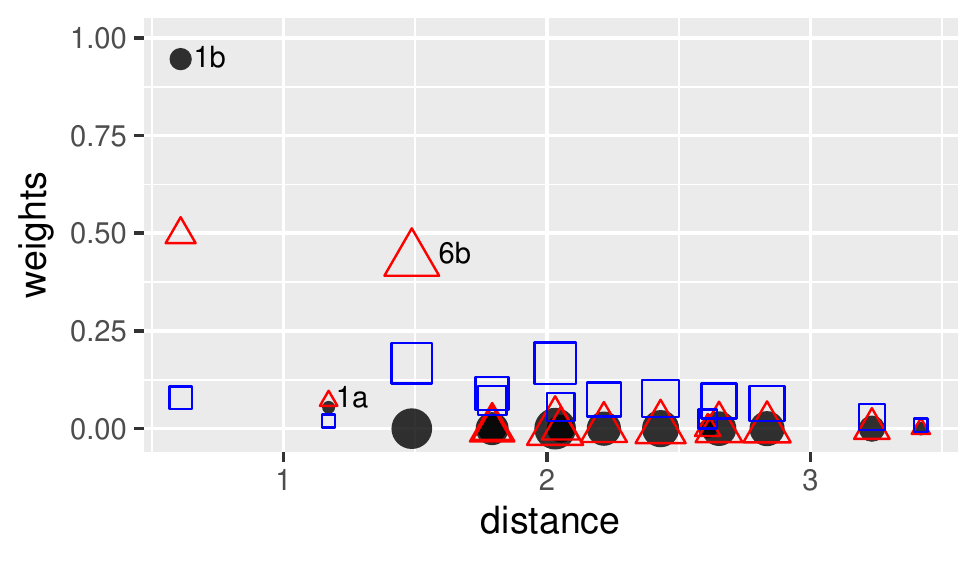}
        \subcaption{{\footnotesize Peru.}}
        \label{fig:PER}
      \end{minipage} &
      
    \end{tabular}
    \caption{{\footnotesize The minimax regret, minimax MSE, and hierarchical Bayes weights when $C=0.025$. The black circles, red triangles, and blue squares denote $\bm{w}_{\text{minimax}}$, $\bm{w}_{\text{MSE}}$, and $\bm{w}_{\text{HB}}$, respectively. The horizontal axis measures the Euclidean distance between $x_k$ and $x_0$, $\|x_k - x_0\|$. The size of the plotted marker is proportional to the precision of the estimates, i.e.,  a smaller $\hat{\sigma}_k$ corresponds to a larger marker.}}\label{fig:weights_ALMP}
\end{figure}

%% Table 5 %%
\begin{table}[H] 
\caption{The minimax regret, minimax MSE, and hierarchical Bayes rules when $C=0.025$.}
\begin{center}
\begin{tabular}{l c c c}
  \hline
  & Japan & UK & Peru \\ \hline \hline
  $\hat{\tau}_0$ with $\bm{w}_{\text{minmax}}$ & 0.059 & 0.078 & 0.018 \\
  $\hat{\tau}_0$ with $\bm{w}_{\text{MSE}}$ & 0.046 & 0.059 & 0.031 \\
  $\hat{\tau}_0$ with $\bm{w}_{\text{HB}}$ & 0.029 & 0.026 & 0.025 \\
  ratio (MSE) & 1.107 & 1.215 & 1.180 \\
  ratio (HB) & 3.131 & 2.580 & 3.356 \\
  $w_{\text{minmax},k} \geq 1/K$ & Nicaragua, US  & Brazil, US & Argentina \\ \hline
\end{tabular}
\label{tab:emp_application_regret}
\end{center}
\end{table}

Using $\mathcal{T}_C$ as the parameter space, we derive the minimax regret, minimax MSE, and hierarchical Bayes rules for each target population. Figures \ref{fig:JPN}--\ref{fig:PER} plot $\bm{w}_{\text{minimax}}$, $\bm{w}_{\text{MSE}}$, and $\bm{w}_{\text{HB}}$ for the three different target populations when $C=0.025$. Similar to Section 4, the hierarchical Bayes rule uses the prior $\bm{\tau} \ \sim \ \mathcal{N}\left( \bm{0}, \bm{\Sigma}_{\bm{\tau}} \right)$ with $\bm{\Sigma}_{\bm{\tau}}[k,l] = \exp (-\|x_{k-1} - x_{l-1}\| / a)$ and we choose $a$ satisfying $\frac{1}{K(K+1)/2} \sum_{k < l} P(|\tau_k - \tau_l| > C \|x_{k} - x_{l}\|) = 0.05$. The horizontal axis measures the Euclidean distance between $x_k$ and $x_0$. The size of the plotted symbol is proportional to the precision of the estimates, i.e., a smaller $\hat{\sigma}_k$ corresponds to a larger symbol. The figures show that, overall, both $\bm{w}_{\text{minimax}}$ and $\bm{w}_{\text{MSE}}$ tend to put greater weight on those studies that are in similar in terms of their population characteristics. This tendency is more evident for the minimax regret weights $\bm{w}_{\text{minimax}}$ than for the minimax MSE weights $\bm{w}_{\text{MSE}}$. 

We note that $\bm{w}_{\text{minimax}}$ differs from $\bm{w}_{\text{MSE}}$ for every target population. In all cases, the minimax regret criterion puts the most weight on the closest study. In contrast, the minimax MSE criterion can put the largest weight on a study that is not closest provided that it has a small standard error. For instance, in the case of Japan, the minimax regret weight of the closest study is more than 0.6 but the minimax MSE weight is about 0.3. These results reflect the different degrees of bias variance trade-off that the minimax regret and minimax-MSE weights aim to balance out, as discussed in Section 3.1. When we use the values of $C$ selected by marginal likelihood maximization, the weight assigned to the closest study is approximately one in all cases. In contrast, as in the case of $C=0.025$, the other criteria assign substantial weights to studies that are not the closest.

Table \ref{tab:emp_application_regret} lists $\hat{\tau}_0(\bm{w}_{\text{minimax}})$, $\hat{\tau}_0(\bm{w}_{\text{MSE}})$, $\hat{\tau}_0(\bm{w}_{\text{HB}})$, the ratio of the maximum regrets, and the countries that were awarded minimax regret weights larger than $1/K = 1/14$. The table also shows that the minimax regret and minimax MSE rules select different decisions in some cases. For example, the average annual salary amongst Japanese women aged 25--29 years is approximately \$30,000; if the cost per person of adopting the policy is \$1,500 and individuals that start a new job work for one year, we could set $c_0 = 0.05$.\footnote{According to \cite{fairlie2015behind}, the cost per person of implementing the program in the United States is \$1,321.} Then, the recommendation of the minimax regret criterion is to introduce the policy in Japan. However, the minimax MSE criterion does not recommend the introduction of the policy in Japan.

For all of the target populations, the maximum regret of the minimax MSE rule is more than 10 percent greater than that of the minimax regret. For Japan and the UK, the minimax regret aggregation rule puts the most weight on the estimates of the US. In contrast, for Peru, the minimax regret criterion puts most of the weight on one estimate obtained from Argentina.

\Copy{global_optimal_empirical}{
We compare the proposed linear aggregation rule with the global minimax regret rule. Let $k^{\ast}$ be the closest study, defined as the one that minimizes the distance $\|x_k-x_0\|$. Table \ref{tab:active_labor_rho} reports the values of $C\|x_{k^{\ast}}-x_0\| / \sigma_{k^{\ast}}$ when $C$ is selected by cross-validation or marginal likelihood maximization. Table \ref{tab:active_labor_rho} shows that the values of $C\|x_{k^{\ast}}-x_0\| / \sigma_{k^{\ast}}$ are below $\sqrt{\pi / 2} \simeq 1.25$ in all cases. It follows from Theorem \ref{thm:global} that the proposed linear aggregation rule attains the global minimax regret. Even if $C$ is doubled from its current value, Theorem \ref{thm:global} implies that the maximum regret of the proposed linear aggregation rule is only 10 percent worse than that of the global minimax regret rule. Therefore, in this application, restricting the rule to non-randomized linear aggregation rules does not result in a substantial increase in the maximum regret.}

%% Table 6 %%
\begin{table}[H]
\caption{The values of $C\|x_{k^{\ast}}-x_0\| / \sigma_{k^{\ast}}$.}\label{tab:active_labor_rho}
\begin{center}
\begin{tabular}{l c c }
  \hline
  & Cross-validation & Marginal likelihood \\ \hline \hline
  Japan & 0.446 & 0.624 \\
  UK & 0.680 & 1.224 \\
  Peru & 0.400 & 0.633 \\ \hline
\end{tabular}
\end{center}
\end{table}

%\textcolor{red}{Because the minimax regret rule minimizes the worst-case regret, it performs well when the true $\bm{\tau}$ is close to the worst-case configuration. However, when the true $\bm{\tau}$ is far from the worst case, the regret of the proposed rule may be larger than that of other rules.}

%%%%%%%%%%%%%%%%%%%%%%%%%%%%%%%%%%%%%%%%%%%
\section{Conclusion}\label{sec:conclusion}
Motivated by the recently proposed paradigm of `patient-centered meta-analysis' (\citet{manski2020towards}), this paper develops a method to aggregate available evidence and inform optimal treatment choice for a target population that is of interest to the planner. Building upon the framework of statistical decision theory and adopting the minimax regret criterion, we obtain a minimax regret treatment choice rule that is simple to implement in practice. The key steps of our analysis that deliver analytical and computational tractability are to
constrain decision rules to the class of linear aggregation rules and to restrict the
parameter space to a symmetric and invariant one (Assumption \ref{parameter_space}). These conditions for the parameter space are mild and hold in numerous contexts.  

Several questions remain unanswered. First, when $\bm{\tau}$ is constrained to Lipschitz vectors while the Lipschitz constant $C$ is unknown, we do not know what is a theoretically justifiable data-driven way to select $C$. \Copy{data-driven_C}{In the presented empirical applications, we selected $C$ by leave-one-out cross-validation and marginal likelihood maximization without any analytical justification for this choice. \cite{armstrong2025adapting} analyze the related settings in which the bound on the bias is unknown and propose an adaptive estimation method under the worst-case adaptation regret.} Second, our framework assumes away any publication bias of published estimates despite a growing concern in the scientific community about it and increasing interest in how to detect and correct for any such bias (see, e.g., \citet{AK19}). Third, other than the standard errors of the estimates, our framework does not offer any way to incorporate a measure of the credibility of reported estimates. Depending on how the data were sampled and what identifying assumptions the estimate relies on, the credibility of studies can vary greatly. How to incorporate a measure of the credibility of reported estimates beyond their standard errors remains an interesting open question. 
\Copy{distributional_aggregation}{Fourth, this paper considers the standard setting of meta-analysis where the researcher can access only to summary statistics of the outcomes and covariates. We leave for future research a setting where the researcher can access to the micro-data of each study and perform a treatment choice based on aggregation of distributional information or individualized treatment assignments exploiting individual-level observable characteristics.}
%\clearpage

\appendix

\renewcommand{\thesection}{Appendix \Alph{section}}

\renewcommand{\theequation}{A.\arabic{equation}}
\setcounter{equation}{0}

%%%%%%%%%%%%%%%%%%%%%%%%%%%%%%%%%%%%%%%%%%%
\section{Proofs and Lemmas}\label{sec:proof}

\begin{proof}[Proof of Theorem \ref{thm:finite}]
From Theorem \ref{thm:minimax}, if $b(\bm{w})$ is bounded for some $\bm{w}$, then the minimax regret is bounded. Without loss of generality, we assume that $|\tau_k-\tau_0|$ is bounded. Setting $w_k = 1$, we have $b(\bm{w}) = \max_{\tau \in \mathcal{T}, \tau_0 = 0} \left\{ \sum_{k=1}^K w_k (\tau_k-\tau_0) \right\} = \tau_k - \tau_0 < M$. Hence, the minimax regret is finite.
\end{proof} \vspace{0.15in}

%% Lemma 1 %%
\begin{Lemma}\label{lemma:1}
For any $v, a \geq 0$, we have
\begin{equation}
v\cdot\Phi(-v) + \Phi(-v)\cdot a \ \ \leq \ \ \eta(a) \ \ \leq \ \ \eta(0) + a. \label{Lemma 1}
\end{equation}
\end{Lemma}
\begin{proof}[Proof of Lemma \ref{lemma:1}]
We observe that
\begin{eqnarray}
\eta(a) &=& \max_{t \geq 0} \left\{ (t-a)\Phi \left(-(t-a) \right) + a \cdot \Phi \left(-(t-a) \right) \right\}. \nonumber
\end{eqnarray}
Because $(t-a)\Phi \left(-(t-a) \right) \leq \max_{t'\geq 0} \{t' \cdot \Phi(-t') \} = \eta(0)$, we have
$$
\eta(a) \ \leq \ \eta(0) + a.
$$
The lower bound is obtained by substituting $t = a + v$ for $t \cdot \Phi(-t+a)$.
\end{proof} \vspace{0.15in}

%% Lemma 2 %%
\begin{Lemma} \label{lemma:2}
For any $a \geq 0$, we have
\begin{equation}
\eta(0) \sqrt{1+a^2} \ \ \leq \ \ \eta(a) \ \ \leq \ \ \sqrt{1+a^2}. \label{eq:Lemma2}
\end{equation}
\end{Lemma}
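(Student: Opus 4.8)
The plan is to prove the two inequalities by different means. Throughout write $\phi$ for the standard normal density, and recall $\eta(a)=\max_{t\ge 0} t\,\Phi(a-t)$; since $t\,\Phi(a-t)$ vanishes at $t=0$ and as $t\to\infty$ and is unimodal, the maximum is attained at an interior $t^{*}$ solving the first-order condition $\Phi(a-t^{*})=t^{*}\phi(a-t^{*})$. I would obtain the lower bound by evaluating the objective at a well-chosen suboptimal point, and extract the upper bound from this first-order condition.

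For the lower bound, let $s_{0}$ be the maximizer of $t\,\Phi(-t)$ over $t\ge 0$, so that $\eta(0)=s_{0}\Phi(-s_{0})$. I first record that $s_{0}<1$: the derivative $\Phi(-t)-t\phi(t)$ of $t\,\Phi(-t)$ is positive at $t=0$ and negative at $t=1$ (because $\Phi(-1)<\phi(1)$), so unimodality places the maximizer in $(0,1)$. For each $a\ge 0$ I then use the feasible point $t_{0}=s_{0}\sqrt{1+a^{2}}\ge 0$, which gives $\eta(a)\ge t_{0}\,\Phi(a-t_{0})$. The crux is the elementary inequality $a+s_{0}\ge s_{0}\sqrt{1+a^{2}}$, equivalent after squaring to $a^{2}(1-s_{0}^{2})+2as_{0}\ge 0$, which holds since $s_{0}\le 1$ and $a\ge 0$. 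This yields $a-t_{0}\ge -s_{0}$, and monotonicity of $\Phi$ then delivers $\eta(a)\ge s_{0}\sqrt{1+a^{2}}\,\Phi(-s_{0})=\eta(0)\sqrt{1+a^{2}}$, with equality at $a=0$.

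For the upper bound, set $y=a-t^{*}$. The first-order condition $\Phi(y)=t^{*}\phi(y)$ gives $t^{*}=\Phi(y)/\phi(y)$, whence $\eta(a)=\Phi(y)^{2}/\phi(y)$ and $a=y+\Phi(y)/\phi(y)$. Writing $m=t^{*}=\Phi(y)/\phi(y)>0$ and substituting, the target $\eta(a)^{2}\le 1+a^{2}$ becomes nonnegativity, at this $m$, of the quadratic $Q(m)=(1-\Phi(y)^{2})\,m^{2}+2y\,m+(1+y^{2})$, since $1+a^{2}-\eta(a)^{2}=Q(m)$. If $y\ge 0$ this is immediate, as every coefficient is nonnegative and the constant term is positive. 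If $y<0$, then $Q$ is an upward parabola whose discriminant equals $4\,[\Phi(y)^{2}(1+y^{2})-1]$, so it suffices to prove $\Phi(y)^{2}(1+y^{2})\le 1$, which makes $Q$ nonnegative for every $m$.

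The main obstacle is therefore this single tail inequality, namely $1-\Phi(s)\le (1+s^{2})^{-1/2}$ with $s=-y\ge 0$. I would settle it by chaining the standard Gaussian tail bound $1-\Phi(s)\le \tfrac12 e^{-s^{2}/2}$ with $\tfrac12 e^{-s^{2}/2}\le (1+s^{2})^{-1/2}$; the latter squares to $(1+s^{2})e^{-s^{2}}\le 4$, i.e.\ $4e^{s^{2}}\ge 1+s^{2}$, which follows from $e^{s^{2}}\ge 1+s^{2}$. This closes the $y<0$ case, so $Q(m)\ge 0$ throughout and $\eta(a)\le\sqrt{1+a^{2}}$. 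I would also note in passing that the envelope identity combined with the first-order condition gives $\eta'(a)=\Phi(a-t^{*}(a))=\Phi(y)\in(0,1)$, reflecting that $\eta$ is strictly increasing with slope below one; this is consistent with the displayed shape of $\eta$ but is not needed for the present bounds.
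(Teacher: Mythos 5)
Your proof is correct, but it takes a genuinely different route from the paper's. The paper argues through the derivative of $\eta$: via the first-order condition and the envelope theorem it gets $\eta'(a)=\Phi(-t^*(a)+a)\in[0,1]$, shows $a-t^*(a)$ is increasing so that $\eta$ is convex, and then proves the upper bound piecewise (Lemma 1 on $[0,2]$, the slope bound $\eta'\le 1$ on $[2,\infty)$, using the numerical values $\eta(0)\approx 0.17$ and $\eta(2)\approx 1.051$) and the lower bound by comparing $\eta'(0)\ge\Phi(-0.75)\approx 0.227$ against the slope of $\eta(0)\sqrt{1+a^2}$, which never exceeds $\eta(0)$. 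You instead get the lower bound by evaluating the objective at the scaled feasible point $t_0=s_0\sqrt{1+a^2}$, reducing it to the elementary inequality $a^2(1-s_0^2)+2as_0\ge 0$, and the upper bound by converting the first-order condition into nonnegativity of the quadratic $Q(m)$, settled by a discriminant computation and the Gaussian tail chain $1-\Phi(s)\le\tfrac12 e^{-s^2/2}\le(1+s^2)^{-1/2}$; I checked the algebra in both steps and it is sound. What your route buys: closed-form arguments with essentially no numerical evaluation of $\eta$ (you only need $\Phi(-1)<\phi(1)$ to place $s_0$ in $(0,1)$), and exact rather than piecewise reasoning. What the paper's route buys: its byproducts --- monotonicity, convexity, $\eta'\le 1$, and the fact that $t^*(a)-a$ is decreasing --- are explicitly reused elsewhere in the text (the directional-derivative comparison of the minimax regret and minimax MSE criteria in Section 3.1, and the large-sample linear approximation of $\eta$ in Section 3.2), so if your proof replaced the paper's, those facts would have to be established separately. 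One small point to tighten: you assert unimodality of $t\mapsto t\,\Phi(a-t)$ without proof; as in the paper, it follows from strict monotonicity of the Mills ratio $\Phi(-x)/\phi(x)$, or, for your lower bound, you can bypass unimodality entirely since the bound $\Phi(-t)\le\phi(t)/t$ makes $t\,\Phi(-t)$ nonincreasing on $[1,\infty)$, which already confines the maximizer $s_0$ to $(0,1)$.
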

\begin{proof}[Proof of Lemma \ref{lemma:2}]
First, we show that the derivative of $\eta(a)$ is bounded from below by 0 and from above by 1. For $a \geq 0$, we define
$$
t^*(a) \equiv \text{arg} \max_{t \geq 0} \left\{ t \cdot \Phi(-t+a) \right\}.
$$
Then, by the first order condition, we have
\begin{equation}
t^*(a)\cdot \phi\left( -t^*(a)+a \right) \ = \ \Phi\left( -t^*(a)+a \right). \label{lem_2_1}
\end{equation}
It follows from the envelope theorem that
\begin{equation}
\eta'(a) \ = \ t^*(a)\cdot \phi\left( -t^*(a)+a \right). \label{lem_2_2}
\end{equation}
Hence, from (\ref{lem_2_1}) and (\ref{lem_2_2}), we obtain $\eta'(a) \ = \ \Phi\left( -t^*(a)+a \right)$, which implies $0 \leq \eta'(a) \leq 1$.

Next, we show the right-most inequality of (\ref{eq:Lemma2}). For $0 \leq a \leq 2$, we have $\eta(a) \leq 0.17 + a \leq \sqrt{1+a^2}$.  Because $\eta'(a)$ is bounded from below by 0 and from above by 1, we have $\eta(a) \leq (a-2)+\eta(2)$ for $a \geq 2$. From numerical evaluation, we obtain $\eta(2) \risingdotseq 1.051$, and hence $\eta(a) \leq a-0.5 \leq  \sqrt{1+a^2}$ for $a \geq 2$.

Finally, we show the left-most inequality of (\ref{eq:Lemma2}). From (\ref{lem_2_1}), $t^*(a)$ is a solution to the following equation:
$$
t \ = \ \cfrac{\Phi(-(t-a))}{\phi(t-a)},
$$
where $\Phi(-x)/\phi(x)$ is the Mills ratio of the standard normal distribution.  Because we know that $\Phi(-x)/\phi(x)$ is a strictly decreasing function, we find that $t \mapsto t  \cdot \Phi(-t+a)$ is a uni-modal function. For any $d>0$, we observe that
\begin{eqnarray}
\left. \frac{\partial}{\partial t} \left\{ t  \cdot \Phi(-t+(a+d)) \right\} \right|_{t=t^*(a)+d} &=& \Phi(-t^*(a)+a) - (t^*(a)+d)\cdot \phi(-t^*(a)+a) \nonumber \\
&=& -d \cdot \phi(-t^*(a)+a) \ < \ 0, \nonumber
\end{eqnarray}
where the second equality follows from (\ref{lem_2_1}). Because $t \mapsto t  \cdot \Phi(-t+a)$ is uni-modal, we obtain $t^*(a+d) < t^*(a)+d$ for any $d>0$. This implies that $-t^*(a)+a$ is strictly increasing in $a$. Moreover, since $\eta'(a) \ = \ \Phi\left( -t^*(a)+a \right)$ is strictly increasing, $\eta(a)$ is convex. Because we have $\frac{d}{dt} \{ t \cdot \Phi(-t) \} \big|_{t=0.8} = \Phi(-0.8) - 0.8 \phi(-0.8) < 0$, we obtain $t^{\ast}(0) < 0.8$. This implies that $\eta'(0) = \Phi \left( - t^{\ast}(0) \right) > \Phi(-0.8) \risingdotseq 0.212 > \eta(0)$. Because we have $\frac{d}{da} \left\{ \eta(0)\sqrt{1+a^2} \right\} \leq \eta(0)$, we obtain the left inequality of (\ref{eq:Lemma2}).
\end{proof} \vspace{0.15in}

\begin{proof}[Proof of Theorem \ref{thm:regret_bound}]
The first lower and upper bounds follow directly from Theorem \ref{thm:minimax} and Lemma \ref{lemma:1}. From Theorem \ref{thm:minimax} and Lemma \ref{lemma:2}, we have
\begin{eqnarray}
\max_{\bm{\tau} \in \mathcal{T}} R(\bm{\tau},\hat{\delta}_{\bm{w}}) &=& s(\bm{w}) \eta \left( b(\bm{w})/s(\bm{w}) \right) \nonumber \\
& \leq & s(\bm{w}) \times \sqrt{1+b^2(\bm{w})/s^2(\bm{w})} \nonumber \\
&=& \sqrt{b^2(\bm{w}) + s^2(\bm{w})}. \nonumber
\end{eqnarray}
Similarly, using the lower bound of Lemma \ref{lemma:2}, we obtain 
$$
\eta(0) \cdot \sqrt{b^2(\bm{w}) + s^2(\bm{w})} \ \leq \ \max_{\bm{\tau} \in \mathcal{T}} R(\bm{\tau},\hat{\delta}_{\bm{w}}).
$$
This completes the proof.
\end{proof} \vspace{0.15in}

\begin{proof}[Proof of Theorem \ref{thm:MSE}]
Because $\bm{w}_{\text{minimax}}$ minimizes the maximum regret, we have
$$
\cfrac{\max_{\bm{\tau} \in \mathcal{T}} R(\bm{\tau},\hat{\delta}_{\bm{w}_{\text{minimax}}})}{\max_{\bm{\tau} \in \mathcal{T}} R(\bm{\tau},\hat{\delta}_{\bm{w}_{\text{MSE}}})} \ \leq \ 1.
$$

Next, we show the right-most inequality of (\ref{Theorem 1}). We observe that
\begin{eqnarray}
\max_{\bm{\tau} \in \mathcal{T}} R(\bm{\tau},\hat{\delta}_{\bm{w}_{\text{MSE}}}) & = & s(\bm{w}_{\text{MSE}}) \eta \left( b(\bm{w}_{\text{MSE}})/s(\bm{w}_{\text{MSE}}) \right) \nonumber \\
& \leq & s(\bm{w}_{\text{MSE}}) \sqrt{1 + b^2(\bm{w}_{\text{MSE}})/s^2(\bm{w}_{\text{MSE}})} \nonumber \\
& = & \sqrt{b^2(\bm{w}_{\text{MSE}}) + s^2(\bm{w}_{\text{MSE}})}, \nonumber 
\end{eqnarray}
where this inequality follows from the upper bound of Lemma \ref{lemma:2}. Because $\bm{w}_{\text{MSE}}$ minimizes the maximum mean squared error, we obtain
\begin{eqnarray}
\sqrt{b^2(\bm{w}_{\text{MSE}}) + s^2(\bm{w}_{\text{MSE}})} & \leq & \sqrt{b^2(\bm{w}_{\text{minimax}}) + s^2(\bm{w}_{\text{minimax}})}. \nonumber 
\end{eqnarray}
Using the lower bound of Lemma \ref{lemma:2}, we have 
\begin{eqnarray}
& & \sqrt{b^2(\bm{w}_{\text{minimax}}) + s^2(\bm{w}_{\text{minimax}})} \nonumber \\
& = & s(\bm{w}_{\text{minimax}}) \sqrt{1 + b^2(\bm{w}_{\text{minimax}})/s^2(\bm{w}w_{\text{minimax}})} \nonumber \\
& \leq & \frac{1}{\eta(0)} \times s(\bm{w}_{\text{minimax}}) \eta \left( b(\bm{w}_{\text{minimax}})/s(\bm{w}_{\text{minimax}}) \right) \nonumber \\
& = & \frac{1}{\eta(0)} \times \max_{\bm{\tau} \in \mathcal{T}} R(\bm{\tau},\hat{\delta}_{\bm{w}_{\text{minimax}}}). \nonumber
\end{eqnarray}
Hence, we obtain the right-most inequality of (\ref{Theorem 4}).
\end{proof} \vspace{0.15in}

\begin{proof}[Proof of Theorem \ref{thm:global}]
Since it follows from the definition of $\rho$ that we have $\rho \geq 1$, we consider the upper bound of $\rho$. Since \cite{montiel2023decision} and \cite{yata2021optimal} show that the minimax regret rule takes the form $\hat{\delta}_{\bm{w}}$ when $b_1 / \sigma_1 \leq \sqrt{\pi / 2}$, we have $\rho = 1$ when $b_1 / \sigma_1 \leq \sqrt{\pi / 2}$. Henceforth, we consider the case where $b_1 / \sigma_1 > \sqrt{\pi / 2}$. Define 
$$
(\sigma_Z^{\ast},\bm{w}^{\ast}) \in \mathrm{arg} \min_{\sigma_Z, \bm{w}} \max_{\bm{\tau} \in \mathcal{T}} R(\bm{\tau}, \hat{\delta}_{\sigma_Z, \bm{w}}).
$$
We observe that
\begin{eqnarray*}
\rho &=& \frac{\min_{\bm{w}} \max_{\bm{\tau} \in \mathcal{T}} R(\bm{\tau}, \hat{\delta}_{\bm{w}})}{\min_{v, \bm{w}} \max_{\bm{\tau} \in \mathcal{T}} R(\bm{\tau}, \hat{\delta}_{v, \bm{w}})} \\
& \leq & \frac{\max_{\bm{\tau} \in \mathcal{T}} R(\bm{\tau}, \hat{\delta}_{\bm{w}^{\ast}})}{\max_{\bm{\tau} \in \mathcal{T}} R(\bm{\tau}, \hat{\delta}_{\sigma_Z^{\ast}, \bm{w}^{\ast}})} \ = \ \frac{s(0,\bm{w}^{\ast}) \cdot \eta \left( \frac{b(\bm{w}^{\ast})}{s(0,\bm{w}^{\ast})} \right)}{s(\sigma_Z^{\ast},\bm{w}^{\ast}) \cdot \eta \left( \frac{b(\bm{w}^{\ast})}{s(\sigma_Z^{\ast},\bm{w}^{\ast})} \right)}.
\end{eqnarray*}
It follows from \cite{montiel2023decision} and \cite{yata2021optimal} that when $b_1 / \sigma_1 > \sqrt{\pi / 2}$, we have
\[
\sigma_Z^{\ast} = \sqrt{2b_1^2/\pi - \sigma_1^2}, \ \ \bm{w}^{\ast} = (1, 0, \ldots, 0)'.
\]
This implies that
\begin{eqnarray}
\rho & \leq & \frac{\sigma_1 \cdot \eta\left( b_1 / \sigma_1 \right)}{\sqrt{2/\pi} \cdot b_1 \cdot \eta\left( \sqrt{\pi / 2} \right)} \ = \ \frac{\sqrt{\pi / 2}}{\eta\left( \sqrt{\pi / 2} \right)} \cdot \frac{\eta\left( b_1 / \sigma_1 \right)}{b_1 / \sigma_1}. \label{rho_ineq1}
\end{eqnarray}

We investigate the properties of the function $a \mapsto \eta(a) / a$. Since we have $\frac{d}{da}\{\eta(a)/a\} = \frac{a\eta'(a)-\eta(a)}{a^2}$, the sign of $\frac{d}{da}\{\eta(a)/a\}$ is determined by the sign of $a\eta'(a)-\eta(a)$. From the proof of Lemma \ref{lemma:2}, we obtain
\begin{eqnarray*}
a\eta'(a)-\eta(a) &=& a \cdot \Phi\left( -t^{\ast}(a)+a \right) - t^{\ast}(a) \cdot \Phi\left( -t^{\ast}(a)+a \right) \\
&=& \Phi\left( -t^{\ast}(a)+a \right) \cdot \left\{ - t^{\ast}(a) + a \right\},
\end{eqnarray*}
where the function $-t^{\ast}(a)+a$ is strictly increasing in $a$. It follows from (\ref{lem_2_1}) that $t^{\ast}(a)$ solves the equation $t - m(t-a) = 0$, where $m(x) \equiv \frac{\Phi(-x)}{\phi(x)}$ is the Mills ratio. Since $m'(x) < 0$ for all $x$, we obtain $\frac{\partial}{\partial t} \left\{ t - m(t-a) \right\} > 0$ for all $t$ and $a$. Hence, by the implicit function theorem, $t^{\ast}(a)$ is continuous in $a$. This implies that the function $a\eta'(a)-\eta(a)$ must have at most one intersection with the horizontal axis. Since $t \mapsto t \cdot \Phi(-t+a)$ is a unimodal function and we have $\frac{d}{dt} \{ t \cdot \Phi(-t+2) \} \big|_{t=2} = \Phi(0) - 2 \cdot \phi(0) < 0$, we have $-t^{\ast}(2)+2 > 0$. Furthermore, because $t^{\ast}(0)$ is strictly positive, $-t^{\ast}(a)+a$ is strictly negative when $a$ is sufficiently close to zero. Therefore, the function $a\eta'(a)-\eta(a)$ has exactly one zero. Let $a^{\ast}$ denote the unique value satisfying $a\eta'(a)-\eta(a)=0$. It follows that $\frac{\eta(a)}{a}$ is increasing on $(a^{\ast}, \infty)$.

Next, we show $a^{\ast} = \sqrt{\pi/2}$ and $\eta(a^{\ast})/ a^{\ast} = \frac{1}{2}$. From the above argument, $a^{\ast}$ satisfies $\frac{d}{da}\{\eta(a)/a\} \big|_{a=a^{\ast}} = 0$, which implies $t^{\ast}(a^{\ast})=a^{\ast}$ from the above argument. Furthermore, it follows from the first-order condition that $t^{\ast}(a)$ satisfies
\[
\Phi \left( -t^{\ast}(a) + a \right) - t^{\ast}(a) \cdot \phi\left( -t^{\ast}(a) + a \right) \ = \ 0.
\]
This implies that $\Phi(0) - a^{\ast} \phi(0) = 0$, which implies $a^{\ast} = \Phi(0)/\phi(0) =\sqrt{\pi / 2}$. Hence, we obtain 
$$
\frac{\eta(a^{\ast})}{a^{\ast}} \ = \ \frac{t^{\ast}(a^{\ast}) \cdot \Phi\left( -t^{\ast}(a) + a \right)}{a^{\ast}} \ = \ \Phi(0) \ = \ \frac{1}{2}. 
$$
As a result, it follows from (\ref{rho_ineq1}) that when $b_1 / \sigma_1 > \sqrt{\pi/2}$, we have $\rho \leq \frac{2 \eta(b_1/\sigma_1)}{b_1/\sigma_1}$ and the upper bound $\overline{\rho}(b_1/\sigma_1)$ is increasing in $b_1/\sigma_1$. Since Lemma \ref{Lemma 1} implies $\eta(a)/a \leq \frac{\eta(0)+a}{a}$ and $\eta(a)/a$ is increasing on $(a^{\ast},\infty)$, $\sup_{a \in (a^{\ast},\infty)} \{ \eta(a) / a \} \leq \lim_{a \to \infty} \left\{ \frac{\eta(0)+a}{a} \right\} = 1$. Therefore, $\overline{\rho}(b_1/\sigma_1)$ is bounded above by $2$.
\end{proof}

\renewcommand{\theequation}{B.\arabic{equation}}
\setcounter{equation}{0}
%%%%%%%%%%%%%%%%%%%%%%%%%%%%%%%%%%%%%%%%%%%
\section{Technical derivations}\label{sec:derivation}

\subsection{Derivations in Section \ref{subsec:MSE}}\label{subsec:derivation_MSE}

In this section, we show that $Q_{\bm{\theta}}\left( \bm{w}_{\mathrm{MSE}} \right) < 0$ when $b_{\bm{\theta}}'(\bm{w}_{\mathrm{MSE}}) < 0$ and $s_{\bm{\theta}}'(\bm{w}_{\mathrm{MSE}}) > 0$. Let $t^*(a)$ be the maximizer of $t \cdot \Phi(-t+a)$. Then, by the proof of Lemma \ref{lemma:2}, we have
\begin{eqnarray}
Q_{\bm{\theta}}(\bm{w}) &=& \Phi\left( -t^*\left( \frac{b(\bm{w})}{s(\bm{w})} \right) + \frac{b(\bm{w})}{s(\bm{w})} \right) \times \left\{ s_{\bm{\theta}}'(\bm{w}) \left( t^*\left( \frac{b(\bm{w})}{s(\bm{w})} \right) - \frac{b(\bm{w})}{s(\bm{w})} \right) + b_{\bm{\theta}}'(\bm{w}) \right\}. \nonumber
\end{eqnarray}
Here, the sign of $Q_{\bm{\theta}}(\bm{w})$ is determined by
$$
s_{\bm{\theta}}'(\bm{w}) \left( t^*\left( \frac{b(\bm{w})}{s(\bm{w})} \right) - \frac{b(\bm{w})}{s(\bm{w})} \right) + b_{\bm{\theta}}'(\bm{w}),
$$
where $t^*(a) - a$ is decreasing in $a$ as shown in the proof of Lemma \ref{lemma:2}. Similarly, the sign of the directional derivative of the maximum MSE, $b^2(\bm{w})+s^2(\bm{w})$, is determined by
$$
s_{\bm{\theta}}'(\bm{w}) \left( \frac{b(\bm{w})}{s(\bm{w})} \right)^{-1} + b_{\bm{\theta}}'(\bm{w}).
$$
Suppose that $b_{\bm{\theta}}'(\bm{w}) < 0$ and $s_{\bm{\theta}}'(\bm{w}) > 0$, that is, we face the bias-variance tradeoff. Then, because numerical evaluation implies $t^*(a)-a < a^{-1}$ for $a \geq 0$, we obtain
$$
s_{\bm{\theta}}'(\bm{w}) \left( t^*\left( \frac{b(\bm{w})}{s(\bm{w})} \right) - \frac{b(\bm{w})}{s(\bm{w})} \right) + b_{\bm{\theta}}'(\bm{w}) \ < \ s_{\bm{\theta}}'(\bm{w}) \left( \frac{b(\bm{w})}{s(\bm{w})} \right)^{-1} + b_{\bm{\theta}}'(\bm{w}).
$$
When $\bm{w} = \bm{w}_{\mathrm{MSE}}$, the right-hand side must be zero. Hence, if $b_{\bm{\theta}}'(\bm{w}_{\mathrm{MSE}}) < 0$ and $s_{\bm{\theta}}'(\bm{w}_{\mathrm{MSE}}) > 0$, we conclude $Q_{\bm{\theta}}\left( \bm{w}_{\mathrm{MSE}} \right) < 0$.

\subsection{Derivations in Section \ref{subsec:K=2}}\label{subsec:derivation_K=2}

In this section, we derive the minimax regret and hierarchical Bayes rules for the simple setting described in Section \ref{subsec:K=2}. First, we derive the analytical expression of $b(w)$. When $0 \leq w \leq 1$, we obtain $ b(w) = \max_{\bm{\tau} \in \mathcal{T}, \tau_0 = 0} \left\{ w \tau_1 + (1-w) \tau_2 \right\} = C$. For $w > 1$, we obtain $b(w) =  \max_{\bm{\tau} \in \mathcal{T}, \tau_0 = 0} \left\{ w \tau_1 + (1-w) \tau_2 \right\} = C w$. Similarly, we have $b(w) = C (1-w)$ when $w < 0$. Hence, the maximum bias can be written as follows:
\begin{equation*}
b(w) \ = \ C \cdot \max\{ w, 1-w, 1 \}.
\end{equation*}

Next, we derive the minimax weight $w^{\ast} \in \text{arg} \min_{w} \left\{ s(w) \cdot \eta \left( b(w)/s(w) \right) \right\}$. From the proof of Lemma 2, let $t^*(a) \equiv \text{arg} \max_{t \geq 0} \left\{ t \cdot \Phi(-t+a) \right\}$ and obtain
\begin{eqnarray*}
\frac{\partial}{\partial s} \left\{ s \eta(C/s) \right\} &=& \eta\left( \frac{C}{s} \right) - \left( \frac{C}{s} \right) \eta'\left( \frac{C}{s} \right) \\
&=& \left\{ t^{\ast} \left( \frac{C}{s} \right) - \left( \frac{C}{s} \right) \right\} \cdot \Phi\left( - t^{\ast} \left( \frac{C}{s} \right) + \left( \frac{C}{s} \right)  \right),
\end{eqnarray*}
where $t^{\ast}(a) - a$ is a strictly decreasing function. 
From the proof of Theorem \ref{thm:global}, we have $t^{\ast}(a) - a = 0$ when $a = a^{\ast} \equiv \sqrt{\pi/2} \simeq 1.253$. 
Hence, $s \eta(C/s)$ is increasing when $s > C / a^{\ast} \simeq 0.798 C$ and decreasing when $s < C / a^{\ast}$. 
In addition, $\sqrt{\frac{\sigma_1^2 \sigma_2^2}{\sigma_1^2 + \sigma_2^2}} \leq s(w) \leq \max\{\sigma_1, \sigma_2\}$ holds with the inequalities hold with equalities at some $0 \leq w \leq 1$. Because $b(w) > C$ for $w < 0$ or $w > 1$ and $\eta(a)$ is a strictly increasing function, the minimax weight $w^{\ast}$ satisfies the following conditions:
\begin{eqnarray*}
\text{if} \ C / a^{\ast} < \sqrt{\frac{\sigma_1^2 \sigma_2^2}{\sigma_1^2 + \sigma_2^2}} & \Rightarrow & s(w^{\ast}) = \sqrt{\frac{\sigma_1^2 \sigma_2^2}{\sigma_1^2 + \sigma_2^2}}, \\
\text{if} \ \sqrt{\frac{\sigma_1^2 \sigma_2^2}{\sigma_1^2 + \sigma_2^2}} \leq C / a^{\ast} \leq \max\{\sigma_1, \sigma_2\} & \Rightarrow & s(w^{\ast}) = C / a^{\ast}, \\
\text{if} \ C / a^{\ast} >  \max\{\sigma_1, \sigma_2\} & \Rightarrow & w^{\ast} \leq 0 \ \text{or} \ w^{\ast} \geq 1.
\end{eqnarray*}
Therefore, if the dispersion of parameters $C$ is small compared to the standard deviations $\sigma_1$ and $\sigma_2$, the minimax weight attains the smallest variance, that is, $w^{\ast} = \frac{\sigma_2^2}{\sigma_1^2 + \sigma_2^2}$.

We consider the following hierarchical Bayes model:
\begin{eqnarray*}
& & \tau_k = \tau_0 + \epsilon_k, \ \ k = 1,2, \\
& & \tau_0 \sim N(0, \sigma_{\tau}^2), \ \ \epsilon_k \sim N(0, \sigma_{\epsilon}^2),
\end{eqnarray*}
where $\tau_0$, $\epsilon_1$, and $\epsilon_2$ are mutually independent. Then the posterior distribution of $\tau_0$ can be written as follows:
\begin{eqnarray*}
\pi(\tau_0 | \hat{\tau}_1, \hat{\tau}_2) %& \propto & \exp \left\{ - \frac{(\hat{\tau}_1 - \tau_0)^2}{2 (\sigma_1^2 + \sigma_{\epsilon}^2)} - \frac{(\hat{\tau}_2 - \tau_0)^2}{2 (\sigma_2^2 + \sigma_{\epsilon}^2)} - \frac{\tau_0^2}{2 \sigma_{\tau}^2}\right\} \\
& \propto & \exp \left[ - \frac{1}{2} \left\{ (\sigma_1^2 + \sigma_{\epsilon}^2)^{-1} + (\sigma_2^2 + \sigma_{\epsilon}^2)^{-1} + \sigma_{\tau}^{-2} \right\} \right. \\
& & \hspace{0.8in} \left. \times \left\{ \tau_0 - \frac{(\sigma_1^2 + \sigma_{\epsilon}^2)^{-1} \hat{\tau}_1 + (\sigma_2^2 + \sigma_{\epsilon}^2)^{-1} \hat{\tau}_2}{(\sigma_1^2 + \sigma_{\epsilon}^2)^{-1} + (\sigma_2^2 + \sigma_{\epsilon}^2)^{-1} + \sigma_{\tau}^{-2}} \right\} \right].
\end{eqnarray*}
As discussed in Section \ref{subsec:minimax}, the Bayes optimal rule becomes
\begin{eqnarray*}
\bm{1}\{ E_{\pi}( \tau_0 | \mathbf{D}) \geq 0 \} &=& \bm{1} \left\{ (\sigma_1^2 + \sigma_{\epsilon}^2)^{-1} \hat{\tau}_1 + (\sigma_2^2 + \sigma_{\epsilon}^2)^{-1} \hat{\tau}_2 \geq 0 \right\} \\
&=& \bm{1} \left\{ (\sigma_2^2 + \sigma_{\epsilon}^2) \hat{\tau}_1 + (\sigma_1^2 + \sigma_{\epsilon}^2) \hat{\tau}_2 \geq 0 \right\} \\
&=& \bm{1} \left\{ \left( \frac{\sigma_2^2 + \sigma_{\epsilon}^2}{\sigma_1^2 + \sigma_2^2 + 2 \sigma_{\epsilon}^2} \right) \hat{\tau}_1 + \left( \frac{\sigma_1^2 + \sigma_{\epsilon}^2}{\sigma_1^2 + \sigma_2^2 + 2 \sigma_{\epsilon}^2} \right)  \hat{\tau}_2 \geq 0 \right\},
\end{eqnarray*}
which implies that $w_{\text{HB}} = \frac{ \sigma_2^2 + \sigma_{\epsilon}^2}{\sigma_1^2 + \sigma_2^2 + 2 \sigma_{\epsilon}^2}$. We observe that
\begin{eqnarray*}
w_{\text{HB}} &=& \frac{ \sigma_2^2 + \sigma_{\epsilon}^2}{\sigma_1^2 + \sigma_2^2 + 2 \sigma_{\epsilon}^2} \\ 
&=& \frac{ \sigma_2^2}{\sigma_1^2 + \sigma_2^2} \cdot \frac{\sigma_1^2 + \sigma_2^2}{\sigma_1^2 + \sigma_2^2+2\sigma_{\epsilon}^2} + \frac{1}{2} \cdot \frac{2\sigma_{\epsilon}^2}{\sigma_1^2 + \sigma_2^2+2\sigma_{\epsilon}^2}.
\end{eqnarray*}
Hence, $w_{\text{HB}}$ shrinks $\frac{ \sigma_2^2}{\sigma_1^2 + \sigma_2^2}$ toward $1/2$ and the degree of shrinkage is increasing in $\sigma_{\epsilon}^2$.

\renewcommand{\theequation}{C.\arabic{equation}}
\setcounter{equation}{0}
%%%%%%%%%%%%%%%%%%%%%%%%%%%%%%%%%%%%%%%%%%%

\bibliographystyle{ecta}
\bibliography{meta-analysis}

\pagebreak

\section*{Online Appendix}
\section{Additional extensions and discussion}\label{sec:additional_extensions}

\subsection{The minimax regret rule under a bounded parameter space}\label{subsec:bounded}

We consider the following bounded parameter space:
\begin{equation}
\mathcal{T}_b \ \equiv \ \mathcal{T} \cap \left[ -\underline{t}, \overline{t} \right]^{K+1}, \label{bounded_PS}\nonumber
\end{equation}
where $\underline{t}, \overline{t} \in (0,+\infty]$ and $\mathcal{T} = \{ \bm{\tau} : |\tau_k - \tau_l| \leq C \|x_k -x_l\| \ \text{for all $k,l$} \}$. For example, if the policy effects are known to be nonnegative and the cost is $c_0$, the cost-adjusted policy effects can be bounded below by $-c_0$. Then we can set $\underline{t}=c_0$ and $\overline{t}=+\infty$.

Because the parameter space is not symmetric, we consider the following treatment rule:
\[
\hat{\delta}_{v, \bm{w}}(\mathbf{D}) \ \equiv \ \mathbf{1} \left\{ v + \sum_{k=1}^K w_k \hat{\tau}_k \geq 0 \right\}.
\]
Instead of introducing an intercept term $v$, we impose that all elements of $\bm{w}$ are nonnegative. In addition, we assume that $\underline{t}$ and $\overline{t}$ satisfy
\begin{equation}
\min \left\{ \underline{t}, \overline{t} \right\} \ \geq \ C \cdot \max_{1 \leq k \leq K} \|x_k - x_0\|. \label{condition1_bounded_PS}
\end{equation}
This condition implies that $\underline{t}$ and $\overline{t}$ are sufficiently large relative to the Lipschitz constraint. Hence, we consider the following maximization problem:
\[
\max_{\bm{\tau} \in \mathcal{T}_b} R(\bm{\tau}, \hat{\delta}_{v, \bm{w}}),
\]
where $\bm{w}$ satisfies $\sum_{k=1}^K w_k = 1$ and $w_k \geq 0$ for all $k$.

For $0 \leq t \leq \overline{t}$, we have
\begin{eqnarray*}
\max_{\bm{\tau} \in \mathcal{T}_b, \tau_0 = t} R(\bm{\tau}, \hat{\delta}_{v, \bm{w}}) &=& \max_{\bm{\tau} \in \mathcal{T}_b, \tau_0 = t} \left\{ t \cdot \Phi \left( -  \frac{t+v+\sum_{k=1}^K w_k (\tau_k-t)}{s(\bm{w})}\right) \right\} \\
&=& t \cdot \Phi \left( - \frac{t}{s(\bm{w})} - \frac{v + \min_{\bm{\tau} \in \mathcal{T}_b, \tau_0 = t}\left\{ \sum_{k=1}^K w_k (\tau_k-t) \right\}}{s(\bm{w})} \right).
\end{eqnarray*}
Because $w_k$ is nonnegative, we have
\begin{eqnarray*}
\min_{\bm{\tau} \in \mathcal{T}_b, \tau_0 = t}\left\{ \sum_{k=1}^K w_k (\tau_k-t) \right\} &=& \sum_{k=1}^K w_k \left\{ \max \{t-C\|x_k-x_0\|,-\underline{t}\} - t  \right\} \\
&=& -C \sum_{k=1}^K w_k \|x_k-x_0\| \ = \ - b(\bm{w}),
\end{eqnarray*}
where the second equality follows from $t \geq 0$ and $\underline{t} \geq C \cdot \max_{1 \leq k \leq K} \|x_k - x_0\|$. Hence, $0 \leq t \leq \overline{t}$, we obtain
\[
\max_{\bm{\tau} \in \mathcal{T}_b, \tau_0 = t} R(\bm{\tau}, \hat{\delta}_{v, \bm{w}}) \ = \ t \cdot \Phi \left( - \frac{t}{s(\bm{w})} + \frac{-v + b(\bm{w})}{s(\bm{w})} \right).
\]
Similarly, for $-\underline{t} \leq t \leq 0$, we have
\[
\max_{\bm{\tau} \in \mathcal{T}_b, \tau_0 = t} R(\bm{\tau}, \hat{\delta}_{v,\bm{w}}) \ = \ -t \cdot \Phi \left( \frac{t}{s(\bm{w})} + \frac{v+b(\bm{w})}{s(\bm{w})} \right).
\]
Therefore, we obtain
\begin{eqnarray}
\max_{\bm{\tau} \in \mathcal{T}_b} R(\bm{\tau}, \hat{\delta}_{v,\bm{w}}) &=& \max_{t \in [ -\underline{t}, \overline{t} ]} \left\{ |t| \cdot \Phi \left( - \frac{|t|}{s(\bm{w})} + \frac{- sgn(t) \cdot v + b(\bm{w})}{s(\bm{w})} \right) \right\} \nonumber \\
&=& s(\bm{w}) \cdot \max \left[ \max_{0 \leq s \leq \overline{t}/s(\bm{w})} \left\{ s \cdot \Phi \left( -s + \frac{-v+b(\bm{w})}{s(\bm{w})} \right) \right\} , \right. \nonumber \\
& & \hspace{1.5in} \left. \max_{0 \leq s \leq \underline{t}/s(\bm{w})} \left\{ s \cdot \Phi \left( -s + \frac{v+b(\bm{w})}{s(\bm{w})} \right) \right\} \right] \nonumber \\
&=& s(\bm{w}) \cdot \max \left\{ \max_{0 \leq s \leq \overline{t}/s(\bm{w})} \psi \left( s, \frac{-v+b(\bm{w})}{s(\bm{w})} \right), \right. \nonumber \\
& & \hspace{2.2in} \left. \max_{0 \leq s \leq \underline{t}/s(\bm{w})} \psi \left( s, \frac{v+b(\bm{w})}{s(\bm{w})} \right) \right\}, \nonumber
\end{eqnarray}
where $\psi(s,a) \equiv s \cdot \Phi(-s+a)$. Note that $\psi(s,a)$ is unimodal in $s$ and maximized at $s=t^{\ast}(a)$. As a result, we obtain
\begin{eqnarray*}
\max_{0 \leq s \leq \overline{t}/s(\bm{w})} \psi \left( s, \frac{-v+b(\bm{w})}{s(\bm{w})} \right) &=& \begin{cases}
\eta \left( \frac{-v+b(\bm{w})}{s(\bm{w})} \right), & \text{if $t^{\ast} \left( \frac{-v+b(\bm{w})}{s(\bm{w})} \right) \leq \frac{\overline{t}}{s(\bm{w})}$} \\
\psi \left( \frac{\overline{t}}{s(\bm{w})}, \frac{-v+b(\bm{w})}{s(\bm{w})} \right), & \text{if $t^{\ast} \left( \frac{-v+b(\bm{w})}{s(\bm{w})} \right) > \frac{\overline{t}}{s(\bm{w})}$}
\end{cases}, \\
\max_{0 \leq s \leq \underline{t}/s(\bm{w})} \psi \left( s, \frac{v+b(\bm{w})}{s(\bm{w})} \right) &=& \begin{cases}
\eta \left( \frac{v+b(\bm{w})}{s(\bm{w})} \right), & \text{if $t^{\ast} \left( \frac{v+b(\bm{w})}{s(\bm{w})} \right) \leq \frac{\underline{t}}{s(\bm{w})}$} \\
\psi \left( \frac{\underline{t}}{s(\bm{w})}, \frac{v+b(\bm{w})}{s(\bm{w})} \right), & \text{if $t^{\ast} \left( \frac{v+b(\bm{w})}{s(\bm{w})} \right) > \frac{\underline{t}}{s(\bm{w})}$}
\end{cases}.
\end{eqnarray*}

We can show that calculating the maximum regret becomes even simpler when $v=0$ and $\underline{t}$ and $\overline{t}$ are sufficiently large relative to the standard errors. We assume that
\begin{equation}
\min \left\{ \underline{t}, \overline{t} \right\} \ \geq \ \sqrt{\pi /2} \cdot \max_{1 \leq k \leq K} \sigma_k. \label{condition2_bounded_PS}
\end{equation}
It follows from (\ref{condition1_bounded_PS}) that we have $b(\bm{w}) \leq \min \left\{ \underline{t}, \overline{t} \right\}$. In the proof of Theorem \ref{thm:global}, $a-t^{\ast}(a)$ is a strictly increasing function and $a-t^{\ast}(a) \geq 0$ for $a \geq \sqrt{\pi / 2}$. Hence, we obtain
\begin{eqnarray*}
\frac{\overline{t}}{s(\bm{w})} - t^{\ast} \left( \frac{b(\bm{w})}{s(\bm{w})} \right) & \geq & \frac{\overline{t}}{s(\bm{w})} - t^{\ast} \left( \frac{\overline{t}}{s(\bm{w})} \right) \ \geq \ 0,
\end{eqnarray*}
where the second inequality follows from (\ref{condition2_bounded_PS}) and $s(\bm{w}) \leq \max_{1 \leq k \leq K} \sigma_k$. Similarly, we have $t^{\ast} \left( \frac{b(\bm{w})}{s(\bm{w})} \right) \leq \frac{\underline{t}}{s(\bm{w})}$. Therefore, if $w_k \geq 0$ for all $k$ and $\underline{t}$ and $\overline{t}$ satisfies (\ref{condition1_bounded_PS}) and (\ref{condition2_bounded_PS}), then we obtain
\[
\max_{\bm{\tau} \in \mathcal{T}_b} R(\bm{\tau}, \hat{\delta}_{\bm{w}}) \ = \ s(\bm{w}) \cdot \eta \left( \frac{b(\bm{w})}{s(\bm{w})} \right).
\]
That is, if the weights are restricted to be nonnegative and $\underline{t}$ and $\overline{t}$ are sufficiently large relative to both the Lipschitz constraint and the standard errors, then the maximum regret rule under the bounded parameter space $\mathcal{T}_b$ coincides with that under the unbounded parameter space $\mathcal{T}$.

\subsection{The minimax regret rule under unkown $\sigma_k$}\label{subsec:unknown}

In this section, we consider the case in which $\sigma_k$ is unknown. Suppose that instead of $\sigma_k$, an estimator $\hat{\sigma}_k$ is available. We assume that $\hat{\sigma}_k$ is independent of $\hat{\tau}_k$ and the distribution of $\hat{\sigma}_k$ does not depends on $\bm{\tau}$. This assumption can be justified when the data are normally distributed. By replacing $\sigma_k$ with $\hat{\sigma}_k$, we obtain the following weights:
\[
\hat{\bm{w}}_{\mathrm{minimax}} \ \in \ \text{arg} \min_{\bm{w}} \left\{ \hat{s}(\bm{w}) \cdot \eta\left( \frac{b(\bm{w})}{\hat{s}(\bm{w})} \right) \right\},
\]
where $\hat{s}(\bm{w}) = \sqrt{\sum_{k=1}^K w_k^2 \hat{\sigma}_k^2}$. We compare the maximum regrets of $\hat{\delta}_{\bm{w}_{\mathrm{minimax}}}$ and $\hat{\delta}_{\hat{\bm{w}}_{\mathrm{minimax}}}$.

To compare the maximum regrets of $\hat{\delta}_{\bm{w}_{\mathrm{minimax}}}$ and $\hat{\delta}_{\hat{\bm{w}}_{\mathrm{minimax}}}$, we evaluate the ratio of the maximum regrets
\[
\frac{\max_{\bm{\tau} \in \mathcal{T}} R(\bm{\tau}, \hat{\delta}_{\hat{\bm{w}}_{\mathrm{minimax}}})}{\max_{\bm{\tau} \in \mathcal{T}} R(\bm{\tau}, \hat{\delta}_{\bm{w}_{\mathrm{minimax}}})}.
\]
Since $\hat{\sigma}_k$ is independent of $\hat{\tau}_k$ and $\hat{\bm{w}}_{\mathrm{minimax}}$ does not depend on $\hat{\tau}_1, \ldots, \hat{\tau}_K$, we obtain
\begin{eqnarray*}
\max_{\bm{\tau} \in \mathcal{T}} R(\bm{\tau}, \hat{\delta}_{\hat{\bm{w}}_{\mathrm{minimax}}}) &=& \max_{\bm{\tau} \in \mathcal{T}} E \left[ E_{\bm{\tau}}\left[ \left. W(\delta^{\ast}) - W(\hat{\delta}_{\hat{\bm{w}}_{\mathrm{minimax}}}(\mathbf{D})) \right|  \hat{\sigma}_1, \ldots , \hat{\sigma}_K \right] \right] \\
&\leq &  E \left[ \max_{\bm{\tau} \in \mathcal{T}} E_{\bm{\tau}}\left[ \left. W(\delta^{\ast}) - W(\hat{\delta}_{\hat{\bm{w}}_{\mathrm{minimax}}}(\mathbf{D})) \right|  \hat{\sigma}_1, \ldots , \hat{\sigma}_K \right] \right] \\
&=& E\left[ s(\hat{\bm{w}}_{\mathrm{minimax}}) \cdot \eta\left( \frac{b(\hat{\bm{w}}_{\mathrm{minimax}})}{s(\hat{\bm{w}}_{\mathrm{minimax}})} \right) \right],
\end{eqnarray*}
where the last expectation is taken with respect to $\hat{\sigma}_1, \ldots , \hat{\sigma}_K$. Hence, we obtain
\begin{eqnarray}
\frac{\max_{\bm{\tau} \in \mathcal{T}} R(\bm{\tau}, \hat{\delta}_{\hat{\bm{w}}_{\mathrm{minimax}}})}{\max_{\bm{\tau} \in \mathcal{T}} R(\bm{\tau}, \hat{\delta}_{\bm{w}_{\mathrm{minimax}}})} &\leq & E \left[ \frac{s(\hat{\bm{w}}_{\mathrm{minimax}}) \cdot \eta\left( \frac{b(\hat{\bm{w}}_{\mathrm{minimax}})}{s(\hat{\bm{w}}_{\mathrm{minimax}})} \right)}{s(\bm{w}_{\mathrm{minimax}}) \cdot \eta\left( \frac{b(\bm{w}_{\mathrm{minimax}})}{s(\bm{w}_{\mathrm{minimax}})} \right)} \right] \label{ratio_unknown_sigma}.
\end{eqnarray}

To evaluate the right-hand side of (\ref{ratio_unknown_sigma}), we show that
\begin{equation}
\gamma_0^{-1} \leq \hat{s}/s \leq \gamma_0  \ \ \Rightarrow \ \ \max \left\{ \frac{ \hat{s} \cdot \eta\left( b/\hat{s} \right)}{s \cdot \eta\left( b/s \right)}, \, \frac{s \cdot \eta\left( b/s \right)}{\hat{s} \cdot \eta\left( b/\hat{s} \right)} \right\} \ \leq \ 2\gamma_0-1, \label{bound_unknown_sigma}
\end{equation}
where $\gamma_0 > 1$. If $1 \leq \hat{s}/s \leq \gamma_0$, we have
\begin{eqnarray*}
\frac{ \hat{s} \cdot \eta\left( b/\hat{s} \right)}{s \cdot \eta\left( b/s \right)} & \leq & \left( \frac{\hat{s}}{s} \right) \cdot \frac{\eta(b/s)}{\eta(b/s)} \ \leq \ \gamma_0 \ \leq \ 2 \gamma_0 -1.
\end{eqnarray*}
Furthermore, if $1 \leq s/\hat{s} \leq \gamma_0$, we have
\begin{eqnarray*}
\frac{ \hat{s} \cdot \eta\left( b/\hat{s} \right)}{s \cdot \eta\left( b/s \right)} & \leq & \left( \frac{\hat{s}}{s} \right) \times \left( \frac{\eta(b/s) + (b/\hat{s}) - (b/s)}{\eta(b/s)} \right) \\
& \leq & \left( \frac{\hat{s}}{s} \right) \times \left( 1 + \frac{(\gamma_0-1) \cdot (b/s)}{\eta(b/s)} \right) \ \leq \ 1 + (\gamma_0-1)\cdot H(b/s),
\end{eqnarray*}
where $H(a) \equiv a/\eta(a)$ and the first inequality follows because $a \geq a' \Rightarrow \eta(a) \leq \eta(a') + (a-a')$. It follows from Lemma \ref{lemma:1} that $H(a) = a / \eta(a) \leq a / (a/2) = 2$. Therefore, we obtain (\ref{bound_unknown_sigma}).

Letting $\hat{\gamma}_0 \equiv \max_{k \in \{1, \ldots, K\}} \left[ \max \{ \hat{\sigma}_k / \sigma_k , \sigma_k / \hat{\sigma}_k \} \right]$, we have $\hat{\gamma}_0^{-1} \leq \frac{\hat{s}(\bm{w})}{s(\bm{w})} \leq \hat{\gamma}_0$ for any $\bm{w}$. From (\ref{ratio_unknown_sigma}) and (\ref{bound_unknown_sigma}), we obtain
\begin{eqnarray*}
\frac{\max_{\bm{\tau} \in \mathcal{T}} R(\bm{\tau}, \hat{\delta}_{\hat{\bm{w}}_{\mathrm{minimax}}})}{\max_{\bm{\tau} \in \mathcal{T}} R(\bm{\tau}, \hat{\delta}_{\bm{w}_{\mathrm{minimax}}})} &\leq & E \left[ (2\hat{\gamma}_0-1)^2 \times \frac{\hat{s}(\hat{\bm{w}}_{\mathrm{minimax}}) \cdot \eta\left( \frac{b(\hat{\bm{w}}_{\mathrm{minimax}})}{\hat{s}(\hat{\bm{w}}_{\mathrm{minimax}})} \right)}{\hat{s}(\bm{w}_{\mathrm{minimax}}) \cdot \eta\left( \frac{b(\bm{w}_{\mathrm{minimax}})}{\hat{s}(\bm{w}_{\mathrm{minimax}})} \right)} \right] \\
&\leq & E \left[ (2\hat{\gamma}_0-1)^2 \right],
\end{eqnarray*}
where the last inequality holds since the weight vector $\hat{\bm{w}}_{\mathrm{minimax}}$ minimizes $\hat{s}(\bm{w}) \cdot \eta\left( \frac{b(\bm{w})}{\hat{s}(\bm{w})} \right)$. Therefore, if $\hat{\sigma}_k / \sigma_k$ is close to 1 with high probability, the maximum regret of $\hat{\delta}_{\hat{\bm{w}}_{\mathrm{minimax}}}$ is not much larger than that of $\hat{\delta}_{\bm{w}_{\mathrm{minimax}}}$.

\subsection{Consistency of the minimax treatment rule}\label{subsec:consistency}

If we have perfect knowledge of $(\tau_1, \dots, \tau_K)$, i.e., $\sigma_k = 0$ for all $1 \leq k \leq K$, we can obtain the (true) identified set of $\tau_0$ based on the constraints on the parameter space $\mathcal{T}$. For instance, we construct the identified set of $\tau_0$ by intersecting multiple bounds for $\tau_0$, each of which is constructed by extrapolating from $\tau_k$, as considered in \citet{manski2020towards}. We can then consider finding the minimax regret treatment rule given the true identified set of $\tau_0$ without any sampling uncertainty. We denote by $\delta^*_{IS}$ such a (non-randomized) minimax regret rule. As the sample size of each study increases, that is, as $\sigma_k \rightarrow 0$, should we expect the minimax regret rule $\hat{\delta}_{\bm{w}_{\text{minimax}}}$ we constructed in the previous section to converge to $\delta^*_{IS}$? In what follows, we compare $\delta^*_{IS}$ with the limiting version of $\hat{\delta}_{\bm{w}_{\text{minimax}}}$, and show that $\hat{\delta}_{\bm{w}_{\text{minimax}}}$ does \textit{not} necessarily converge to $\delta^*_{IS}$ as $\sigma_k \rightarrow 0$.

\if0
In what follows, we compare $\delta^*_{IS}$ with the limiting version of $\hat{\delta}_{\bm{w}_{\text{minimax}}}$, and show that $\hat{\delta}_{\bm{w}_{\text{minimax}}}$ does \textit{not} necessarily converge to $\delta^*_{IS}$ as $\sigma_k \rightarrow 0$. We then consider an alternative class of treatment choice rules that converge to $\delta^*_{IS}$ as $\sigma_k \rightarrow 0$. These alternative treatment rules solve the minimax regret with a data-driven parameter space built upon confidence regions for $\bm{\tau}$. 
These rules, therefore, do not belong to the linear aggregation rules of Definition \ref{def:linear_rule}. 
Moreover, their computation are not as simple as the linear minimax regret rule $\delta_{\bm{w}_{\text{minimax}}}$ obtained in Section \ref{sec:main}, and we do not know if they coincide with any exact minimax regret (nonlinear) rule obtained for a data-independent parameter space.\footnote{\textcolor{red}{Under a Lipschitz parameter space, \cite{montiel2023decision} and \cite{yata2021optimal} show that the exact minimax regret rule is unique and takes the form of a linear aggregation rule when the Lipschitz constant is sufficiently small. Hence, in this case, the alternative treatment rules do not coincide with the exact minimax regret rule. In contrast, when the Lipschitz constant is large, there are infinitely many minimax regret rules, and a nonlinear rule can be the exact minimax regret rule.}}
Nevertheless, we can show that such modified treatment rules converge to $\delta^*_{IS}$ as $\sigma_k \rightarrow 0$, which could be of theoretical interest.
\fi

First, we consider the minimax regret rule when $\sigma_k = 0$ for all $k = 1, \dots , K$. Then, $\tau_k = \hat{\tau}_k$ for all $k = 1, \dots , K$ and the identified set of $\tau_0$ is
\begin{equation}
IS_0 \ \equiv \ \left\{ \tau_0 : \text{$\bm{\tau} \in \mathcal{T}$ and $\tau_k = \hat{\tau}_k$ for all $k = 1, \cdots , K$} \right\}. \label{identified_set}
\end{equation}
In this case, the parameter space $\mathcal{T}$ projected for $\tau_0$ yields the identified set $IS_0$. Because there is no randomness in this problem, for a treatment rule $\delta$, the welfare regret of $\delta$ becomes
$$
W(\delta^*)-W(\delta) \ = \ \begin{cases}
    -\tau_0 \cdot \delta & \text{if $\tau_0 < 0$} \\
    \tau_0 \cdot (1-\delta) & \text{if $\tau_0 \geq 0$} 
  \end{cases}.
$$
Hence, the minimax regret rule over $IS_0$ can be written as
\begin{equation}
\delta_{IS}^*(\mathbf{D}) \ = \ \begin{cases}
    1 & \ \ \ \ \ \text{if $-\underline{\tau}_0 \leq \overline{\tau}_0$} \\
    0 & \ \ \ \ \ \text{if $-\underline{\tau}_0 > \overline{\tau}_0$}
  \end{cases}, \label{minimax_IS}
\end{equation}
where $\delta_{IS}^*(\mathbf{D}) \in \mathrm{arg} \min_{\delta \in \{0,1\}} \max_{\tau_0 \in IS_0} \left\{ W(\delta^{\ast}) - W(\delta)  \right\}$, $\underline{\tau}_0 \equiv \inf\{\tau_0:\tau_0 \in IS_0\}$, and $\overline{\tau}_0 \equiv \sup\{\tau_0:\tau_0 \in IS_0\}$ are the smallest and largest values of the identified set of $\tau_0$, respectively. The rule $\delta_{IS}^*$ becomes 1 (or 0) when we have $\underline{\tau}_0 > 0$ (or $\overline{\tau}_0 < 0$), that is, all values of the identified set of $\tau_0$ are positive (or negative). When the identified set of $\tau_0$ contains both of positive and negative values, $\delta_{IS}^*$ becomes 1 (or 0) if the absolute value of $\overline{\tau}_0$ is larger (or smaller) than that of $\underline{\tau}_0$.\footnote{In this paper, we consider only non-randomized treatment rules. \cite{manski2011choosing} shows that the minimax regret criterion always yields a randomized treatment rule when $\underline{\tau}_0 < 0 < \overline{\tau}_0$. He shows that the minimax randomized treatment rule randomly assigns a fraction $|\overline{\tau}_0|/(|\underline{\tau}_0|+|\overline{\tau}_0|)$ of the population to treatment 1 and the remaining $|\underline{\tau}_0|/(|\underline{\tau}_0|+|\overline{\tau}_0|)$ to treatment 0.}

\bigskip

Next, we consider the large sample properties of our minimax regret rule $\hat{\delta}_{\bm{w}_{\text{minimax}}}$, i.e., $\sigma_k \rightarrow 0$. In this case, we can show that the minimax regret criterion yields the treatment rule that minimizes the maximum bias.
We focus on the linear aggregation rule that minimizes the limit of the maximum regret
\[  \lim_{\sigma_k \to 0 \, \forall k} \max_{\bm{\tau} \in \mathcal{T}} R(\bm{\tau}, \hat{\delta}_{\bm{w}}).  \]
From Lemma \ref{lemma:1}, we observe that
\begin{eqnarray*}
\lim_{\sigma_k \to 0 \, \forall k} \max_{\bm{\tau} \in \mathcal{T}} R(\bm{\tau}, \hat{\delta}_{\bm{w}}) &=& \lim_{s \to 0} s \cdot \eta \left( b(\bm{w}) / s \right) \\
& \leq & \lim_{s \to 0} s \cdot \left\{ \eta(0) + \frac{b(\bm{w})}{s} \right\} \ = \  b(\bm{w}).
\end{eqnarray*}
Furthermore, since $\eta(a)$ is strictly increasing and convex from the proof of Lemma \ref{lemma:2}, for any $a > 0$ we have
\begin{eqnarray*}
\lim_{s \to 0} s \cdot \eta \left( b(\bm{w}) / s \right) & \geq & \lim_{s \to 0} s \cdot \left\{ \eta(a) + \eta'(a) \left( \frac{b(\bm{w})}{s} - a \right)  \right\} \\
&=& \eta'(a) \cdot b(\bm{w}) \ = \ \Phi (-t^{\ast}(a)+a) \cdot b(\bm{w}).
\end{eqnarray*}
We now show that $-t^{\ast}(a)+a \to \infty$ as $a \to \infty$. Suppose, to the contrary, that $-t^{\ast}(a)+a$ converges to a finite constant. Then, since $t^{\ast}(a)$ must diverge to infinity as $a \to \infty$, we obtain $t^{\ast}(a) \cdot \phi (-t^{\ast}(a)+a) \to \infty$, which contradicts the identity
$$
t^{\ast}(a) \cdot \phi (-t^{\ast}(a)+a) \ = \ \Phi (-t^{\ast}(a)+a) \ \leq \ 1.
$$
Hence we obtain $-t^{\ast}(a)+a \to \infty$ and $\Phi (-t^{\ast}(a)+a) \to 1$. Therefore, we obtain
\[
\lim_{\sigma_k \to 0 \, \forall k} \max_{\bm{\tau} \in \mathcal{T}} R(\bm{\tau}, \hat{\delta}_{\bm{w}}) \ = \ b(\bm{w}).
\]
As a result, in large samples, the limiting version of $\hat{\delta}_{\bm{w}_{\text{minimax}}}$ becomes a treatment rule that minimizes the maximum bias $b(\bm{w})$.

%The proof of Lemma \ref{lemma:2} shows that $\eta(a)$ is strictly increasing and convex with its slope bounded from above by one. Hence, the slope of $\eta(a)$ converges to a positive constant $c \in (0,1]$ as $a \rightarrow +\infty$. This implies that when $a$ is large, $\eta(a)$ can be approximated by $d+c \cdot a$ for some $d$. As $\sigma_k \rightarrow 0$ for all $k$, we have $s(\bm{w}) \rightarrow 0$ for any $\bm{w}$. From Theorem \ref{thm:minimax}, as $s(\bm{w}) \rightarrow 0$, we can approximate the maximum regret of $\hat{\delta}_{\bm{w}}$ by $c \cdot b(\bm{w})$. This implies that in large samples, the minimax regret rule becomes a treatment rule that minimizes the maximum bias $b(\bm{w})$.

We examine whether the following consistency property holds:
\[
\text{arg} \min_{\hat{\delta}_{\bm{w}}} \lim_{\sigma_k \to 0 \, \forall k}  \max_{\tau \in \mathcal{T}} R(\tau, \hat{\delta}_{\bm{w}}) \ = \ \text{arg} \min_{\hat{\delta}}  \max_{\tau_0 \in IS_0} \left\{ W(\delta^{\ast}) - W(\hat{\delta}(\mathbf{D}))  \right\},
\]
where $W(\delta^{\ast}) - W(\hat{\delta}(\mathbf{D})) = \lim_{\sigma_k \to 0 \, \forall k} R(\tau, \hat{\delta})$ holds when $\hat{\delta}$ is continuous at $\mathbf{D}$. To be specific, consider the case in which $K=2$, $\mathcal{T} = \{ \bm{\tau} : |\tau_k - \tau_l| \leq |x_k-x_l| \ \text{for all $k,l$} \}$, and $(x_0,x_1,x_2) = (0,1,-2)$. Then, the maximum bias $b(\bm{w})$ is minimized at $\bm{w}=(1,0)$. Hence, the limiting version of $\hat{\delta}_{\bm{w}_{\text{minimax}}}$ becomes
\[ 1\{ \hat{\tau}_1 \geq 0 \}. \]
On the contrary, we can show that this treatment rule is not equivalent to $\delta^*_{IS}$. In this case, the identified set of $\tau_0$ can be written as
\[ IS_0 \ = \ [\hat{\tau}_1-1,\hat{\tau}_1+1] \cap  [\hat{\tau}_2-2,\hat{\tau}_2+2].  \]
For example, when $\hat{\tau}_1 = -0.9$ and $\hat{\tau}_2 = 2$, we obtain $|\hat{\tau}_1 -\hat{\tau}_2| \leq |x_1-x_2|$ and $\left( \underline{\tau}_0,\overline{\tau}_0 \right) = (0,0.1)$. This implies that $\delta^*_{IS}(\mathbf{D}) = 1$ holds even when $\hat{\tau}_1 < 0$. Therefore, in this case, the limiting version of $\hat{\delta}_{\bm{w}_{\text{minimax}}}$ does not agree with $\delta^*_{IS}$.

%To be specific, consider the case in which the parameter space is the class of Lipschitz vectors given in Example 2. Since we have
%$$ b(\bm{w}) \ = \ \max_{|\tau_k - \tau_l| \leq C \|x_k - x_l\|} \left\{ \sum_{k=1}^K w_k (\tau_k - \tau_0) \right\}, $$
%as $\sigma_k \rightarrow 0$ for all $k$, the minimax regret rule converges to the rule that depends only on the closest study in terms of the metric on the covariate space, i.e., the weight of the closest study $w_{k^*}$ converges to $1$, where $k^*$ satisfies $\|x_{k^*}-x_0\| \leq \|x_k-x_0\|$ for all $k = 1, \cdots , K$. Hence, the minimax regret rule converges to
%$$ \hat{\delta}_{\bm{w}_{\text{minimax}}}(\mathbf{D}) \ = \ \mathbf{1}\{ \hat{\tau}_{k^*} \geq 0 \}, $$
%and the decision of whether or not to introduce the policy is solely based on the closest study.

%In this case, we can show that $\hat{\delta}_{\bm{w}_{\text{minimax}}}$ does not converge to $\delta^*_{IS}$ as $\sigma_k \rightarrow 0$. If the observed covariates are scalar, then the identified set of $\tau_0$ can be written as
%$$ \bigcap_{k = 1}^K \big[ \hat{\tau}_{k}-C|x_k-x_0|,\hat{\tau}_{k}+C|x_k-x_0| \big]. $$
%Because our minimax regret rule $\hat{\delta}_{\bm{w}_{\text{minimax}}}$ uses only the closest study, it does not agree with $\delta^*_{IS}$ from (\ref{minimax_IS}). In fact, it is possible that $\hat{\tau}_{k^*}$ is positive but that the absolute value of $\overline{\tau}_0$ is larger than that of $\underline{\tau}_0$.

%\bigskip

\if0

To resolve such a disagreement, we propose a minimax treatment rule refined by a confidence region of $\bm{\tau}$. Data $\mathbf{D}$ provide some information about the parameter space $\mathcal{T}$. If there is not an a priori assumption available to constrain $\mathcal{T}$, we may want to exploit such in-sample information to refine the minimax regret rule.

For $\alpha \in (0,1)$, consider a subset $\hat{\mathcal{T}}(\alpha) \subset \mathcal{T}$ that depends on the data $\mathbf{D}$ and satisfies
\begin{equation}
P_{\bm{\tau}}\left( \bm{\tau} \in \hat{\mathcal{T}}(\alpha) \right) \ \geq \ 1 - \alpha \ \ \text{for any $\bm{\tau} \in \mathcal{T}$,} \label{hat_Tau}
\end{equation}
where $P_{\bm{\tau}}$ is the sampling probability distribution of the data when the true parameter value is $\bm{\tau}$. $\hat{\mathcal{T}}(\alpha)$ is a confidence set for $\bm{\tau}$ with a coverage probability of at least $1-\alpha$. For example, \textcolor{red}{the following estimated parameter space} satisfies condition (\ref{hat_Tau}):
\[
\hat{\mathcal{T}}_{\text{HR}}(\alpha) \ \equiv \ \left\{ \bm{\tau} \in \mathcal{T} : \tau_k \in [\hat{\tau}_k - \sigma_k \cdot z_{\alpha, K}, \hat{\tau}_k + \sigma_k \cdot z_{\alpha, K}] \ \text{for $k = 1, \cdots, K$} \right\},
\]
where $z_{\alpha, K}$ is the value such that $P(|Z| \leq z_{\alpha,K}) = (1-\alpha)^{1/K}$ for a standard normal variable $Z$. \textcolor{red}{By replacing the parameter space $\mathcal{T}$ with $\hat{\mathcal{T}}_{\text{HR}}(\alpha)$, we can obtain the refined minimax regret rule $\hat{\delta}_{\hat{\bm{w}}_{\text{minimax}}(\alpha)}$, where
\begin{equation}
\hat{\bm{w}}_{\text{minimax}}(\alpha) \ \in \ \text{arg} \min_{\bm{w}} \max_{\bm{\tau} \in \hat{\mathcal{T}}_{\text{HR}}(\alpha)} R(\bm{\tau},\hat{\delta}_{\bm{w}}). \label{w_minimax_alpha}
\end{equation}
}

When the parameter space is $\mathcal{T}_{\text{meta}}$ and $d_x < K$, we can construct another confidence region of $\bm{\tau}$. Let $\hat{\beta}$ be the OLS estimator of $\beta$. Then, the following set satisfies  the condition (\ref{hat_Tau}):
\[
\hat{\mathcal{T}}_{\text{meta}}(\alpha) \ \equiv \ \ \left\{ \bm{\tau} \in \mathcal{T}_{\text{meta}} : (\hat{\beta}-\beta)'S(\hat{\beta})^{-1}(\hat{\beta}-\beta) \leq \chi(\alpha,d_x) \right\},
\]
where $S(\hat{\beta})$ is the variance matrix of $\hat{\beta}$ and $\chi(\alpha,d_x)$ is the $(1-\alpha)$-th quantile of the chi-square distribution with $d_x$ degrees of freedom.

By replacing the parameter space $\mathcal{T}$ with $\hat{\mathcal{T}}(\alpha)$, we can compute the refined minimax regret rule. We define
\begin{equation}
\hat{\bm{w}}_{\text{minimax}}(\alpha) \ \in \ \text{arg} \min_{\bm{w}} \max_{\bm{\tau} \in \hat{\mathcal{T}}(\alpha)} R(\bm{\tau},\hat{\delta}_{\bm{w}}). \label{w_minimax_alpha}
\end{equation}
For the class of linear aggregation rules considered in the previous sections, $\bm{w}_{\text{minimax}}$ cannot depend on the data $\mathbf{D}$. In contrast, $\hat{\bm{w}}_{\text{minimax}}(\alpha)$ depends on the data through $\hat{\mathcal{T}}(\alpha)$. Hence, the refined minimax regret rule $\hat{\delta}_{\hat{\bm{w}}_{\text{minimax}}(\alpha)}$ becomes a non-linear aggregation rule. Because $\hat{\mathcal{T}}(\alpha)$ is contained in the parameter space $\mathcal{T}$, the refined minimax regret rule is less conservative than $\hat{\delta}_{\bm{w}_{\text{minimax}}}$. From (\ref{hat_Tau}), for any $\bm{w}$, we obtain
\[
R(\bm{\tau},\hat{\delta}_{\bm{w}}) \ \leq \ \max_{\bm{\tau} \in \hat{\mathcal{T}}(\alpha)} R(\bm{\tau},\hat{\delta}_{\bm{w}}) \ \ \text{with probability $1-\alpha$.}
\]
Hence, $\hat{\bm{w}}_{\text{minimax}}(\alpha)$ minimizes the worst-case regret over $\hat{\mathcal{T}}(\alpha)$, which is a valid upper bound on the true regret with probability $1-\alpha$.

Since $\hat{\mathcal{T}}(\alpha)$ may not satisfy Assumption 1, we cannot derive $\hat{\bm{w}}_{\text{minimax}}(\alpha)$ using Theorem \ref{thm:minimax}. However, even if $\hat{\mathcal{T}}(\alpha)$ does not satisfy Assumption 1, we can calculate $\hat{\bm{w}}_{\text{minimax}}(\alpha)$ using (\ref{maximum_regret_without_Assumption_1}) in Remark \ref{rem:without_A1}. When the parameter space is $\mathcal{T}_{\text{meta}}$, we can easily calculate the refined minimax regret rule using $\hat{\mathcal{T}}_{\text{meta}}(\alpha)$. If $(\hat{\beta}-\beta)'S(\hat{\beta})^{-1}(\hat{\beta}-\beta) \leq \chi(\alpha,d_x)$ implies $\beta \in \mathcal{B}$, then we have
\[
\hat{\mathcal{T}}_{\text{meta}}(\alpha) \ \equiv \ \ \left\{ \bm{\tau}: \text{$\tau_k = \beta_0 + x_k'\beta$, $\beta_0 \in \mathbb{R}$, and $(\hat{\beta}-\beta)'S(\hat{\beta})^{-1}(\hat{\beta}-\beta) \leq \chi(\alpha,d_x)$ } \right\}.
\]
Then, for $t >0$, we obtain
\begin{eqnarray}
\tilde{b}(t,\bm{w}) &=& \max_{\bm{\tau} \in \hat{\mathcal{T}}_{\mathrm{meta}}(\alpha), \tau_0 = t} \left\{\sum_{k=1}^K w_k (\tau_0 -\tau_k) \right\} \nonumber \\
&=& \max_{(\hat{\beta}-\beta)'S(\hat{\beta})^{-1}(\hat{\beta}-\beta) \leq \chi(\alpha,d_x)} \left\{\sum_{k=1}^K w_k (x_0 - x_k)'\beta \right\}. \nonumber 
\end{eqnarray}
Let $\beta^*$ be a maximizer of the above problem. Using the method of Lagrange multipliers, we find that $\beta^*$ satisfies
\begin{equation}
\arraycolsep=0pt
\begin{array}{l}
X_0(\bm{w}) - 2 \lambda S(\hat{\beta})^{-1}(\beta^* - \hat{\beta}) 
=0, \\
(\hat{\beta}-\beta^*)'S(\hat{\beta})^{-1}(\hat{\beta}-\beta^*) - \chi(\alpha,d_x) 
=0, 
\end{array} \label{FOC}
\end{equation}
where $X_0(\bm{w}) \equiv \sum_{k=1}^K w_k (x_0 - x_k)$. Then, equations (\ref{FOC}) imply that
\begin{equation}
\beta^* \ = \ \hat{\beta} + \frac{\sqrt{\chi(\alpha,d_x)}}{\sqrt{X_0(\bm{w})'S(\hat{\beta})X_0(\bm{w})}} S(\hat{\beta})X_0(\bm{w}). \nonumber
\end{equation}
For $t < 0$, we can obtain similar results. Hence, $\tilde{b}(t,\bm{w})$ has the following closed form expression:
\begin{equation}
\tilde{b}(t,\bm{w}) \ = \ \begin{cases}
  b^{+}_{\text{meta}}(\bm{w}) \ \equiv \ -X_0(\bm{w})'\hat{\beta} + \sqrt{ \chi(\alpha,d_x) \cdot X_0(\bm{w})'S(\hat{\beta})X_0(\bm{w})} & \text{if $t \geq 0$} \\
  b^{-}_{\text{meta}}(\bm{w}) \ \equiv \ X_0(\bm{w})'\hat{\beta} + \sqrt{ \chi(\alpha,d_x) \cdot X_0(\bm{w})'S(\hat{\beta})X_0(\bm{w})} & \text{if $t < 0$}
 \end{cases}. \nonumber
\end{equation}
This result makes the computation of $\hat{\bm{w}}_{\text{minimax}}(\alpha)$ easier. In this case, because $\mathcal{S} = \mathbb{R}$, we obtain
\begin{equation}
\hat{\bm{w}}_{\text{minimax}}(\alpha) \ \in \ \max \left\{ s(\bm{w}) \cdot \eta \left( \frac{b^{+}_{\mathrm{meta}}(\bm{w})}{s(\bm{w})} \right), s(\bm{w}) \cdot \eta \left( \frac{b^{-}_{\mathrm{meta}}(\bm{w})}{s(\bm{w})} \right) \right\}. \nonumber
\end{equation}
Hence, in this case, it is not difficult to compute $\hat{\bm{w}}_{\text{minimax}}(\alpha)$.

As shown above, when $\sigma_k \rightarrow 0$ for all $k$, $\hat{\delta}_{\bm{w}_{\mathrm{minimax}}}$ does not converge to $\delta_{IS}^*$. However, we can show that the refined minimax regret rule using $\hat{\mathcal{T}}_{\mathrm{HR}}(\alpha)$ converges to $\delta_{IS}^*$. When $\sigma_k \rightarrow 0$ for all $k$, the hyper-rectangle confidence region $\hat{\mathcal{T}}_{\mathrm{HR}}(\alpha)$ projected for $\tau_0$ converges to the identified set $IS_0$ defined in (\ref{identified_set}). Hence, in this case, if $\left\{ \sum_{k=1}^K w_k \hat{\tau}_k: \sum_{k=1}^K w_k = 1 \right\}$ includes both positive and negative values, that is, $\hat{\tau}_k$ is not the same for all $k$, then the refined minimax regret rule $\hat{\delta}_{\hat{\bm{w}}_{\mathrm{minimax}}(\alpha)}$ converges to $\delta_{IS}^*$.

\fi

In contrast to the partially identified setting, we can show that the limiting version of $\hat{\delta}_{\bm{w}_{\text{minimax}}}$ coincides with $\delta^*_{IS}$ when there exists an unbiased estimator of $\tau_0$. We assume that there exists $\bm{w}^{\ast}$ such that $b(\bm{w}^{\ast}) = 0$. Then, from the above arguments, the limiting version of $\hat{\delta}_{\bm{w}_{\text{minimax}}}$ becomes $\hat{\delta}_{\bm{w}^{\ast}}$ since $\bm{w}^{\ast}$ minimizes the maximum bias. Furthermore, when $\sigma_k \to 0$ for all $k$, it follows from $b(\bm{w}^{\ast}) = 0$ that $IS_0$ must be a singleton $\{ \tau_0 \}$ and $\tau_0 = \sum_{k=1}^K w_k^{\ast} \hat{\tau}_k$ holds. Hence, we obtain $\delta_{IS}^{\ast}(\mathbf{D}) = 1\{ \sum_{k=1}^K w_k^{\ast} \hat{\tau}_k \geq 0 \}$. Therefore, if we consider the point-identified setting such as $\mathcal{T}_{\mathrm{meta}}$, then the minimax regret rule $\hat{\delta}_{\bm{w}_{\text{minimax}}}$ converges to $\delta^{\ast}_{IS}$ as $\sigma_k \to 0$ for all $k$.

%% Remark 5 %%
\begin{Remark} \label{rem:IS_randomized}
We can show that some minimax regret rule does not converges to $\delta^{\ast}_{IS}$ even if we allow randomized rules. For example, consider $\mathcal{T} = \{ (\tau_0,\tau_1) \in \mathbb{R}^2 : |\tau_0 - \tau_1| \leq 1 \}$. Then, the maximum welfare regret with the sample analogue of the identified set plugged in is
\begin{eqnarray*}
\max_{\tau_0 \in IS_0} r(\tau_0,\delta) \ = \ \max\{ -(\hat{\tau}_1-1)\cdot \delta, (\hat{\tau}_1+1)\cdot (1-\delta) \},
\end{eqnarray*}
where $IS_0 = [\hat{\tau}_1 - 1, \hat{\tau}_1 + 1]$. Hence, the maximum welfare regret is minimized by the following treatment rule:
\[
\delta^{\ast}_{IS}(\hat{\tau}_1) \ = \ \begin{cases}
    1 & \text{if $\hat{\tau}_1 > 1$} \\
    \frac{\hat{\tau}_1 + 1}{2} & \text{if $-1 \leq \hat{\tau}_1 \leq 1$} \\
    0 & \text{if $\hat{\tau}_1 < -1$}
\end{cases}.
\]
On the other hand, \cite{stoye2012minimax} shows that when $\mathcal{T} = \{ (\tau_0,\tau_1) \in \mathbb{R}^2 : |\tau_0 - \tau_1| \leq 1 \}$, the minimax regret treatment rule is given by
\[
\hat{\delta}(\hat{\tau}_1) \ = \ \begin{cases}
    \bm{1}\{\hat{\tau}_1 \geq 0\} & \text{if $\sigma_1 \geq 2 \phi(0)$} \\
    \Phi\left( \hat{\tau}_1 / \sqrt{4 \phi(0)^2 - \sigma_1^2} \right) & \text{if $\sigma_1 < 2 \phi(0)$}
\end{cases}.
\]
\cite{yata2021optimal} also show the same result in more general settings. Therefore, as $\sigma_1 \to 0$, the minimax regret treatment rule converges to $\Phi\left( \hat{\tau}_1 / \sqrt{4 \phi(0)^2} \right)$. These results imply that the minimax regret treatment rule does not converge to $\delta^{\ast}_{IS}$ in this setting.\footnote{\label{footnote:consistency} \cite{montiel2023decision} shows that the set of minimax regret optimal rules contains a piecewise linear rule and it converges to $\delta^{\ast}_{IS}$ as the variance of the signal approaches zero. Hence, there is a sequence of minimax regret treatment rules that converges to $\delta^{\ast}_{IS}$, but there is also a sequence of minimax regret treatment rules that does not converge to $\delta^{\ast}_{IS}$.}
\end{Remark}

\subsection{The choice of the Lipschitz constant $C$}\label{subsec:choice_C}

In this section, we describe the two methods used in Section \ref{sec:empirical} to select $C$; leave-one-out cross-validation and marginal likelihood maximization. We begin with leave-one-out cross-validation. For each $k$, we treat study $k$ as the target population and let $d_{k, C} \in \{0,1\}$ denote the decision based on the data excluding study $k$ under the parameter space $\mathcal{T}_C$. Letting $C_1, \ldots, C_M$ be candidate values for the Lipschitz constant $C$, leave-one-out cross-validation selects $C_m$ that maximizes
\[
\sum_{k=1}^K \left\{ \hat{\tau}_k \cdot d_{k,C_m} + \hat{\mu}_{k}\right\},
\]
where $\hat{\mu}_{k}$ denotes the average outcome in study $k$ in the absence of the policy. Since $\hat{\mu}_k$ does not depend on $C_m$, this criterion is equivalent to maximizing
$$
\sum_{k=1}^K \hat{\tau}_k \cdot d_{k,C_m},
$$
and therefore can be computed without using $\hat{\mu}_k$. In Section \ref{sec:empirical}, we evaluate this criterion for $C = 0.005, \, 0.010, \ldots, 0.095, \, 0.010$. Among these candidate values, the smallest $C$ that maximizes the criterion is $0.025$. Moreover, when $C=0.025$, the decision correctly matches the sign of $\hat{\tau}_k$ in 11 out of the 14 studies; that is, $d_{k,C} = 1\{\hat{\tau}_k \geq 0\}$ holds for 11 out of the 14 studies.

Next, we describe marginal likelihood maximization. Let $\pi_{a,C}$ denote a prior distribution for $\bm{\tau}$ with support $\mathcal{T}_C$, where $a$ is a hyperparameter of the prior distribution. Then the marginal likelihood of $\mathbf{D}$ is given by
\[
p(\mathbf{D};a,C) \ \equiv \ \int \left\{ \prod_{k=1}^K \phi_{\sigma_k} \left( \hat{\tau}_k-\tau_k\right) \right\} d \pi_{a,C}(\bm{\tau}),
\]
where $\phi_{\sigma}(\cdot)$ is the density function of $N(0,\sigma^2)$. Marginal likelihood maximization selects $C_m$ that maximizes $\max_{a } p(\mathbf{D};a,C_m)$. In Section \ref{sec:empirical}, we use the following prior distribution of $\bm{\tau}$:
\[
\tau_0 \sim N(0,a^2) \ \ \text{and} \ \ (\xi_1, \ldots, \xi_K)|\tau_0 \sim U(\mathcal{T}_{-0,C}),
\]
where $\xi_k \equiv \tau_k - \tau_0$, $\mathcal{T}_{-0,C} \equiv \{ (\tau_1, \ldots, \tau_K) : (0, \tau_1, \ldots, \tau_K) \in \mathcal{T}_C \}$, and $U(\mathcal{T}_{-0,C})$ denotes the uniform distribution on $\mathcal{T}_{-0,C}$. For $C = 0.005, \, 0.010, \ldots, 0.095, \, 0.010$, we evaluate $\max_{a } p(\mathbf{D};a,C)$ and find that marginal likelihood maximization selects $C=0.035$,  $0.045$, $0.040$ for the target populations of Japan, the UK, and Peru, respectively.

\renewcommand{\theequation}{D.\arabic{equation}}
\setcounter{equation}{0}
%%%%%%%%%%%%%%%%%%%%%%%%%%%%%%%%%%%%%%%%%%%
\section{Treatments for COVID-19}\label{sec:covid-19}

We consider a drug approval decision for a COVID-19 treatment using the meta-database of randomized clinical trials provided by \citet{juul2020interventions}. There is an urgent gloabl need for evidence-based treatment of COVID-19. To search for effective treatments, numerous randomized clinical trials have been conducted in different countries across different demographic groups. At the time of writing, evidence as to the efficacy of various proposed treatments is mixed with limited precision of trial estimates.

We focus on Remdesivir, an antiviral medication known to be effective against viruses in the coronavirus family, such as Middle East Respiratory Syndrome (MERS) and Severe Acute Respiratory Syndrome (SARS). The effectiveness of Remdesivir in fighting Covid-19, however, remains undetermined and is a source of much controversy due to the conflicting nature of the existing evidence; the U.S. Food and Drug Administration approved emergency use of Remdesivir for Covid-19 patients, while the World Health Organization recommends against its use.

The meta-database of \cite{juul2020interventions} collates data from 33 RCT studies enrolling a total of 13,312 participants. Each study provides an estimate of the treatment effect compared with standard care or a placebo. Because we focus on the effects of Remdesivir on mortality rate, we use 6 estimates from 4 RCTs, \cite{beigel2020remdesivir}, \cite{who2021repurposed}, \cite{spinner2020effect}, and \cite{wang2020remdesivir}.

To form a vector of study characteristics, we include two covariates summarizing the average patients' characteristics in each study. They are the (standardized) mean (or median) age and the (standardized) proportion of female patients. Table \ref{tab:covid_estimates} lists the estimates, standard errors, and study characteristics.\footnote{\label{footnote:covid_female_proportion}Studies 2a--2c report the subgroup treatment effect estimates for three age subgroups ($<$50, 50--69, $\leq$70). Because we do not have detailed information about the age of these subgroups, we suppose that the mean age of these subgroups are 45, 60, and 75, respectively. \Copy{covid_study2}{In addition, we also assume that the female proportion is the same within these subgroups.}}

Similar to the previous section, we derive the minimax regret and minimax MSE rules with the following parameter space:
$$
\mathcal{T}_C \ \equiv \ \left\{ \bm{\tau} : |\tau_k - \tau_l| \leq C \|x_k-x_l\| \ \text{for $k,l = 0, 1, \cdots, K$}  \right\},
$$
with a prespecified Lipschitz constant $C \geq 0$. Because $K$ is small, leave-one-out cross-validation does not seem sensible. Hence, in this application, we set $C=0.01$ based on the WLS estimates. We consider hypothetical populations of interest whose characteristics range over 40--80 in terms of average age and 0.34--0.41 for the fraction of female patients.

%% Table 7 %%
\begin{table}[H]
\caption{Estimates, standard errors, and study characteristics.}
\begin{center}
\begin{tabular}{c c c c c c}
  \hline
  No. & Paper & Estimates & S.E. & Age & Female (\%) \\ \hline \hline
  1  & \cite{beigel2020remdesivir} &  0.038 & 0.021 & 58.9 & 0.356 \\
  2a & \cite{who2021repurposed} & -0.002 & 0.011 & 45.0 & 0.371 \\
  2b & \cite{who2021repurposed} &  0.005 & 0.013 & 60.0 & 0.371 \\
  2c & \cite{who2021repurposed} &  0.005 & 0.024 & 75.0 & 0.371 \\
  3  & \cite{spinner2020effect} &  0.030 & 0.021 & 57.0 & 0.389 \\
  4  & \cite{wang2020remdesivir} & -0.011 & 0.047 & 65.3 & 0.407 \\
  \hline
\end{tabular}
\end{center}\label{tab:covid_estimates}
\end{table}

Figure \ref{fig:covid} plots $\hat{\tau}_0(\bm{w}_{\text{minimax}})$ at each specified grid point in the space of characteristics of the target populations. In this figure, dark red means the value of $\hat{\tau}_0(\bm{w}_{\text{minimax}})$ is large and white means the value is small. Figure \ref{fig:covid} also plots covariate values of the meta-sample and the size of the plotted circle is proportional to the precision of the estimates, i.e.,  a smaller $\hat{\sigma}_k$ corresponds to a larger circle. This result implies that the minimax regret criterion recommends to treat Remdesivir for the age group greater than 50 years-old. In contrast, for some local populations below age 50 years-old, the minimax regret criterion recommends not to treat them with Remdesivir.

%%% Figure 7 %%%
\begin{figure}[H]
\centering
\includegraphics[width=9.5cm]{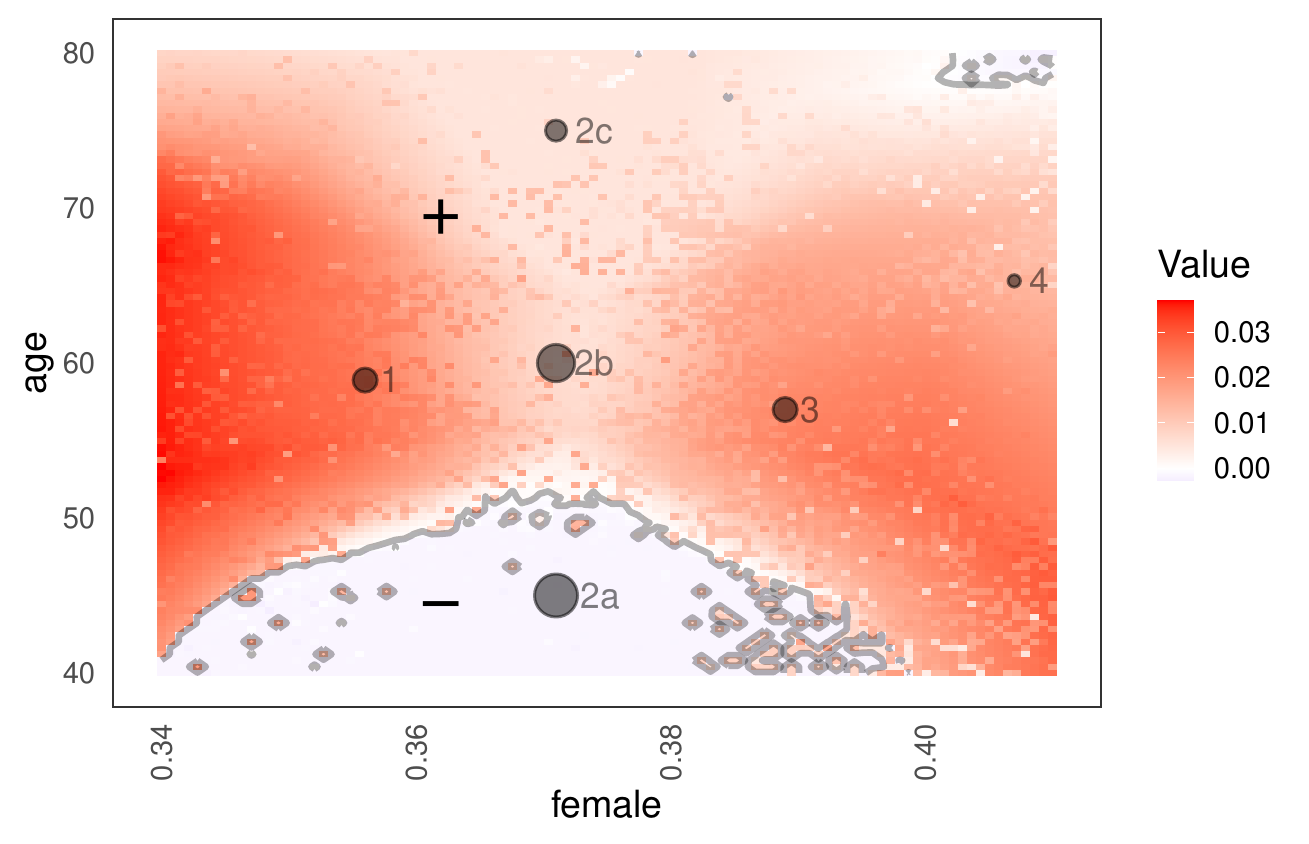}
\caption{{\footnotesize Heatmap of $\hat{\tau}_0(\bm{w}_{\text{minimax}})$ based on the values of $\hat{\tau}_0(\bm{w}_{\text{minimax}})$ computed at each $x_0$. Dark red areas correspond to regions of $x_0$ such that $\hat{\tau}_0(\bm{w}_{\text{minimax}})$ is positive and large. The white area corresponds to the region of $x_0$ such that $\hat{\tau}_0(\bm{w}_{\text{minimax}})$ is negative or near zero. The grey line is the boundary that separates the regions of positive and negative $\hat{\tau}_0(\bm{w}_{\text{minimax}})$. We also plot the covariate values of the meta-sample with the sizes of the plotted circles being proportional to the precision of the estimates.}}\label{fig:covid}
\end{figure}

\clearpage

\end{document}